\DeclareMathOperator*{\argmin}{arg\,min}
\newcommand{\weight}{\mathrm{weight}}
\newcommand{\Geo}{\mathrm{Geo}}
\newcommand{\Uniform}{\mathrm{Uniform}}
\newcommand{\Ber}{\mathrm{Ber}}
\newcommand{\Poi}{\mathrm{Poi}}
\newcommand{\Categorical}{\mathrm{Categorical}}
\newcommand{\pj}{p^{(j)}}
\newcommand{\Dn}{\mathcal{D}^{(n)}}
\newcommand{\D}{\mathcal{D}}
\newcommand{\icx}{\preceq_{\mathrm{icx}}}
\newcommand{\cx}{\preceq_{\mathrm{cx}}}
\newcommand{\BS}{\boldsymbol{\mathcal{S}}}
\newcommand{\W}{\mathcal{W}}
\newcommand{\Rpos}{\mathbb{R}_{\geq 0}}
\newcommand{\Zpos}{\mathbb{Z}_{\geq 0}}
\def\colorful{1}
\newcommand{\pparagraph}[1]{\bigskip \noindent {\bf {#1}}}
\title{Multiway Online Correlated Selection}
\date{\today}
\author{Guy Blanc \vspace{8pt}\\{\small {\sl Stanford University}}
\and \hspace{10pt} Moses Charikar \vspace{8pt} \\ \hspace{3pt} {\small {\sl Stanford University}}}  
\date{\today}
\begin{document}

\maketitle

\begin{abstract} 
We give a $0.5368$-competitive algorithm for edge-weighted online bipartite matching. Prior to our work, the best competitive ratio was $0.5086$ due to Fahrbach, Huang, Tao, and Zadimoghaddam (FOCS 2020). They achieved their breakthrough result by developing a subroutine called \emph{online correlated selection} (OCS) which takes as input a sequence of pairs and selects one item from each pair. Importantly, the selections the OCS makes are negatively correlated.

We achieve our result by defining \emph{multiway} OCSes which receive arbitrarily many elements at each step, rather than just two. In addition to better competitive ratios, our formulation allows for a simpler reduction from edge-weighted online bipartite matching to OCSes. 
While Fahrbach et al. used a factor-revealing linear program to optimize the competitive ratio, our analysis directly connects the competitive ratio to the parameters of the multiway OCS. Finally, we show that the formulation of Farhbach et al. can achieve a competitive ratio of at most $0.5239$, confirming that multiway OCSes are strictly more powerful.

\end{abstract} 

\thispagestyle{empty}
\newpage

\setcounter{page}{1}
\section{Introduction}

The matching problem has played a pivotal role in the development of algorithmic techniques in combinatorial optimization.
The online version of the problem was one of the first graph optimization problems studied in the online literature.
Karp, Vazirani and Vazirani \cite{KVV90} initiated a study of the problem and gave a $1-1/e$ competitive algorithm for unweighted graphs.
Here, the vertices of one side (the offline side) are known to the algorithm in advance.
The vertices of the other side appear in an online fashion.
As they appear, they must be matched irrevocably to an offline vertex (or discarded).
Since the work of \cite{KVV90}, online bipartite matching received considerable attention and their ideas were applied to a variety of other online assignment problems.

More than a decade later, the emergence of sponsored search auctions gave renewed impetus to the study of online matching and its variants.
There is a natural correspondence here -- the offline side corresponds to advertisers and the online side corresponds to impressions (i.e. opportunities to place ads).
Aggarwal et al \cite{AGKM11} generalized the \cite{KVV90} result to the vertex-weighted case, obtaining a $1-1/e$ competitive algorithm.

The edge-weighted case had remained a tantalizing open problem until very recently.
Here, it is not hard to see that no non-trivial competitive ratio is possible without giving the algorithm additional flexibility.
Researchers have focused on the {\em free disposal} setting where offline vertices can be assigned multiple online vertices but the contribution of an offline vertex to the objective function is the largest of the edge weights assigned to that vertex.
(i.e. previously matched edges can be disposed for free to make room for higher weight edges).
In the display advertising setting, this corresponds to assigning multiple impressions to an advertiser.

For the edge weighted problem, a natural greedy algorithm achieves competitive ratio 0.5.
This was the best known until the recent breakthrough by Fahrbach, Huang, Tao and Zadimoghaddam \cite{FHTZ20} who gave a 0.5086 competitive algorithm.
One of their main technical contributions was a novel algorithmic ingredient called {\em online correlated selection} (OCS).
This is an online subroutine that takes a sequence of pairs of vertices as input and selects one vertex from each pair such that decisions across pairs are negatively correlated.
A suitable quantification of this negative correlation was plugged into a factor revealing linear program to obtain a bound on the competitive ratio.

In this work we give generalize the OCS definition of Fahrbach et al \cite{FHTZ20} to multiway OCSes which take in multiple vertices at each step.
Our analysis directly connects the parameters of our multiway OCS to the eventual competitive ratio without the need to go through a factor revealing linear program.
We give a general framework for obtaining multiway OCSes and instantiate this framework
to obtain an improved competitive ratio of $0.5368$ for online weighted matching.
Finally, we show that the OCS definition of Fahrbach et al \cite{FHTZ20} cannot lead to a competitive ratio better than 0.5239 demonstrating that our multiway OCSes are strictly more powerful.

\subsection{Related Work}

Online weighted matching has been extensively studied in the literature. See the excellent survey by Mehta \cite{Meh13} for a comprehensive overview.
Two popular settings where competitive ratios better than 0.5 have been obtained are\\
\noindent{\bf (1)} the case where offline vertices have large capacity (i.e a large number of online vertices can be assigned to an offline vertex). 
This includes work of Kalyanasundaram and Pruhs \cite{KP00}, Feldman et al \cite{FKM09} on Display Ads and work on the Adwords problem \cite{MSVV05,BJN07}.\\
\noindent{\bf (2)} the case where stochastic information is available about the online vertices.
This includes work in the setting where online vertices are drawn from a known or unknown distribution \cite{FMMM09,KMT11,DJSW11,HMZ11,MGS12,MP12,JL13} or the setting that they arrive in a random order \cite{GM08,DH09,FHK10,MY11,MGZ12,MWZ15,HTWZ19}.\\
In addition to these, several recent advances have been made in more general settings including non-bipartite graphs and different arrival models \cite{HKT18,GKS19,GKM19,HPT19}.

\subsection{Problem definition: Edge-weighted online bipartite matching}

Consider some weighted bipartite graph $G=(L, R, w)$, where $L$ and $R$ are the left and right vertices respectively. If there is an edge between $i \in L$ and $j \in R$, then $w_{ij} > 0$ is the weight of that edge. Otherwise, $w_{ij} = 0$.

At the start, the algorithm is given the entire set of left vertices, $L$, but no information about $R$ or $w$. The vertices from $R$ appear in an online fashion, one by one. Hence, we refer to $L$ as the \emph{offline} vertices $R$ as the \emph{online} vertices. When an online vertex $j \in R$ appears, the entries $w_{ij}$ for each $i \in L$ are revealed to the algorithm. The algorithm must irrevocably decide which offline vertex to match $j$ to before the next online vertex appears.

The objective is to maximize the total weight of the matching. We operate in the \emph{free disposal model}. This means a single offline vertex $i$ may be matched to multiple online vertices, but only the weight of its heaviest edge is counted towards the objective.  We say a randomized algorithm is ``$\Gamma$-competitive" or ``has competitive ratio of $\Gamma$" if the expected objective of the algorithm's output is within a multiplicative factor of $\Gamma$ of the optimal objective with hindsight.

\section{Online correlated selection: Motivation and application}

\label{sec:OCS-motivation}

We withhold a full preliminaries section until \Cref{sec:preliminaries}, mentioning for now that we use {\bf boldface} (e.g.~$\bx\sim\D$) to denote random variables.

It is well-known that even for \emph{unweighted} online bipartite matching, deterministic algorithms cannot achieve a competitive ratio better than $0.5$. In this section, we'll investigate what form of randomness is needed to push the competitive ratio beyond $0.5$.

At the $j^{\text{th}}$ time step, the portion of the graph corresponding to $j \in R$ is revealed to the algorithm and it must irrevocably match it to some $\bi_j \in L$. In order to achieve a competitive ratio better than $0.5$, the choices $\bi_1, \bi_2, \ldots$ must use randomness. What if those choices are independent?

\begin{restatable}[]{lemma}{factindependent}
    \label{fact:independent-not-enough}
    Even for \emph{unweighted} online bipartite matching, no algorithm making decisions that are independent across time steps can achieve a competitive ratio better than $0.5$.
\end{restatable}
We include a proof of \Cref{fact:independent-not-enough} in \Cref{sec:proofs-OCS-motivation}. In order to break the $0.5$-competitive barrier, not only does the algorithm need to carefully select a marginal distribution for each $\bi_j$, it also needs some ``smarts" in deciding how to realize $\bi_j$ --- they cannot simply be independent. The key insight underlying Online Correlated Selection (OCS) is that these tasks can be tackled individually.

We separate the online algorithm into two components. The first component is fully deterministic and has access to the input instance. At the $j^{\text{th}}$ time step, it chooses a marginal distribution for $\bi_j$ as a deterministic function of the portion of the graph revealed so far. In particular, that marginal distribution does not depend on the realizations of $\bi_1, \ldots, \bi_{j-1}$.

The second component, the OCS, is given the marginal distribution $\bi_j$, but \emph{not} given any other information about the input instance --- It cannot ``directly see" the graph. At the $j^{\text{th}}$ time step, it chooses a realization for $\bi_j$ consistent with the desired marginal as a function of its own randomness. In particular, the realization for $\bi_j$ can depend on the previous realizations of $\bi_1, \ldots, \bi_{j-1}$. This separation of responsibilities is summarized in \Cref{fig:separation}.

\begin{figure}
    \begin{center}
         \begin{tabular}{ |  c || p{55mm} | p{55mm} | }
         \hline
          At time step $j \ldots$ & Deterministic Component & Randomized Component (OCS) \\ 
         \hline
         \hline
         Information revealed &The edge weights, $w_{ij}$ for each $i \in L$ &  The marginal distribution for $\bi_j$ (from the deterministic component) and the values for $\bi_j'$ for each $j' < j$  \\
         \hline
         Information hidden & The randomized selections of the OCS  & The input instance  \\
         \hline
         Output & A marginal distribution for $\bi_j$ & A realization for $\bi_j$ \\
      \hline
    \end{tabular}
    \end{center}
    \caption{We separate a matching algorithm into a deterministic component and randomized component, each with different responsibilities and access to information.}
    \label{fig:separation}
\end{figure}

In the following definition, a \emph{probability vector} is a nonnegative vector whose elements sum to $1$. It represents a marginal distribution over elements in $L$.

\begin{definition}[Continuous OCS]
    \label{def:continuous-OCS}
    Consider a set of ground elements, $L$. A \emph{continuous OCS} is an online algorithm that at each time step $j = 1,2,\ldots$, receives a probability vector $\pj$ with $|L|$ elements and irrevocably chooses a winner in $\bi_j \in L$ among those with $(\pj)_i > 0$.
\end{definition}

By \Cref{fact:independent-not-enough}, in order to achieve a competitive ratio of greater than $0.5$, the OCS cannot simply choose winners independently at each round according to the desired marginal distribution. In the next subsection, we examine what properties an OCS needs to have to allow for better competitive ratios.

\subsection{What makes a good OCS?}

The algorithm's goal is to maximize the expected weight of the matching over the randomness of the OCS. Having many online vertices $j \in R$ matched to the same offline vertex $i \in L$ can be wasteful. This is because only the heaviest match counts for the objective. To counteract this, the choices the OCS makes will be \emph{negatively correlated}. In particular we wish for, $j \neq j' \in R$, that:
\begin{align*}
    \Pr[\bi_j = \bi_{j'} = i] < (\pj)_i \cdot (p^{(j')})_i
\end{align*}
Unfortunately, it is impossible for the OCS to negatively correlated decisions across all time steps by any non-negligible amount. We formalize that impossibility in \Cref{lem:no-neg-always}, proved in \Cref{sec:proofs-OCS-motivation}. To get around that, we negatively correlated time steps that are ``close" temporally. Let $S = \{j_1, j_1 + 1, \ldots, j_2\}$ be a set of \emph{consecutive} time steps. We'll require our OCS to satisfy
\begin{align*}
    \Pr[\bi_j \neq i \text{ for every }j \in S] \leq f\left(\sum_{j \in S} \pj_i \right)
\end{align*}
where $f:\Rpos \to [0,1]$ is a function quantifying the negative correlation of an OCS. The OCS that chooses winners independently at each round according to the marginal distribution satisfies the above equation with $f_{\mathrm{trivial}}(x) \coloneqq e^{-x}$. OCSes with ``good" negative correlation have $f(x) < f_{\mathrm{trivial}}(x)$ and the amount $f$ is smaller than $f_{\mathrm{trivial}}$ directly connects to competitive ratio of the resulting algorithm.

The selections at time steps that are very far apart temporally will essentially be independent, as in the following definition.

\begin{definition}[Quantifying a continuous OCS, informal version of \Cref{def:OCS parameters}]
    \label{def:cont-OCS-parameters-informal}
    For any function $f: \R_{\geq 0} \to [0,1]$, an $f$-continuous OCS is a continuous OCS with the following guarantee. For any element $i \in L$ and $S_1, \ldots, S_k$ each sets of consecutive time steps, 
    \begin{align*}
        \Pr[\bi_j \neq i \text{ for every }j \in S_1 \cup \ldots \cup S_k] \leq \prod_{\ell = 1}^k f\left(\sum_{j \in S_k} (\pj)_i \right).
    \end{align*}
\end{definition}
We give an algorithm for edge-weighted online bipartite matching using an $f$-OCS as a subroutine and are able to directly connect the competitive ratio to $f$.
\begin{theorem}[Informal version of \Cref{thm:OCS-cont-Gamma}]
    \label{thm:OCS-cont-Gamma-informal}
    For any differentiable and convex $f: \Rpos \to [0,1]$, if there is an $f$-continuous OCS, there is an algorithm for edge-weighted online bipartite matching with competitive ratio of
    \begin{align*}
        \Gamma \coloneqq 1 - \int_{0}^\infty e^{-t} f(t)dt.
    \end{align*}
\end{theorem}
As expected, using $f_{\mathrm{trivial}}(x) \coloneqq e^{-x}$ gives a competitive ratio of $0.5$. Any $f$-OCS for $f$ strictly less than $f_{\mathrm{trivial}}$ gives a competitive ratio that breaks the $0.5$ barrier. 

The proof of \Cref{thm:OCS-cont-Gamma-informal} uses a similar online primal-dual formulation as that of \cite{FHTZ20}. We find that using our definition for an OCS in place of theirs substantially simplifies that analysis. In particular, we derive a simple analytic formula for the competitive ratio whereas they use a factor-revealing LP. The source of that simplification is that their algorithm needs to decide which of three round types, ``deterministic," ``randomized," or ``unmatched," to execute at each time step. Our continuous formulation implicitly interpolates between all three round types.

\subsection{Discrete OCSes}
We define discrete OCSes for two reasons.
\begin{enumerate}
    \item It allows us to relate our results to those of \cite{FHTZ20} who used (what we call) a discrete OCS.
    \item Later, to construct a continuous OCS, we will first construct a discrete OCS and use a reduction to turn it into a continuous OCS.
\end{enumerate}

The following definition of an $m$-discrete OCS is equivalent to a continuous OCS where each element in the probability vector is forced to be an integer multiple of $\frac{1}{m}$. Therefore, continuous OCSes are generalizations of discrete OCSes.

\begin{definition}[Discrete OCS]
    \label{def:discrete OCS}
    Consider a set of ground elements, $L$. An \emph{m-discrete OCS} is an online algorithm that at each time step $j = 1,2, \ldots$, is given a size-$m$ multiset $A_j$ of elements in $L$ and irrevocably selects a single winner $\bi_j \in A_j$.
\end{definition}
We quantify the quality of a discrete OCS similarly as for continuous OCSes.

\begin{definition}[Quantifying a discrete OCS]
    \label{def:discrete-parameters}
    For any $m \in \Z_{\geq 2}$ and $F: \Zpos \to [0,1]$, an $(F, m)$-discrete OCS is an $m$-discrete OCS with the following guarantee. For any element $i \in L$ and $S_1, \ldots, S_k$ each sets of consecutive time steps,
    \begin{align*}
        \Pr[\bi_j \neq i \text{ for every }j \in S_1 \cup \ldots \cup S_k] \leq \prod_{\ell = 1}^k F\left(\sum_{j \in S_k} \mathrm{count}_{i}(A_j) \right).
    \end{align*}
    where $\mathrm{count}_{i}(A_j)$ is the number of times $i$ appears in the multiset $A_j$.
\end{definition}

For example, the OCS that independently and uniformly $\bi_j$ from  $A_j$ at each round achieves $F_{\mathrm{trivial}}(n) = (1 - 1/m)^n$. Our goal is to construct OCSes with $F < F_{\mathrm{trivial}}$, which directly leads to competitive ratios better than $\frac{1}{2}$.

\begin{theorem}[Informal version of \Cref{thm:discrete-competitive-ratio-formal}]
    \label{thm:discrete-competitive-ratio-informal}
    For any $m \in \N$ and convex $F: \Z_{\geq 0} \to [0,1]$, if is there an $(F,m)$-discrete OCS, there is an algorithm for edge-weighted online bipartite matching with competitive ratio of
    \begin{align*}
        \Gamma \coloneqq 1 - \sum_{n = 0}^\infty \frac{m^n}{(m+1)^{n+1}} \cdot F(n).
    \end{align*}
\end{theorem}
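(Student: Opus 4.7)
The plan is to design an online matching algorithm that uses the $(F,m)$-discrete OCS as a black-box subroutine and analyze it via a randomized primal--dual argument in the free-disposal setting. Following the Fahrbach et al.\ template adapted to $m$-way candidate sets, maintain a water level $y_i \in [0,1]$ for each offline vertex $i$, initialized to $0$. When online vertex $j$ arrives, form a size-$m$ multiset $A_j$ by taking the offline vertices with the largest marginal contribution (for instance, top $m$ by $w_{ij}(1-\phi(y_i))$ for a suitable decreasing $\phi$, with padding when fewer than $m$ viable candidates exist). Feed $A_j$ to the OCS, match $j$ to the returned winner $\pi(j)$ under free disposal, and advance $y_{\pi(j)}$ by $1/m$. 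Additionally, independently draw an auxiliary threshold $\theta_i \sim \Uniform[0,1]$ for each offline $i$, used only in the analysis.

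Use the standard bipartite matching LP with duals $\alpha_j, \beta_i \geq 0$. Split each matched edge's gained weight between $\alpha_j$ and $\beta_{\pi(j)}$ according to $y_{\pi(j)}$ at match time, so that $\sum_j \alpha_j + \sum_i \beta_i$ equals the algorithm's total gain. By weak duality, it then suffices to prove the per-edge inequality $\mathbb{E}[\alpha_j + \beta_i] \geq \Gamma \cdot w_{ij}$ for every edge $(i,j)$. The informative case is when $i$ has appeared as a candidate in many earlier rounds but never won. Let $T_i^{<j}$ be this set of appearances. Partitioning $T_i^{<j}$ into consecutive intervals $S_1,\dots,S_k$ aligned with the $1/m$-segments of $y_i$'s trajectory and invoking the OCS bound gives $\Pr[i \text{ never wins in } T_i^{<j}] \leq \prod_\ell F(|S_\ell|)$. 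Integrating over $\theta_i$ reduces the per-threshold contribution to $1 - F(N)$, where $N$ is the number of times $i$ appears before $y_i$ crosses $\theta_i$; the $1/m$ increment together with the uniform $\theta_i$ makes $N \sim \Geo(1/(m+1))$, so averaging yields $\mathbb{E}[\alpha_j+\beta_i] \geq \bigl(1 - \mathbb{E}_N[F(N)]\bigr)\, w_{ij} = \Gamma \cdot w_{ij}$.

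The chief technical challenge is bridging the OCS's product-form bound $\prod_\ell F(|S_\ell|)$---which holds for \emph{any} partition into consecutive intervals---to the single-variable geometric expectation $\mathbb{E}_N[F(N)]$ with $N \sim \Geo(1/(m+1))$. Convexity of $F$ is the key tool here: it allows us either to dominate arbitrary partition products by a bound involving the aggregate count, or equivalently to apply a Jensen-type inequality after the $\theta_i$-integration, collapsing the many partition cases to a single $F$-evaluation averaged against $\Geo(1/(m+1))$. A secondary subtlety lies in calibrating the marginal-gain rule, the $1/m$ water-level increment, and the splitting ratio so that the same $\Gamma$ governs both the case where $j$ is matched to $i$ and the case where $j$ is matched elsewhere (so that $i$'s contribution to OPT is charged to $\beta_i$ via $i$'s wins against other online vertices). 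The auxiliary threshold $\theta_i$ is what couples these two cases, since its uniform distribution is precisely what converts the OCS's per-appearance guarantee into the geometric expectation appearing in the theorem.
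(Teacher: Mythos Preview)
Your proposal takes a fundamentally different route from the paper, and as written it has real gaps.

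\textbf{What the paper actually does.} The paper never runs a direct primal--dual on the discrete OCS. Instead it (i) converts the $(F,m)$-discrete OCS into a continuous OCS via Poisson sampling (Lemma~5.1): given a probability vector $p^{(j)}$, draw $m$ i.i.d.\ samples from $\mathrm{Categorical}(p^{(j)})$ to form $A_j$, yielding an $(f,r{=}m)$-continuous OCS with $f(x)=\Ex_{\bk\sim\Poi(mx)}[F(\bk)]$; then (ii) applies the continuous primal--dual (Theorem~5.3) to get $\Gamma = 1-\int_0^\infty e^{-t}f(t)\,dt$; and (iii) computes
\[
\int_0^\infty e^{-t}\,\Ex_{\bk\sim\Poi(mt)}[F(\bk)]\,dt \;=\; \sum_{n\ge 0}\frac{m^n}{(m+1)^{n+1}}\,F(n),
\]
so the geometric weighting falls out of the Gamma integral $\int_0^\infty e^{-(m+1)t}t^k\,dt = k!/(m+1)^{k+1}$. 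The $e^{-t}$ kernel itself comes from the dual potential $a$ solving $a'(x)=f'(x)+a(x)$, not from a uniform threshold.

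\textbf{Where your argument is incomplete.} First, your treatment of edge weights is missing the key structure. The paper maintains $y_i(w)$ for every weight level $w$; when an online vertex with $w_{ij}<w$ arrives, vertex $i$'s run of ``appearances at level $w$'' is broken. Handling these breaks is exactly what the continuous OCS's ``partially ignore small gaps'' property (the $r$ parameter in Definition~5.2) is for, and it is why the formal theorem needs the constraint on $r$. Your single water level $y_i$ and top-$m$ candidate rule do not address this; the analysis you sketch is closer to the vertex-weighted case. Second, the step ``the $1/m$ increment together with the uniform $\theta_i$ makes $N\sim\Geo(1/(m+1))$'' is not justified and, as stated, is internally inconsistent: you advance $y_{\pi(j)}$ (the \emph{winner}) by $1/m$, yet define $N$ as the number of \emph{appearances} of $i$ before $y_i$ crosses $\theta_i$; these two counts differ precisely by the random win pattern you are trying to analyze. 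Third, the passage from the OCS product bound $\prod_\ell F(|S_\ell|)$ to the single-term ``$1-F(N)$'' is asserted via ``convexity allows a Jensen-type collapse,'' but no inequality of that form is stated or proved; in the paper convexity of $F$ is used for the convex-order comparison $\mathrm{count}_i(S)\cx\Poi$, not to merge partitions.
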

We prove (a formal version of) \Cref{thm:discrete-competitive-ratio-informal} by showing how to construct a continuous OCS from a discrete OCS and then applying \Cref{thm:OCS-cont-Gamma-informal}. This is done in \Cref{sec:cont-to-disc-OCS}.

Lastly, we connect our OCS definition to that of Fahrbach et al.

\begin{definition}[$\gamma$-OCS, Definition 2 of \cite{FHTZ20}]
    \label{def:OCS-fahrbach}
    For any $\gamma \in (0,1)$, a $\gamma$-OCS is an $(F,2)$-discrete OCS where 
    \begin{align*}
        F(n) \coloneqq 2^{-n} \cdot (1 - \gamma)^{\max(n-1, 0)}.
    \end{align*}
\end{definition}
As a straightforward application of \Cref{thm:discrete-competitive-ratio-informal}, we derive an explicit formula mapping $\gamma$ to a competitive ratio.

\begin{restatable}[]{theorem}{corfahr}
    \label{cor:Gamma-fahrbach}
    For any $\gamma \in (0,1)$, if there is a $\gamma$-OCS, there is an algorithm for edge-weighted online bipartite matching with a competitive ratio of
    \begin{align}
        \label{eq:Gamma-fahrbach}
        \Gamma = \frac{3 + 2 \gamma}{6 + 3\gamma}.
    \end{align}
\end{restatable}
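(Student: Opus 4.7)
The plan is to derive this as a direct corollary of \Cref{thm:discrete-competitive-ratio-informal}. A $\gamma$-OCS is, by \Cref{def:OCS-farhbach}, exactly an $(F,2)$-discrete OCS for the function
\[
F(n) = 2^{-n} (1-\gamma)^{(n-1)_+},
\]
so the first step is simply to identify this $F$ and set $m=2$.

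Before invoking \Cref{thm:discrete-competitive-ratio-informal} I have to check that this $F$ is convex on $\Z_{\geq 0}$. For $n \geq 1$, successive values form a geometric sequence with ratio $(1-\gamma)/2 \in (0,1)$, which is convex. The only place convexity could fail is the transition across $n=1$, so I would verify the single inequality $F(0) + F(2) \geq 2F(1)$, which reduces to $1 + (1-\gamma)/4 \geq 1$ and is immediate. I expect this to be the only mildly nontrivial step; everything else is routine.

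Next, I plug $F$ and $m = 2$ into the formula from \Cref{thm:discrete-competitive-ratio-informal}:
\[
\Gamma = 1 - \sum_{n=0}^\infty \frac{2^n}{3^{n+1}} \, F(n) = 1 - \frac{1}{3} - \sum_{n=1}^\infty \frac{2^n}{3^{n+1}} \cdot 2^{-n} (1-\gamma)^{n-1}.
\]
The tail sum collapses to a geometric series with ratio $(1-\gamma)/3$:
\[
\sum_{n=1}^\infty \frac{(1-\gamma)^{n-1}}{3^{n+1}} = \frac{1}{9} \cdot \frac{1}{1 - (1-\gamma)/3} = \frac{1}{3(2+\gamma)}.
\]
Combining these and simplifying gives
\[
\Gamma = 1 - \frac{1}{3} - \frac{1}{3(2+\gamma)} = \frac{3+2\gamma}{6+3\gamma},
\]
which is exactly \eqref{eq:Gamma-fahrbach}.

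No step looks like a genuine obstacle: the convexity check is a one-line computation and the rest is summing a geometric series. The only thing worth being careful about is the boundary case $n=0$ where the exponent $(n-1)_+$ kicks in, which is precisely where convexity must be verified.
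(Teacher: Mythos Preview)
Your computation of $\Gamma$ via the geometric series is correct and matches the paper's. However, you invoke \Cref{thm:discrete-competitive-ratio-informal}, which is only an \emph{informal} statement; the theorem actually proved is \Cref{thm:discrete-competitive-ratio-formal}, and it carries several hypotheses beyond convexity of $F$: (i) $F$ must be log-concave, (ii) the discrete third-derivative condition \eqref{eq:discrete-log-concave} must hold for all $k_1,k_2$, and (iii) $m$ must satisfy the bound \eqref{eq:m-bound}. None of these are verified in your proposal. The paper's proof proceeds via \Cref{lem:disc-OCS-competitive-exponential} (specialized to exponential-tail $F$ with $k^\star=1$, $c=(1+\gamma)/2$), explicitly computes $\Delta F(0)$, $\Delta^{(2)}F(0)$, $\Delta^{(3)}F(0)$, checks the log-concave-derivative inequalities, and finally confirms that $m=2$ satisfies the $m$-bound (which reduces to $2\le (1-\Gamma)/(2\Gamma-1)$, true since $\Gamma\le 5/9$).

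These side conditions are not cosmetic: they are what guarantee that the continuous OCS built from the discrete one satisfies the hypotheses of \Cref{thm:OCS-cont-Gamma}, so that the primal-dual argument goes through. Without them you have derived the right numerical value for $\Gamma$ but not established that an algorithm achieving it exists. The missing checks are each individually routine---verify log-concavity of $F$, compute the discrete derivatives at $0$, and plug into the two inequalities---but together they constitute the bulk of the actual proof rather than an afterthought.
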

We experimentally compared the competitive ratio from \Cref{eq:Gamma-fahrbach} to those of the gain-sharing LP used in \cite{FHTZ20}. For every $\gamma \in \{0, 0.01, \ldots 0.46\}$, the competitive ratios are within an additive $10^{-8}$ of one-another. That small difference is likely due to rounding issues. For $\gamma \in \{0.47, 0.48, \ldots, 1\}$, our equation gives a greater competitive ratio. Note that, as we later prove in \Cref{lem:negative}, $\gamma$-OCSes can only exist for $\gamma \leq \frac{1}{3}$ and our competitive ratio appears to match theirs on that range.

Lastly, we remark that ``perfect negative correlation" for \Cref{def:OCS-fahrbach} corresponds to $\gamma = 1$. As noted in \cite{FHTZ20} and can be confirmed by plugging $\gamma = 1$ into \Cref{cor:Gamma-fahrbach}, that corresponds to a competitive ratio of $\Gamma = \frac{5}{9}$. This is far from the best known upper bound of $\Gamma = 1 - \frac{1}{e}$. On the other hand, perfect negative correlation for a continuous OCS corresponds to $f(x) = (1 - x)_+$ which gives a competitive ratio of $\Gamma = 1 - \frac{1}{e}$. While perfect negative correlation is impossible, the fact that it corresponds to a competitive ratio equal to the best known upper bound seems to suggest that continuous OCSes are a step in the right direction towards finding the optimal competitive ratio.

\subsection{Negative results}

We prove the following for Fahrbach et al.'s formulation of an OCS.
\begin{restatable}{lemma}{negative}
    \label{lem:negative}
    No $\gamma$-OCS exists for $\gamma > \frac{1}{3}$.
\end{restatable}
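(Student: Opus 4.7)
The plan is to exhibit a three-round instance on which no $\gamma$-OCS with $\gamma > 1/3$ can satisfy the defining bounds. Specifically, I take $A_1 = A_3 = \{a,b\}$ and $A_2 = \{a,c\}$, so that $a$ appears in every round, $b$ appears at rounds $1$ and $3$, and $c$ appears only at round $2$. The key observation motivating the instance is that although $b$'s two appearances are non-consecutive, the single consecutive range $S = \{1,2,3\}$ still gives $\mathrm{count}_b(S) = 2$ (since $b \notin A_2$), so \Cref{def:OCS-farhbach} applied to $b$ with this $S$ yields $\Pr[b \text{ never wins}] \le (1-\gamma)/4$, matching the strong consecutive-pair bound that would otherwise only be available for truly adjacent appearances.

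The next step is to harvest three pairwise bounds. Letting $W_j$ denote the round-$j$ winner, applying the definition to element $a$ with $S = \{1,2\}$ and with $S = \{2,3\}$ gives $\Pr[W_1 \ne a,\, W_2 \ne a] \le (1-\gamma)/4$ and $\Pr[W_2 \ne a,\, W_3 \ne a] \le (1-\gamma)/4$, while the bound on $b$ above is exactly $\Pr[W_1 \ne a,\, W_3 \ne a] \le (1-\gamma)/4$. Applying the definition to singletons gives $\Pr[W_j \ne a] \le 1/2$; since each $A_j$ is a pair containing $a$ and the same singleton bound applies to the other element of $A_j$, we conclude $\Pr[W_j = a] = 1/2$ for every $j$. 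Then, by two-event inclusion-exclusion together with these $\frac{1}{2}$ marginals, $\Pr[W_i = W_j = a] = \Pr[W_i \ne a,\, W_j \ne a]$, so every pairwise probability $P_{ij} \coloneqq \Pr[W_i = W_j = a]$ is at most $(1-\gamma)/4$.

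To close the argument, I use inclusion-exclusion on $\bigcup_{j=1}^3 \{W_j = a\}$:
\begin{align*}
1 \;\ge\; \Pr\!\left[\,\bigcup_{j=1}^{3} \{W_j = a\}\right] \;=\; \sum_{j=1}^{3} \Pr[W_j = a] \;-\; \sum_{i<j} P_{ij} \;+\; \Pr[W_1 = W_2 = W_3 = a],
\end{align*}
which, using $\sum_j \Pr[W_j = a] = 3/2$ and nonnegativity of the triple term, gives $\sum_{i<j} P_{ij} \ge 1/2$. Combined with the upper bound $\sum_{i<j} P_{ij} \le 3(1-\gamma)/4$ from the previous paragraph, this forces $\gamma \le 1/3$, contradicting $\gamma > 1/3$.

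I don't foresee any real technical difficulty; the only creative step is choosing the instance so that the non-consecutive pair $(W_1, W_3)$ still inherits the tight $(1-\gamma)/4$ bound via $b$'s absence at round $2$. Without this device, the best one can say for $(W_1, W_3)$ is $1/4$ (using $S_1 = \{1\}, S_2 = \{3\}$), and the inclusion-exclusion would only close at $\gamma \le 1/2$; routing the bound through $b$ is precisely what sharpens the threshold to $\gamma = 1/3$.
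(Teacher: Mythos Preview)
Your proof is correct, and in fact your instance is isomorphic to the paper's (swap the labels $a \leftrightarrow b$: the paper uses $((a,b),(b,c),(a,b))$). The arguments, however, are genuinely different. The paper tracks the OCS's conditional choices round by round: after fixing $\Pr[W_1 = a] = 1/2$, it introduces a conditional probability $p_2$ at round~2 forced above $2/3$ by the pair bound for one element, then a conditional $p_3$ at round~3, and shows by a two-case split on $p_3 \gtrless 1/2$ that one of the consecutive-pair bounds must fail. Your route avoids all conditioning: you extract three pairwise upper bounds $P_{ij} \le (1-\gamma)/4$, then use inclusion--exclusion on $\bigcup_j \{W_j = a\}$ together with the exact marginals to force $\sum_{i<j} P_{ij} \ge 1/2$. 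This is cleaner and more symmetric --- no case split, no tracking of the OCS's internal state --- and it makes the $1/3$ threshold transparent as a straightforward arithmetic identity.

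One small expositional slip: the $b$-bound directly gives $\Pr[W_1 = a,\, W_3 = a] \le (1-\gamma)/4$ (since $b$ not winning in $A_1$ and $A_3$ means $a$ wins those rounds), not $\Pr[W_1 \ne a,\, W_3 \ne a]$ as you wrote. This is harmless because your very next line establishes these two quantities are equal, and your definition $P_{ij} = \Pr[W_i = W_j = a]$ is exactly what the $b$-bound delivers --- but you may want to reorder the sentences so the claim reads correctly on its own.
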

    
Both \Cref{cor:Gamma-fahrbach} and the factor-revealing LP of \cite{FHTZ20} agree that the competitive ratio for $\gamma = \frac{1}{3}$ is less than $0.5239$. This is an upper bound on the competitive ratios that can be achieved simply by constructing better $\gamma$-OCSes. Our competitive ratio of $0.5368$ is larger, confirming that multiway OCSes are strictly more powerful.

\section{Constructing a multiway OCS: Meta algorithm overview}
\label{subsec:construction-overview}

We give a meta algorithm for constructing $m$-discrete OCSes given a ``win distribution," $\W$, over $[0,1]^\N$. A sample from $\mathcal{W}$ is a sequence of probabilities, $(\bx_1, \bx_2, \ldots)$, with the quality of the resulting OCS (i.e. the function $F$ in \Cref{def:discrete-parameters}) depending on how negatively correlated the elements in that sequence are. During initialization, the meta algorithm draws an independent ``win sequence", $(\bx_1(i), \bx_2(i), \ldots)$ for each $i \in L$. Our OCS also stores an index within the win distribution, $k(i)$ for each $i \in L$ that is initialized to $k_i = 1$.

At each time step, the OCS receives $A_j = \{i_1, \ldots, i_m \}$. For each $i \in A_j$, it determines the ``desired win probability" of $i$ in this time step. In the case where $i$ appears exactly once in $A_j$, this desired win probability is exactly equal to $\bx_{k(i)}(i)$. When $i$ appears $r \geq 2$ times in $A_j$, the desired win probability is the probability $i$ would win in at least one of the next $r$ rounds if it appeared exactly once in each. Then, a tournament algorithm takes in these desired win probabilities and selects a single winner for the round consistent with the desired win probabilities and $k(i)$ is incremented $r$ times. We include pseudocode for our meta algorithm in \Cref{fig:OCS meta algorithm}, though defer description of the win distribution and tournament subroutine for later.

\begin{figure}[h]
  \captionsetup{width=.9\linewidth}
\begin{tcolorbox}[colback = white,arc=1mm, boxrule=0.25mm]
\vspace{3pt} 
$\mathrm{OCS}_{\mathcal{W}, \mathcal{T}}$
\begin{enumerate}
    \item \textbf{Initialization:} For each $i \in L$, independently draw a $\bx(i) \sim \mathcal{W}$ and initialize an index $k_1(i) \leftarrow 1$
    \item Upon receiving $A_j = \{i_1, \ldots, i_m\}$
    \begin{enumerate}
        \item \textbf{Determine desired win probabilities:} For each $i \in A_j$, let $r_j(i)$ be it's multiplicity in $A_j$. Compute the desired win probability
        \begin{align}
            \label{eq:desired-win-prob-prod}
            \bw_j(i) \coloneqq 1 - \prod_{\ell = 0}^{r_j(i)-1} \left(1 - \bx_{k_j(i) + \ell}\right)
        \end{align}
        \item \textbf{Choose a winner:} Use a (possibly randomized) tournament subroutine, $\mathcal{T}$, to select a single $i \in A_j$ as winner. This is done so that the probability each $i \in A_j$ wins is at least $\bw_j(i)$.
        \item \textbf{Increment counters:} For each $i \in L$, increment $k_{j+1}(i) \leftarrow k_j(i) + r_j(i)$
    \end{enumerate}
\end{enumerate}
\end{tcolorbox}
\caption{Our meta algorithm for constructing OCSes as a function of a win distribution, $\mathcal{W}$ and tournament subroutine $\mathcal{T}$}
\label{fig:OCS meta algorithm}
\end{figure}

The fact that the desired win probabilities of different vertices are \emph{independent} makes analysis tractable. To achieve a competitive ratio better than $0.5$, we need the winners of our OCS to be negatively correlated. We achieve this by making the random variables in the sequence returned by the win distribution negatively correlated. This way, if $i$ has a small chance of winning in one round, it has a larger chance of winning in other rounds. In particular, we need two types of negative correlation for $(\bx_1, \bx_2, \ldots) \sim \W$.

\begin{enumerate}
    \item A \emph{qualitative} and \emph{long-distanced} negative correlation. We need to ensure there doesn't exist any positive correlation between $\bx_{k_1}$ and $\bx_{k_2}$ even for $k_1$ and $k_2$ far apart. We prove that our win probabilities satisfy a generalization of negatively correlated random variables called \emph{negative association}. 
    \begin{restatable}[Negative association \cite{JP83}]{definition}{NA}
        \label{def:NA}
        Random variables $\bx_1, \bx_2, \ldots, \bx_n$ are said to be \emph{negatively associated} if, for every pair of disjoint subsets $A_1, A_2 \subseteq [n]$ and nondecreasing functions $f_1, f_2$,
            \begin{align*}
                \Cov[f_1(\bx_{A_1}), f_2(\bx_{A_2})] \leq 0
            \end{align*}
            where $\bx_{A}$ is the vector $(\bx_{i_1}, \bx_{i_2}, \ldots, \bx_{i_{|A|}})$ when $A = \{i_1, i_2, \ldots, i_{|A|}\}$.
    \end{restatable}
Proving this ends up being essential to showing our OCS has \emph{any} guarantees, even those matching the trivial fully independent OCS.
    \item A \emph{quantitative} and \emph{short-distanced} negative correlation. We want the values of $\bx_{j}$, and $\bx_{j'}$ to be strongly negatively correlated whenever $j$ is ``close to" $j'$. The magnitude of this negative correlation directly ties into the probability that $i$ is selected in consecutive time steps and therefore the quality of our OCS.
\end{enumerate}

We prove a meta theorem on the quality of the algorithm in \Cref{fig:OCS meta algorithm} as a function of the win distribution. 
\begin{restatable}[Quality of our meta algorithm]{theorem}{metaalg}
    \label{thm:meta-algorithm}
    Let $\mathcal{W}$ be a win-distribution with the following three properties.
    \begin{enumerate}
        \item \textbf{Negatively associated:} For any $n \geq 1$ and $\bx \sim \W$, the variables $\bx_1, \ldots, \bx_n$ are negative associated.
        \item \textbf{Shift-invariant:} For any $n \geq 1$ and $\bx \sim \W$, the distribution of $(\bx_1, \bx_2, \ldots)$ is identical to that of $(\bx_n, \bx_{n+1}, \ldots)$.
        \item \textbf{Sufficiently small:} For each $r \in [m-1]$ and $t \in [0,1]$, defining $\bw^{(r)}$ to be the random variable representing the distribution of the desired win probability for a vertex of multiplicity $r$,
    \begin{align*}
       \bw^{(r)} \sim \left(1 - \prod_{\ell = 1}^{r} \left(1 - \bx_{\ell}\right) \right) \quad\quad \text{where} \quad\quad \bx \sim \W,
    \end{align*} the following holds
        \begin{align}   
            \label{eq:w-not-too-large}
            \Ex_{\bx \sim \W} \left[\left(\bw^{(r)} - t \right)_+ \right] \leq 1 - t  - \frac{m-r}{m} \cdot \left (1 - t^{\frac{m}{m-r}} \right)
        \end{align}
        where $(z)_+$ is short hand for $\max(z, 0)$.
    \end{enumerate}
    Then, there exists some tournament subroutine $\mathcal{T}$ for which $\mathrm{OCS}_{\mathcal{W}, \mathcal{T}}$ is a $(f,m)$-discrete OCS for
    \begin{align}
        \label{eq:def f}
        F(n) \coloneqq \Ex_{\bx \sim \W} \left[\prod_{\ell = 1}^n(1 - \bx_\ell) \right].
    \end{align}
\end{restatable}
Each of the three criteria of \Cref{thm:meta-algorithm} are desirable for intuitive reasons. Let $S_1, \ldots, S_k$ be disjoints set of time steps. Negative association ensures that,
\begin{align*}
    \Pr[i \text{ never selected in } \bigcup_{\ell = 1}^k S_\ell] \leq \prod_{\ell  = 1}^k \Pr[i \text{ never selected in }S_\ell].
\end{align*}
Shift-invariance ensures that the probability that $i$ is not selected in a set of $n$ consecutive appearances is independent of when those appearances occur. Finally, $\bw^{(r)}$ being sufficiently small ensures that it's possible for a tournament subroutine to guarantee that the probability $i$ wins a round with desired win probability $w$ is at least $w$.

Our proof of \Cref{thm:meta-algorithm} leverages a number of tools from probability theory. We use many closure properties of negatively associated variables given in \cite{JP83} and also prove some additional ones. Later, we formalize what it means for one variable to be ``larger and more spread out" than another using the \emph{increasing convex order}. We do so to take advantage of a powerful result by  M{\"u}ller and R{\"u}schendorf \cite{MR01} which generalizes a classic result of Strassen \cite{s65}. That result guarantees the existence of a particular coupling $(\ba, \bb)$ whenever $\bb$ is ``larger and more spread out" than $\ba$.

\section{Organization of the remainder of this paper}

\Cref{sec:concurrent} compares our results to concurrent works and \Cref{sec:preliminaries} summarizes our notation. The remainder of this paper is devoted to proving our main theorem:
\begin{restatable}[]{theorem}{main-thm}
    \label{thm:main}
    There is a $\Gamma = 0.5368$-competitive algorithm for edge-weighted online bipartite matching.
\end{restatable}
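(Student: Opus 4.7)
The plan is to combine the reduction of \Cref{thm:discrete-competitive-ratio-informal} with the meta-algorithm of \Cref{thm:meta-algorithm}. Specifically, given a convex $F: \Z_{\geq 0} \to [0,1]$ and an integer $m$, if we can build an $(F, m)$-discrete OCS, then we immediately get a competitive ratio of
\begin{equation*}
    \Gamma = 1 - \sum_{n=0}^\infty \frac{m^n}{(m+1)^{n+1}} F(n).
\end{equation*}
So the whole task is reduced to designing an $(F, m)$-discrete OCS whose $F$ makes this sum as small as possible, subject to $F \leq F_{\mathrm{trivial}}(n) = (1 - 1/m)^n$.

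To construct the OCS, I would follow the meta-algorithm recipe sketched in \Cref{subsec:construction-overview}: specify a \emph{win distribution}, i.e.\ an infinite sequence $(\bx_1, \bx_2, \ldots) \in [0,1]^\infty$ such that the $k$-th time a fixed element $i$ appears, it wins that round with conditional probability $\bx_k$. For this to be feasible, the $\bx_k$'s must satisfy $\E[\bx_k] \leq 1/m$ (so that the per-round win probabilities across the $m$ slots sum to at most $1$) and must be \emph{negatively associated}, so that \Cref{thm:meta-algorithm} can be invoked to yield a concrete $F$ in terms of expectations over products $\prod_k (1 - \bx_k)$ on certain index sets.

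The main obstacle is constructing a win distribution that is simultaneously (i) strongly negatively associated across arbitrarily many coordinates, (ii) parametrically tunable so that brute-force search can push $F$ below $F_{\mathrm{trivial}}$ as far as possible, and (iii) amenable to the increasing-convex-order / M\"uller--R\"uschendorf coupling arguments that \Cref{thm:meta-algorithm} relies on. I would adopt the family $\W$ built in \Cref{sec:W}, which is designed exactly so that its coordinates admit such a coupling and hence satisfy the hypotheses of \Cref{thm:meta-algorithm}. The verification of negative association for this family, together with the icx-domination needed to invoke \cite{MR01}, is the technical heart; once that is in hand, \Cref{thm:meta-algorithm} produces an explicit formula for $F$ as a function of the distribution's parameters.

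Finally, I would fix $m = 6$, since the author states this is the value that optimizes their construction, and numerically search over the parameter(s) of the $\W$ family to minimize the closed-form expression
\begin{equation*}
    \sum_{n = 0}^\infty \frac{6^n}{7^{n+1}} F(n).
\end{equation*}
The resulting $F$ is convex (a property one checks directly from the formula produced by \Cref{thm:meta-algorithm}, since the construction mixes exponentials), so \Cref{thm:discrete-competitive-ratio-informal} applies and yields the claimed competitive ratio of $\Gamma = 0.5368$. The final step is then a bounded numerical computation: truncate the infinite sum at a large enough index (the tail is geometric in $(6/7)^n$), evaluate $F(n)$ at each parameter setting in the search grid, and report the best bound obtained, which matches the stated $0.5368$.
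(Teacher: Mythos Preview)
Your high-level plan matches the paper's: build a win distribution $\W$ (the one from \Cref{sec:W}), invoke \Cref{thm:meta-algorithm} to get an $(F,6)$-discrete OCS, and then plug into the reduction to get $\Gamma$. However, you are leaning on the \emph{informal} \Cref{thm:discrete-competitive-ratio-informal}, whose only hypothesis is convexity of $F$. The version that is actually proved and used (\Cref{thm:discrete-competitive-ratio-formal}, and its specialization \Cref{lem:disc-OCS-competitive-exponential}) carries several extra hypotheses that you never mention: $F$ must be log-concave, must satisfy the discrete log-concave-derivative inequality \Cref{eq:discrete-log-concave}, and crucially $m$ must obey the rate bound \Cref{eq:m-bound}. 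That last constraint is the real reason $m=6$ is chosen --- it is not merely an empirical optimum, and larger $m$ would violate \Cref{eq:m-bound} at this $\Gamma$ (see \Cref{rem:m-constraint}). Your proposal treats $m=6$ as a black-box choice and never checks this bound, so as written the argument has a gap.

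A second, smaller issue: your justification that ``$F$ is convex \ldots since the construction mixes exponentials'' is not how the paper establishes the needed shape properties. The paper proves $F(n+1)/F(n)$ is decreasing (\Cref{lem:ratio-decreasing}), i.e.\ log-concavity, via a stochastic-monotonicity argument specific to the seed process $\D$; this is also what allows the eventual-exponential form of $F$ in \Cref{fig:OCS-parameters} and the application of \Cref{lem:disc-OCS-competitive-exponential}. You would need an argument of this kind, not a vague appeal to exponential mixtures. Finally, while you allude to ``icx-domination,'' you should be explicit that the ``sufficiently small'' condition \Cref{eq:w-not-too-large} of \Cref{thm:meta-algorithm} must be verified for every $r\in[m-1]$; in the paper this is done by direct numerical computation and does fail for some parameter values of $p$.
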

The sections are ordered as successive reductions from edge-weighted online bipartite matching with some proofs deferred to the appendix.
\begin{itemize}
    \item In \Cref{sec:continuous-OCS-formal}, we give the formal version of \Cref{def:cont-OCS-parameters-informal}, specifying the requirements a continuous OCS has to meet to be useful for edge-weighted online bipartite matching. We also state and prove the formal version of \Cref{thm:OCS-cont-Gamma-informal} mapping that definition to a competitive ratio.
    \item In \Cref{sec:cont-to-disc-OCS}, we show to construct a continuous OCS from a discrete OCS and then give the formal version of \Cref{thm:discrete-competitive-ratio-informal} mapping a discrete OCS to a competitive ratio. We also give the proof of \Cref{cor:Gamma-fahrbach} and of \Cref{thm:main} as a consequence of the discrete OCS construction in later sections.
    \item In \Cref{sec:metathm-proof}, we prove \Cref{thm:meta-algorithm} which gives a discrete OCS from a win distribution.
    \item In \Cref{sec:W}, we construct a win distribution meeting the requirements of \Cref{thm:meta-algorithm} and optimize its parameters to achieve a $0.5368$-competitive algorithm.
    \item Finally, in  \Cref{sec:neg}, we provide a short proof of \Cref{lem:negative} upper bounding the competitive ratio achievable by the OCS definition of \cite{FHTZ20}. 
\end{itemize}

\section{Comparison to the concurrent work}
\label{sec:concurrent}

\subsection{The work of Gao et al.}
Concurrently and independently, Gao et al.\ also explored the use of multiway OCSes in the construction of improved online matching algorithms \cite{GHHNYZ21}. The bulk of their work focuses on \emph{vertex-weighted} online bipartite matching. This is a special case of \emph{edge-weighted} online bipartite matching where we assume that all edges to each offline vertex have the same weight. While an algorithm achieving the optimal competitive ratio competitive ratio of $(1 - \frac{1}{e}) \approx 0.632$ is known in this setting \cite{AGKM11}, Gao et al.\ are able to show that OCSes can also be used to tackle this problem and achieve a competitive ratio of $0.593$.

They do so by developing a definition of multiway OCSes similar to our definition of continuous OCSes. In the vertex-weighted setting we only care about whether an element $i$ is selected at least once at some time step and do not care about which time step it is selected in. Their definition for a ``multi-way semi-OCS" is equivalent to \Cref{def:cont-OCS-parameters-informal} where we take $k = 1$ and $S_1$ the set containing all time steps. For vertex-weighted online bipartite matching, they derive the same formula for the competitive ratio as we do in \Cref{thm:OCS-cont-Gamma-informal} using ``multi-way semi-OCSes." A similar result could be derived using our algorithm by never executing \Cref{eq:reset} (since we only care about whether a vertex is matched, not when it is matched). The resulting algorithm and analysis simplifies.

We stress that extending multiway OCSes to the historically challenging edge-weighted setting requires overcoming additional hurdles.
\begin{enumerate}
    \item The informal definition of \Cref{def:cont-OCS-parameters-informal} isn't strong enough to derive \Cref{thm:OCS-cont-Gamma-informal}. We needed to identify an additional constraint on the OCS (see ``partially ignore small gaps" of \Cref{def:OCS parameters}) that is both feasible to construct and sufficient to developing an algorithm for edge-weighted online bipartite matching.
    \item Constructing a multiway OCS is more challenging than a ``multi-way semi-OCS" and we need to develop a new framework to do so. For example, in the construction of Gao et al., the probability a vertex is selected in a late time step can be made arbitrarily small (since they only need each vertex to be matched once), and so would not meet our definition.
\end{enumerate}

Gao et al.\ also contribute to the \emph{edge-weighted} setting. In this setting, they give $0.519$-competitive algorithm by constructing an improved $2$-way OCS. Their result improves upon the competitive ratio of Fahrbach et al. \cite{FHTZ20} though isn't as large as ours. Finally, they prove that no $\gamma$-OCS exists for $\gamma > \frac{1}{4}$, which is stronger than the $\frac{1}{3}$ bound we show in \Cref{lem:negative}. For our purposes, the conclusion is the same: Multiway OCSes are strictly more powerful than two-way OCSes for edge-weighted online bipartite matching.

\subsection{The work of Shin and An}

Also concurrently and independently, Shin and An created a $3$-discrete OCS an applied it to edge-weighted online bipartite matching \cite{SA21}. They construct their $3$-discrete OCS by running two parallel instance of a $\gamma$-OCS (which is $2$-discrete). Upon receiving a set of three elements, two are chosen uniformly at random to be passed to one of the $\gamma$-OCSes. The winner and the third element are passed into the other $\gamma$-OCS to decide the overall winner of the round.

Then, they show how to use this $3$-discrete OCS for edge-weighted online bipartite matching. To do so, they extend the analysis and factor-revealing LP of Farhbach et al. and derive a competitive ratio of $0.5093$. We note that plugging in the parameters of their $3$-discrete OCS into \Cref{thm:discrete-competitive-ratio-informal} gives the same competitive ratio up to the four digits of accuracy provided in their paper, though with a completely different derivation than theirs.

\section{Preliminaries}
\label{sec:preliminaries}

For brevity, we use \emph{decreasing} in place of \emph{nondecreasing} and \emph{increasing} in place of \emph{nondecreasing}. The notation $(x)_+$ is used as shorthand for $\max(x, 0)$. 

We use {\bf boldface} (e.g.~$\bx\sim\D$) to denote random variables. The notation $\ba \overset{d}{=} \bb$ signifies that $\ba$ and $\bb$ have the same distribution. The following distributions are of interest.
\begin{enumerate}
    \item The uniform distribution $\bx \sim \Uniform(0,1)$, gives a continuous $\bx$ equally likely to be anywhere on the interval $[0,1]$.
    \item The Bernoulli distribution, $\bx \sim \Ber(p)$, gives a discrete $\bx$ that is $1$ with probability $p$ and $0$ otherwise.
    \item The geometric distribution, $\bx \sim \Geo(p)$ gives $i - 1$, where $i$ is the smallest index such that $\by_i = 1$ for $\by_1, \by_2, \ldots \overset{\mathrm{iid}}{\sim}\Ber(p)$.
    \item The Poisson distribution, $\bx \sim \Poi(\lambda)$ gives $\bx$ defined as the following limit
    \begin{align*}
        \bx \coloneqq \lim_{n \to \infty} \left(\sum_{i = 1}^n \bx_i \quad \text{where} \quad \bx_i \overset{\mathrm{iid}}{\sim} \Ber(\lambda/n)\right).
    \end{align*}
    \item The categorical distribution, $\bi \sim \Categorical(p_1, \ldots, p_m)$ where $\sum_{j = 1}^m p_j = 1$, gives a discrete $\bi$ where $\bi = i$ with probability $p_i$.
\end{enumerate}

We'll need various ways to compare distributions. The simplest is \emph{stochastic dominance}.
\begin{definition}[Stochastic dominance]
    \label{def:stochastic-dominance}
    Let $\ba$ and $\bb$ be random variables over $\R$. We say that $\ba$ \emph{stochastically dominates} $\bb$, denoted $\ba \succeq \bb$ if, for any $t \in \R$,
    \begin{align*}
        \Pr[\ba \leq t] \leq \Pr[\bb \leq t]
    \end{align*}
\end{definition}

\begin{definition}[Stochastically increasing/decreasing functions]
    \label{def:stochastic-increasing}
    A stochastic function $\boldf$ from $\R^n$ to a distribution over $\R$ is said to be \emph{increasing} in coordinate $i \in [n]$ if for any $x,y \in \R^n$ where $x_j = y_j$ for each $j \neq i$ and $x_i \geq y_i$, $\boldf(x)$ stochastically dominates $\boldf(y)$. It is said to be stochastically \emph{decreasing} in coordinate $i$ if the $\boldf(y)$ stochastically dominates $\boldf(x)$ under the same conditions. A function is stochastically increasing (or decreasing) if it is stochastically increasing (or decreasing) in all of its coordinates.
\end{definition}

In some proofs, we'll need for one distribution to be ``less spread out" than another. We use the following two notions. For a more thorough overview, see \cite{SS07}.
\begin{definition}[Convex order]
    \label{def:cx}
    Let $\ba,\bb$ be two univariate random variables such that,
    \begin{align*}
        \Ex[\phi(\ba)] \leq \Ex[\phi(\bb)]
    \end{align*}
    for all convex functions $\phi: \R \to \R$ for which the above expectations exist. Then, we say $\ba$ is smaller than $\bb$ in the \emph{convex order} and denote this,
    \begin{align*}
        \ba \cx \bb.
    \end{align*}
\end{definition}

Similar to the above, but where we restrict $\phi$ to be increasing,

\begin{restatable}[Increasing convex order]{definition}{icxdef}
    \label{def:icx}
    Let $\ba,\bb$ be two univariate random variables such that,
    \begin{align*}
        \Ex[\phi(\ba)] \leq \Ex[\phi(\bb)]
    \end{align*}
    for all increasing convex functions $\phi: \R \to \R$ for which the above expectations exist. Then, we say $\ba$ is smaller than $\bb$ in the \emph{increasing convex order} and denote this,
    \begin{align*}
        \ba \icx \bb.
    \end{align*}
\end{restatable}
Finally, we'll use \emph{discrete derivatives}.
\begin{definition}[Discrete derivative]
    \label{def:disc-derivative}
    For any function $F: \Z \to \R$, the discrete derivative of $F$ is defined as
    \begin{align*}
        (\Delta F)(n) \coloneqq F(n+1) - F(n).
    \end{align*}
    We use the notation $\Delta^{(\ell)}$ to denote the discrete derivative operator composed $\ell$ times.
\end{definition}

\section{Continuous OCS: Formal definition and theorems}
\label{sec:continuous-OCS-formal}

In this section we formalize our definition of the parameters of a continuous OCS and the relation between those parameters and the competitive ratio. Recall in \Cref{def:cont-OCS-parameters-informal}, we guarantee the probability $i$ isn't picked in two consecutive sequences of time steps, $S_1$ and $S_2$, is at most
\begin{align}
    \label{eq:prob-separate}
    f(\weight_i(S_1)) \cdot f(\weight_i(S_2)).
\end{align}
where $\weight_i(S) = \sum_{j \in S} (\pj)_i$. On the other hand, if $S_1$ and $S_2$ are actually consecutive, then the probability $i$ isn't picked is at most
\begin{align}
    \label{eq:prob-combined}
    f\big(\weight_i(S_1) + \weight_i(S_2)\big).
\end{align}
Good OCSes will have $f(x + y) < f(x) \cdot f(y)$ and so \Cref{eq:prob-combined} gives a stronger guarantee than \Cref{eq:prob-separate}. Now, suppose that $S_1$ and $S_2$ are ``almost" consecutive. For $S_1 = \{j_1, \ldots, j_2\}$ and $S_2 = \{j_3, \ldots, j_4\}$, perhaps its the case that $j_3 = j_2 + 2$. Then, intuitively, we should expect a guarantee close to that of \Cref{eq:prob-combined} to hold. 

Let $G$ be the time steps in the ``gap", meaning $G \coloneqq \{j_2 + 1, \ldots, j_3 - 1\}$. For some rate parameter $r$ and $w \coloneqq \prod_{j \in G}(1- \cdot \pj_i)^r$, we will guarantee that the probability $i$ is not selected in $S_1$ or $S_2$ is at most
\begin{align*}
    w \cdot f\big(\weight_i(S_1) + \weight_i(S_2)\big) + (1 - w) \cdot f(\weight_i(S_1)) \cdot f(\weight_i(S_2))
\end{align*}
Essentially, when determining the probability $i$ is picked, we pretend the gap $G$ doesn't exist with probability $w$. This guarantee smoothly interpolates between \Cref{eq:prob-combined,eq:prob-separate} as $S_1$ and $S_2$ grow further apart. We generalize this idea to more than two sets in the following definition.

\begin{definition}[Parameters of a continuous OCS, formal version of \Cref{def:cont-OCS-parameters-informal}]
    \label{def:OCS parameters}
    For any rate $r \geq 0$ and function $f: \R_{\geq 0} \to [0,1]$, an $(f, r)$-\emph{continuous OCS} is a continuous OCS with the following guarantee. For any element $i \in L$ and set of time steps $S$, 
    \begin{align*}
        \Pr[\bi_j \neq i \text{ for every }j \in S] \leq \mathcal{F}(S, p)
    \end{align*}
    where $p$ is the vector whose $j^{\mathrm{th}}$ entry is $(\pj)_i$ and $\mathcal{F}$ satisfies
    \begin{enumerate}
        \item \textbf{Consecutive:} If $S = \{j_1, \ldots, j_2\}$, then
        \begin{align*}
            \mathcal{F}(S, p) \leq f\left(\sum_{j \in S} p_j\right)
        \end{align*}
        \item \textbf{Partially ignore small gaps:} For any disjoint (not necessarily consecutive) $S_1, S_2 \subseteq \N$ satisfying
        \begin{align*}
            j_1 < j_2 \text{ for each $j_1 \in S_1, j_2 \in S_2$}
        \end{align*}
        Let $j^\star$ be some time step between $j_1, j_2$ for every $j_1 \in S_1, j_2 \in S_2$. For
        \begin{align*}
            (\mathrm{del}_{j^\star}(p))_{j} &\coloneqq \begin{cases}
                p_{j} & j < j^\star\\
                p_{j + 1} & j \geq j^\star
            \end{cases} \\
            \mathrm{del}_{j^\star}(S_1 \cup S_2) &\coloneqq \{j \text{ for each }j \in S_1\} \cup \{j-1 \text{ for each }j \in S_2\}
        \end{align*}
        Then, for $w = (1 - p_{j^\star})^r$,
        \begin{align}
            \label{eq:ignore-gaps}
            \mathcal{F}(S_1 \cup S_2, p) \leq w \cdot \mathcal{F}(\mathrm{del}_{j^\star}(S_1 \cup S_2), \mathrm{del}_{j^\star}(p)) + (1 - w) \cdot \mathcal{F}(S_1, p) \cdot \mathcal{F}(S_2, p)
        \end{align}
    \end{enumerate}
\end{definition}

Smaller $r$ corresponds to ignoring larger gaps, making the OCS anticorrelated over larger distances. Instead of setting $w = (1 - p_{j^\star})^r$, we could have instead set $w = e^{-p_{j^\star}\cdot r}$, $w = (1 - r \cdot p_{j^\star})_+$, or other functions where $-r \leq \frac{dw}{dp_{j^\star}}\leq 0$, and our proof of \Cref{thm:OCS-cont-Gamma} would still work. The choice we make here plays well with the proof of \Cref{lem:disc-to-cont}. 

We next give a formal version of \Cref{thm:OCS-cont-Gamma-informal}, with the main difference being that $r$ can't be too large.

\begin{theorem}[Formal version of \Cref{thm:OCS-cont-Gamma-informal}]
    \label{thm:OCS-cont-Gamma}
    For any thrice differentiable convex $f:\R_{\geq 0} \to [0,1]$ where $f$ and $f'$ are log-concave, meaning
    \begin{align*}
        f''(x) \cdot f(x) \leq (f'(x))^2 \quad \quad \text{for all $x \geq 0$} \\
       f'''(x) \cdot f'(x) \leq (f''(x))^2 \quad \quad \text{for all $x \geq 0$}
    \end{align*}
    let,
    \begin{align}
        \label{eq:def-Gamma}
        \Gamma \coloneqq 1 - \int_{0}^\infty e^{-t} f(t)dt.
    \end{align}
    For any $r$ satisfying
    \begin{align}
        \label{eq:r-upper-bound-thm}
        r \leq \frac{-f'(0) - \Gamma}{\Gamma + (1 - \Gamma) \cdot f'(0)}
    \end{align}
    If there is a $(f,r)$-continuous OCS, then there is a $\Gamma$-competitive algorithm for edge-weighted online bipartite matching.
\end{theorem}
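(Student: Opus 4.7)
The plan is to deploy an online primal-dual argument, as the paper's introduction indicates was also the approach of Fahrbach et al. The LP relaxation for edge-weighted bipartite matching (with free disposal) has dual variables $\alpha_i$ on offline vertices and $\beta_j$ on online vertices, with edge constraints $\alpha_i + \beta_j \geq w_{ij}$. Standard reasoning reduces proving a competitive ratio of $\Gamma$ to maintaining a distribution over dual solutions that (i) is feasible in expectation on every edge and (ii) has expected objective equal to $1/\Gamma$ times the expected weight of the produced matching. Relative to the $\gamma$-OCS framework, the continuous formulation lets us feed an arbitrary probability vector $p^{(j)}$ to the OCS, sidestepping the case-analysis over ``deterministic,'' ``randomized,'' and ``unmatched'' round types highlighted in the introduction.

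For the algorithm itself, I would maintain, for each offline $i$, a state $\theta_i$ equal to the cumulative mass $\sum_{j'<j}(p^{(j')})_i$ delivered to $i$ so far. On arrival of $j$ with weights $\{w_{ij}\}$, I construct $p^{(j)}$ by water-filling against a profile $\phi(\theta_i)$: set $(p^{(j)})_i$ so that all vertices satisfying $w_{ij} > \phi(\theta_i)$ become ``active'' at a common marginal level $\lambda$ and the probabilities sum to $1$, feed the vector to the $(f,r)$-continuous OCS, match $j$ to the winner (exploiting free disposal), and set $\beta_j \coloneqq (1-\Gamma)\lambda$. The dual $\alpha_i$ is incremented only on an OCS win, by an amount chosen so that $\mathbb{E}[\alpha_i]$ evolves according to a prescribed ODE in $\theta_i$. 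The profile $\phi$ and the increment rule are pinned down by requiring both primal-dual invariants to be tight on the boundary.

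The heart of the proof is verifying expected feasibility on an arbitrary edge $(i,j)$. Water-filling gives $\lambda \geq w_{ij} - \phi(\theta_i)$ immediately, so it suffices to lower bound $\mathbb{E}[\alpha_i]$. I would write this expectation as an integral of dual increments weighted by the OCS survival probability, using the bound in \Cref{def:OCS parameters} that on a contiguous run of mass $t$ the probability $i$ has never won is at most $f(t)$. An integration by parts, which is legitimate thanks to $\lim_{x\to\infty}f(x)=0$ and the convexity of $f$, collapses this integral into a multiple of $\int_0^\infty e^{-t}f(t)\,dt$. Matching the resulting quantity against the required $w_{ij}-\beta_j$ identifies $\Gamma=1-\int_0^\infty e^{-t}f(t)\,dt$ as the largest competitive ratio compatible with feasibility, consistent with \eqref{eq:def-Gamma}.

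The main obstacle is that the mass delivered to $i$ is typically not contiguous: in many intervening rounds, other offline vertices dominate and $i$ receives no probability. The ``partially ignore small gaps'' clause of \Cref{def:OCS parameters} bridges the gap via the convex combination with mixing weight $w=(1-r\,p_{j^\star})_+$, but introduces an inflation of the survival bound that is first-order in $r$. For the feasibility inequality to survive this inflation, $r$ must be small enough that the inflation is absorbed by the slack in the water-filling argument: the first-order sensitivity of the survival bound at a single gap is captured by $-f'(0)$, the algorithm's primal contribution enters with weight $\Gamma$, and the online dual with weight $1-\Gamma$; the break-even analysis yields precisely the upper bound \eqref{eq:r-upper-bound-thm}. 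The thrice-differentiability and log-concavity of $f'$, i.e.\ $f'''f' \leq (f'')^2$, will be invoked to ensure that the induced profile $\phi$ is monotone (so water-filling is well-defined) and that the relevant potential evolves monotonically in $\theta_i$, which is what makes the integration-by-parts step valid globally rather than only infinitesimally.
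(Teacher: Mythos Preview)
Your proposal captures the primal-dual flavor but misses the central structural device, and in doing so misidentifies where the ``gaps'' actually come from.

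You track a single scalar $\theta_i = \sum_{j'}(p^{(j')})_i$ per offline vertex. That is adequate for vertex-weighted matching but not for the edge-weighted problem: the primal contribution of $i$ is $\int_0^\infty \Pr[i\text{ matched at weight}\geq w]\,dw$, and the event ``matched at weight $\geq w$'' depends only on rounds $j$ with $w_{ij}\geq w$. The paper therefore works at each weight level $w$ separately, maintaining random variables $\bx_i(w),\bc_i(w)$ so that $y_i(w)=\Ex[\bc_i(w)f(\bx_i(w))]$ upper bounds the survival probability at level $w$. The ``gaps'' in \Cref{def:OCS parameters} are not rounds where ``other offline vertices dominate and $i$ receives no probability''; they are, at level $w$, precisely the rounds with $w_{ij}<w$, even when $(p^{(j)})_i>0$. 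Your scalar state cannot see this, and your break-even analysis for $r$ is accordingly aimed at the wrong perturbation.

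A second gap is the missing auxiliary function. The paper introduces $a(x)=f(x)-\int_0^\infty e^{-t}f(t+x)\,dt$, which satisfies the ODE $a'=f'+a$ with $a(0)=\Gamma$, and sets $\alpha_i(w)=\Ex[\bc_i(w)\,a(\bx_i(w))]$ deterministically (the dual never sees the OCS outcome, contrary to your ``incremented only on an OCS win''). Reverse weak duality at level $w\leq w_{ij}$ then reduces pointwise to $-f'+a'-a=0$, and at level $w>w_{ij}$ to $r\leq \frac{\Gamma-a(x)}{a(x)-f(x)\Gamma}$. The log-concavity of $f'$ is used for one thing only: to show this last ratio is increasing in $x$, so its infimum is the $x\to 0$ limit, which by L'H\^opital and the ODE is exactly the right-hand side of \eqref{eq:r-upper-bound-thm}. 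Your account of log-concavity (``$\phi$ monotone,'' ``potential evolves monotonically'') does not touch this mechanism.
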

In most cases, $f'(0) = -1$ as OCSes of interest have $f$ between $f_{\mathrm{trivial}}= e^{-x}$ and $f_{\mathrm{perfect}} = (1 - x)_+$. When $f'(0) = 1$, \Cref{eq:r-upper-bound-thm} imposes an upper bound on $r$ of $\frac{1 - \Gamma}{2\Gamma - 1}$, which is a decreasing function of $\Gamma$ on the interval $(0.5, 1)$. Hence, the larger a competitive ratio we want, the smaller $r$ has to be, meaning the OCS must be anticorrelated over larger distances.

The remainder of this section is devoted to proving \Cref{thm:OCS-cont-Gamma}. The algorithm uses an online primal-dual approach.

\subsection{Primal dual formulation}
\label{sec:primal-dual}
 We begin with the standard primal linear program for (offline) edge-weighted bipartite matching. Let $w_{ij}$ be the weight of the edge between $i \in L$ to $j \in R$ where $w_{ij} = 0$ indicate no such edge exists and $x_{ij}$ be the probability that $(i,j)$ is the heaviest edge matched to vertex $i$.
\begin{align*}
    \max \quad &\sum_{i \in L} \sum_{j \in R} w_{ij} x_{ij} & \\
    \mathrm{s.t.} \quad &\sum_{j \in R} x_{ij} \leq 1 & \forall i \in L \\
    &\sum_{i \in L} x_{ij} \leq 1 & \forall j \in R \\
    &x_{ij} \geq 0 & \forall i \in L, j \in R
\end{align*}
This gives rise to a dual linear program.
\begin{align*}
    \min \quad & \sum_{i \in L} \alpha_i + \sum_{j \in R} \beta_j & \\
    \mathrm{s.t.} \quad &\alpha_i + \beta_j \geq w_{ij} & \forall i \in L, j \in R \\
    &\alpha_i \geq 0 & \forall i \in L \\
    & \beta_j \geq 0 &\forall j \in R
\end{align*}
We use $P$ to denote the primal objective and $D$ to denote the dual objective. Our online algorithm will simultaneously maintain a matching for the graph and a dual assignment that will meet the following two criteria.
\begin{fact}[Lemma 2 from \cite{FHTZ20}]
    Suppose an algorithm maintains primal and dual assignments such that, for some $0 \leq \Gamma \leq 1$, the following conditions hold at all time steps.
    \begin{enumerate}
        \item \textbf{\emph{Approximate dual feasibility:}} For any $i \in L$ and $j \in R$
        \begin{align*}
            \alpha_i + \beta_j \geq \Gamma \cdot w_{ij}
        \end{align*}
        
        \item \textbf{\emph{Reverse weak duality:}} The objectives of the primal and dual assignments satisfy $P \geq D$.
    \end{enumerate}
    Then the algorithm is $\Gamma$-competitive.
\end{fact}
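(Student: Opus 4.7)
The plan is to proceed by a standard primal-dual argument. The key observation is that although the dual assignment maintained by the algorithm need not itself be feasible for the dual LP, the scaled assignment $(\alpha_i/\Gamma, \beta_j/\Gamma)_{i \in L, j \in R}$ is feasible by the \emph{approximate dual feasibility} condition, since $\alpha_i/\Gamma + \beta_j/\Gamma \geq w_{ij}$ holds for every edge $(i,j)$. By weak LP duality applied to the offline edge-weighted matching LP, any feasible dual upper bounds the optimal primal, so
\begin{align*}
    \mathrm{OPT} \;\leq\; \frac{1}{\Gamma}\left(\sum_{i \in L} \alpha_i + \sum_{j \in R} \beta_j\right) \;=\; \frac{D}{\Gamma},
\end{align*}
where $\mathrm{OPT}$ denotes the weight of the optimal offline matching on the realized input.

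Next I would invoke the \emph{reverse weak duality} condition, which gives $P \geq D$ at the end of the run, where $P$ is the total weight of the matching produced by the algorithm. Chaining the two inequalities yields $P \geq D \geq \Gamma \cdot \mathrm{OPT}$, which is exactly the competitive guarantee once we account for randomization.

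The only subtle point is the treatment of expectations for a randomized algorithm: the quantities $\alpha_i, \beta_j, P, D$ may all be random, while $\mathrm{OPT}$ is a deterministic function of the instance. I would interpret the two conditions as holding in expectation (or pointwise, which is stronger). Then the feasibility argument above applies to $(\mathbb{E}[\alpha_i]/\Gamma, \mathbb{E}[\beta_j]/\Gamma)$ by linearity of expectation, giving $\mathrm{OPT} \leq \mathbb{E}[D]/\Gamma$, and reverse weak duality in expectation gives $\mathbb{E}[P] \geq \mathbb{E}[D]$, so $\mathbb{E}[P] \geq \Gamma \cdot \mathrm{OPT}$ as required.

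There is no real obstacle in this proof; the work has already been done by the algorithm designer in establishing the two invariants. The only care needed is to note that the approximate feasibility condition is precisely what is required to make a scaled version of the maintained dual a valid dual solution, so that weak duality can be invoked — the rest is a one-line chain of inequalities.
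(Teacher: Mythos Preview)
Your argument is correct and is exactly the standard primal-dual competitiveness proof. Note, however, that the paper does not actually prove this statement: it is stated as a \textbf{Fact} cited from \cite{FHTZ20} and used as a black box, so there is no ``paper's own proof'' to compare against. Your write-up supplies precisely the intended justification (scale the dual by $1/\Gamma$ to get a feasible dual, apply weak LP duality to bound $\mathrm{OPT}$, then chain with $P \geq D$). One minor remark: in this particular paper the primal and dual variables are maintained deterministically (as expected values), so the expectation discussion is not strictly needed here, though it does no harm.
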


Inspired by \cite{DHKMY16,FHTZ20}, we do not directly maintain the primal variables $x_{ij}$. Instead, for each $i \in L$ and weight-level $w \geq 0$, we define
\begin{align*}
    y_i(w) \coloneqq \Pr[i \text{ not matched to a vertex of weight at least }w]
\end{align*}
In contrast, \cite{DHKMY16,FHTZ20} defines their $\overline{y_i(w)}$ to be $\overline{y_i(w)} = 1 - y_i(w)$. We find our ``reversed" definition simplifies later analysis. The primal objective is then
\begin{align*}
    P = \sum_{i \in L} \int_{0}^{\infty} (1 - y_i(w))dw
\end{align*}
Similarly, we introduce and maintain new variables $\alpha_i(w)$ for each $i \in L$ and weight-level $w \geq 0$ and then set the value of $\alpha_i$ to
\begin{align*}
    \alpha_i = \int_{0}^\infty (\Gamma - \alpha_i(w))dw.
\end{align*}

\begin{remark}
    \label{remark:step function}
    Note that $y_i(w)$ and $\alpha_i(w)$ are both
    step functions with at most one change per unique value in the set \{$w_{ij} \mid j \in R\}$. Hence, they can be stored efficiently.
\end{remark}

\subsection{High level algorithmic overview}

Fix some $(f, r)$-continuous OCS. At each time step $j \in [T]$, the values $w_{ij}$ are revealed for every $i \in L$. As a function of these weights and the state of the primal and dual variables, the algorithm will determine a probability vector $\pj$ to pass into the OCS. Then, it updates the primal variables to account for the probability each vertex is matched to $j$ and the dual variables to maintain the conditions dual feasibility and reverse weak duality. The OCS selects a winner in $L$ based on the $\pj$ it is passed in, and that winner is matched to vertex $j$. We give pseudocode for the entire algorithm at the end of this section in \Cref{fig:MatchOnline}.

We wish to emphasize that the primal-dual algorithm does not ``know" which offline vertex $j$ is matched too. Instead, it only maintains the probability that each $i$ is matched to a vertex at each weight-level. This separates the algorithm into a deterministic component, the primal-dual portion, and a randomized component, the OCS.

The remainder of this section is divided as follows. In \Cref{subsec:primal updates} and \Cref{subsec:dual updates}, we define how the primal and dual variables respectively are updated given $\pj$. In particular, $\beta_j$ is set to exactly the minimum value so that approximate dual feasiblity holds. In \Cref{subsec:determining p} we will discuss how $\pj$ is determined. In \Cref{subsec:reverse weak duality}, we will prove that reverse weak duality holds, completing the proof that our algorithm is $\Gamma$-competitive.

Throughout this section, $f: \R_{\geq 0} \to [0,1]$ and $r \in \R^+$ refer to the parameters of our OCS.

\subsection{Primal updates}
\label{subsec:primal updates}

For each $i \in L$ and weight-level $w \geq 0$, we maintain two random variables initialized as follows.
\begin{align*}
    \bx_i(w) \overset{\mathrm{init}}{\longleftarrow} 0 \quad \text{with probability $1$}\\
    \bc_i(w) \overset{\mathrm{init}}{\longleftarrow} 1 \quad \text{with probability $1$}
\end{align*}

Our algorithm maintains the \emph{distribution} over $(\bx_i(w), \bc_i(w))$ rather than just a single element of it. See \Cref{remark:dist-efficient} for the computational complexity of maintaining this distribution.

At time step $j$, our algorithm will pass a vector $\pj$ into the OCS. After doing so, it updates, for each $w \leq w_{ij}$,
\begin{align*}
    \bx_i(w) \leftarrow \bx_i(w) + (\pj)_i
\end{align*}
For each $w > w_{ij}$, with probability $1 -(1 - (\pj)_i)^r$ we set
\begin{align}
    \label{eq:reset}
    \begin{split}
    \bc_i(w) \leftarrow \bc_i(w) \cdot f(\bx_i(w)) \\
    \bx_i(w) \leftarrow 0 
    \end{split}
\end{align}
Otherwise, we leave $\bc_i(w)$ and $\bx_i(w)$ unchanged. This is the step that makes $\bc_i(w)$ and $\bx_i(w)$ random quantities rather than fixed.

At all steps, we maintain that
\begin{align*}
    y_i(w) = \E[\bc_i(w) \cdot f(\bx_i(w))]
\end{align*}
Note that these updates correspond to \Cref{def:OCS parameters}. For $S = \{j \in R| w_{ij} \geq w\}$, that definition guarantees the probability that $i$ is not matched to a vertex in $S$ is at most $y_i(w)$.

\begin{remark}
    \label{remark:dist-efficient}
    The full support of $(\bx_i(w), \bc_i(w))$ can be exponentially large, and so inefficient to maintain. However, our updates never need full access to the distribution over $(\bx_i(w), \bc_i(w)$. It is sufficient to maintain a distribution over the possible values of $\bx_i(w)$, and for each possible value for $\bx_i(w)=x$, the quantity $\Ex[\bc_i(w) \mid \bx_i(w) = x]$. Since there are only $|R|$ many possible values for $\bx_i(w)$, that information can be stored and operated on efficiently.
\end{remark}

\subsection{Dual updates}
\label{subsec:dual updates}
Our dual updates depend on some function $a: \R_{\geq 0} \to [0, \Gamma]$ with the properties in the below Lemma.
\begin{restatable}[]{lemma}{aproperties}
    \label{lem:a-properties}
    Let $f:\R_{\geq 0} \to [0,1]$ be thrice differentiable, convex, log concave, have a log concave derivative, and satisfy $f(0) = 1$. Define $a:\R_{\geq 0} \to [0,\Gamma] $  as
    \begin{align}
        \label{eq:def-a}
        a(x) \coloneqq f(x) - \int_{0}^\infty e^{-t} f(t + x) dt.
    \end{align}
    For $r$ satisfying \Cref{eq:r-upper-bound-thm} and $\Gamma$ defined in \Cref{eq:def-Gamma}, $a$ satisfies the below properties.
    \begin{align}
        \label{eq:a diffeq} & a'(x) = f'(x) + a(x)  &\forall x \geq 0 \\
        \label{eq:a-decr} &a'(x) \leq 0 &\forall x\geq 0  \\
        &\label{eq:a-gamma}a(0) = \Gamma \\
       \label{eq: f lower a} &f(x) \cdot a(0) = f(x) \cdot \Gamma \leq  a(x)  &\forall x \geq 0 \\
       \label{eq: r upper bound} & r\cdot(a(x) - f(x) \cdot \Gamma) \leq \Gamma - a(x)
       &\forall x \geq 0
    \end{align}
\end{restatable}
We prove \Cref{lem:a-properties} in \Cref{sec:prove-a-properties}. For the remainder of this section, let $a$ be as defined in \Cref{eq:def-a}.

For any $i \in L$ and weight-level $w \geq 0$, we maintain that
\begin{align*}
    \alpha_i(w) = \Ex\Big[\bc_i(w) \cdot a\big(\bx_i(w)\big)\Big],
\end{align*}
where $\bc_i(w),$ $\bx_i(w)$ are the same as in the primal updates. After $\alpha_i$ is updated at time step $j$, we set $\beta_j$ to the minimum value so that approximate dual feasibility holds. Specifically,
\begin{align}
    \label{eq:beta-set}
    \beta_j = \max\Big(0, \max_{i \in L} \, \Gamma \cdot w_{ij} - \alpha_i\Big)
\end{align}
For the above setting of $\beta_j$, dual feasibility will hold for $w_{ij}$ at time step $j$. It will remain satisfied as long $\alpha_i$ never decreases in future time steps. At any time step $j'$, if $(p^{(j')})_i = 0$, then $\alpha_i$ is not changed at that time step. The following proposition guarantees that if $(p^{(j')})_i > 0$, then $\alpha_i$ can only increase, which implies dual feasibility remains satisfied.

\begin{proposition}
    \label{prop:alpha-incr}
    Fix some time step $j$. For each $i \in L$, $w \geq 0$ and $p \in [0,1]$, let $\alpha_i[p]$ and $\alpha_i(w)[p]$ denote the value that would be assigned to $\alpha_i$ and $\alpha_i(w)$ respectively at the end of this time step if $(\pj)_i = p$. Then,
    \begin{enumerate}
        \item $\alpha_i[p]$ is a continuous and increasing function of $p$.
        \item $\alpha_i(w)[p]$ is a continuous and decreasing function of $p$.
    \end{enumerate}
    
\end{proposition}
\begin{proof}
    Recall that
    \begin{align*}
        \alpha_i \coloneqq \int_{0}^\infty (\Gamma - \alpha_i(w))dw,
    \end{align*}
    so it is sufficient to prove that $\alpha_i(w)[p]$ is a continuous and decreasing function of $p$. First, consider the case where $w \leq w_{ij}$. Then, using $\bx_i(w), \bc_i(w)$ as their values \emph{before} the updates from round $j$,
    \begin{align*}
         \frac{d}{dp} \alpha_i(w)[p] = \Ex[\bc_i(w) \cdot a'(\bx_i(w) + p)].
    \end{align*}
    This derivative exists and is at most $0$ by \Cref{eq:a-decr}, so $\alpha_i(w)[0]$ is continuous and decreasing.
    
    Next, consider the case where $w > w_{ij}$. Then,
    \begin{align*}
        \frac{d}{dp} \alpha_i(w)[p] &= \frac{d}{dp}(1 - (1 - p)^r) \cdot \Ex\bigg[\bc_i(w)\cdot  \big( f(\bx_i(w) \cdot a(0) - a(\bx_i(w))\big)\bigg]\\
        &=r \cdot(1-p)^{r-1}\cdot \Ex\bigg[\bc_i(w)\cdot  \big( f(\bx_i(w) \cdot a(0) - a(\bx_i(w))\big)\bigg]\\
        &\leq 0 \tag{\Cref{eq: f lower a}}
    \end{align*}
    Once again, the derivative exists and is less than $0$.
\end{proof}

\subsection{Determining the probability vector}
\label{subsec:determining p}

At time step $j$, our algorithm receives $w_{ij}$ for all $i \in L$. It then sets $\pj$ to a vector satisfying \Cref{lem:choose p}. 

Note that the $\pj$ given by \Cref{lem:choose p} can have a sum of elements that is less than $1$, but \Cref{def:continuous-OCS} requires the sum of the elements to be $1$. To remedy this, we just create a dummy vertex and assign any excess probability to it.
\begin{lemma}
    \label{lem:choose p}
    It is always possible to choose $\pj$ so that, after the updates corresponding to $\pj$,
    \begin{align}
        \label{eq:beta sum}
        \beta_j = \sum_{i \in L} (\pj)_i \cdot (\Gamma \cdot w_{ij} - \alpha_i)
    \end{align}
\end{lemma}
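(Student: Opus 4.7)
The plan is a water-filling argument on the offline vertices. For each $i \in L$ and $p \in [0,1]$, let $h_i(p)$ denote the value of $\alpha_i$ after performing the updates of \Cref{subsec:primal updates} with $(\pj)_i = p$; this is well-defined because $(\pj)_{i'}$ for $i' \neq i$ does not affect $\alpha_i$. First I would verify that $h_i$ is continuous and non-decreasing in $p$. Continuity is immediate since both the shift $\bx_i(w) \mapsto \bx_i(w) + p$ (for $w \leq w_{ij}$) and the reset probability $1 - (1 - rp)_+$ (for $w > w_{ij}$) depend continuously on $p$. For monotonicity, the shift only decreases $\alpha_i(w)$ because $a$ is non-increasing by \Cref{eq:a-decr}, and the reset replaces $\Ex[\bc_i(w)a(\bx_i(w))]$ with $\Gamma\Ex[\bc_i(w)f(\bx_i(w))]$, which is no larger by \Cref{eq: f lower a}; either way $\alpha_i = \int_0^\infty(\Gamma - \alpha_i(w))\,dw$ is non-decreasing. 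Hence the marginal value $m_i(p) \coloneqq \Gamma w_{ij} - h_i(p)$ is continuous and non-increasing in $p$.

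With this in hand, I would water-fill: for each $c \geq 0$ and each $i$, set $p_i(c) \coloneqq \sup\{p \in [0,1] : m_i(p) \geq c\}$ (with empty sup equal to $0$), and define $P(c) \coloneqq \sum_i p_i(c)$. Then $P$ is non-increasing in $c$ and equals $0$ for $c > \max_i m_i(0)$, so $c^* \coloneqq \inf\{c \geq 0 : P(c) \leq 1\}$ is well-defined. I tentatively assign $(\pj)_i \coloneqq p_i(c^*)$. The one subtlety is that $P$ may jump across $1$ at $c^*$, which can only happen because some $m_i$ is flat at level $c^*$ on an interval; in that case, I slide $(\pj)_i$ downward within that flat interval until $\sum_i (\pj)_i = 1$, without disturbing the identity $m_i((\pj)_i) = c^*$.

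The resulting $\pj$ satisfies by construction: (a) $(\pj)_i \in (0,1) \implies m_i((\pj)_i) = c^*$, (b) $(\pj)_i = 0 \implies m_i(0) \leq c^*$, and (c) $(\pj)_i = 1 \implies m_i(1) \geq c^*$. Moreover, since $\sum_i (\pj)_i \leq 1$, at most one vertex $i^*$ can have $(\pj)_{i^*} = 1$, in which case all other entries vanish. The lemma then follows from a short case analysis on $c^* > 0$ vs.\ $c^* = 0$ and on whether a saturated vertex exists: in every case, both $\beta_j = \max(0, \max_i m_i((\pj)_i))$ and $\sum_i (\pj)_i \cdot m_i((\pj)_i)$ evaluate to the same quantity---namely $c^*$ when no vertex is saturated (so $\sum_i (\pj)_i = 1$ and the active vertices share the common marginal $c^*$), and $m_{i^*}(1)$ when $i^*$ is saturated (so all probability mass sits on $i^*$).

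The main technical obstacle is not the water-filling itself but the bookkeeping around flat regions of $m_i$, where $P$ jumps and the naive choice $(\pj)_i \coloneqq p_i(c^*)$ may satisfy $\sum_i (\pj)_i \neq 1$, and carefully justifying monotonicity of $h_i$ directly from \Cref{eq:a-decr} and \Cref{eq: f lower a} (rather than invoking derivatives). Neither step is deep, but each requires care to make fully rigorous.
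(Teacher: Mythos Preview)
Your proposal is correct and follows the same water-filling approach as the paper. Two minor differences are worth noting. First, the paper leverages the strict-inequality clause of \Cref{eq:a-decr} (that $a'(x)<0$ whenever $f(x)>0$) to conclude that $\alpha_i[p]$ is \emph{strictly} increasing wherever $\Gamma w_{ij}-\alpha_i[p]>0$; this makes the inverse $\beta\mapsto p_i[\beta]$ continuous for $\beta>0$, so the intermediate value theorem applies directly and no flat-region bookkeeping is needed. Second, the paper does not cap $p$ at $1$ (it allows $p_i[0]=\infty$), so your saturated-vertex case never arises---once $\sum_i p_i[\beta^\star]=1$ with all terms nonnegative, each term is automatically at most $1$. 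Your version trades these two simplifications for a more self-contained argument that would go through even without the strict monotonicity.
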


\begin{proof}
    Let $\alpha_i[p]$ be as defined in \Cref{prop:alpha-incr}. For each $i \in L$ and $\beta \geq 0$, let $p_i[\beta]$ be the lowest possible value of $p \geq 0$ such that $\Gamma \cdot w_{ij} - \alpha_i[p] \leq \beta$. We note two properties about this function.
\begin{enumerate}
    \item $p_i[\beta]$ is a continuous and decreasing function of $\beta$. This follows from \Cref{prop:alpha-incr}'s guarantee that $\alpha_i[p]$ is a continuous and increasing function of $p$.
    \item For any $\beta \geq 0$, if $p_i[\beta] \geq 0$, then $\Gamma \cdot w_{ij} - \alpha_i[p] = \beta$. This follows from the continuity of $\alpha_i[p]$.
\end{enumerate}

First, consider the case where $\sum_{i \in L} p_i[0] \leq 1$. In this case, set $(\pj)_i = p_i[0]$ for each $i \in L$. Then, $\beta_j$ will be set to $0$ in \Cref{eq:beta-set} and
\begin{align*}
    \sum_{i \in L} (\pj)_i &\cdot (\Gamma \cdot w_{ij} - \alpha_i) \\
    &= \sum_{i \in L} (\pj)_i \cdot (0) \tag{$p_i[\beta] \geq 0 \implies w_{ij} - \alpha_i[p] = \beta$ applied with $\beta = 0$} \\
    &= 0 = \beta_j.
\end{align*}

Otherwise, $\sum_{i \in L} p_i[0] > 1$. Note that $\lim_{\beta \to \infty} \sum_{i \in L} p_i[\beta] = 0$. By the intermediate value theorem applied to $\beta \mapsto \sum_{i \in L} p_i[\beta]$, there is some choice $\beta^\star$ for which $\sum_{i \in L} p_i[\beta^\star] = 1$.

In this case, set $(\pj)_i = p_i[\beta^\star]$ for each $i \in L$. Then, $\beta_j$ will be set to $\beta^\star$ in \Cref{eq:beta-set} and
\begin{align*}
    \sum_{i \in L} (\pj)_i &\cdot (\Gamma \cdot w_{ij} - \alpha_i) \\
    &= \sum_{i \in L} (\pj_i) \cdot \beta^\star \tag{$p_i[\beta] \geq 0 \implies w_{ij} - \alpha_i[p] = \beta$ applied with $\beta = \beta^\star$} \\
    &= 1 \cdot \beta^\star \tag{$\sum_{i \in L}p_i[\beta^\star] = 1$} \\
    &= \beta^\star = \beta_j.
\end{align*}

\end{proof}

\subsection{Reverse weak duality}
\label{subsec:reverse weak duality}
Since we already have that approximate dual feasibility holds, we only need to prove that reverse weak duality holds to complete the proof that this algorithm is $\Gamma$-competitive. At the start of our algorithm we have that $P = D = 0$. Let $\Delta_j P$ and $\Delta_j D$ be the amount the primal and dual respectively increase after the update corresponding to $j \in R$. We will prove that $\Delta_j P \geq \Delta_j D$ which guarantees reverse weak duality holds.

\begin{lemma}
    For any $j \in R$, if $\pj$ is set as in \Cref{lem:choose p}, then $\Delta_j P \geq \Delta_j D$.
\end{lemma}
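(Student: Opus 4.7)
The plan is to decompose both $\Delta_j P$ and $\Delta_j D = \sum_i \Delta \alpha_i + \beta_j$ as integrals over the weight-level $w$, and show that the resulting integrand is nonnegative for each fixed $(w, i)$ pair. For $w \leq w_{ij}$, the update only increments $\bx_i(w)$ by $(\pj)_i$, contributing $\Ex[\bc_i(w)(f(\bx_i(w)) - f(\bx_i(w) + (\pj)_i))]$ to $\Delta_j P$ and $\Ex[\bc_i(w)(a(\bx_i(w)) - a(\bx_i(w) + (\pj)_i))]$ to $\Delta \alpha_i$. For $w > w_{ij}$, the primal is unchanged (because $f(0) = 1$ makes the reset in \Cref{eq:reset} a no-op for $y_i(w)$), while the reset contributes $q_i \Ex[\bc_i(w)(a(\bx_i(w)) - \Gamma f(\bx_i(w)))]$ to $\Delta \alpha_i$, where $q_i \coloneqq 1 - (1 - r(\pj)_i)_+$. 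I would also rewrite $\beta_j$ using \Cref{lem:choose p} as $\beta_j = \sum_i (\pj)_i(\Gamma w_{ij} - \alpha_i^{\mathrm{post}})$ and expand $\Gamma w_{ij} - \alpha_i^{\mathrm{post}} = \int_0^{w_{ij}} \alpha_i^{\mathrm{post}}(w)\,dw - \int_{w_{ij}}^{\infty}(\Gamma - \alpha_i^{\mathrm{post}}(w))\,dw$, so that $\beta_j$ also splits across the two weight regimes and the comparison $\Delta_j P \geq \Delta_j D$ becomes pointwise in $w$.

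For the $w \leq w_{ij}$ regime, I would introduce $g(x) \coloneqq f(x) - a(x)$ and apply the ODE \Cref{eq:a diffeq} to obtain $g'(x) = -a(x)$. The per-$i$ integrand then collapses to $\Ex\bigl[\bc_i(w) \int_{\bx_i(w)}^{\bx_i(w)+(\pj)_i}\bigl(a(t) - a(\bx_i(w) + (\pj)_i)\bigr)\,dt\bigr]$, which is nonnegative because $a$ is decreasing by \Cref{eq:a-decr}. This is essentially the standard primal-dual convexity argument, made clean by the defining ODE $a' = f' + a$.

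The main obstacle is the $w > w_{ij}$ regime, where the primal contributes nothing and both $\beta_j$ and the reset terms in $\Delta \alpha_i$ are positive. After using the identity $\Gamma - \Ex[\bc_i(w)((1-q_i)a(\bx_i(w)) + q_i \Gamma f(\bx_i(w)))] = M_i + q_i \Delta_i$ with $M_i \coloneqq \Gamma - \Ex[\bc_i(w) a(\bx_i(w))] \geq 0$ (nonnegative since $\bc_i(w) \leq 1$ and $a \leq \Gamma$ by \Cref{eq:a-gamma}, \Cref{eq:a-decr}) and $\Delta_i \coloneqq \Ex[\bc_i(w)(a(\bx_i(w)) - \Gamma f(\bx_i(w)))] \geq 0$ (nonnegative by \Cref{eq: f lower a}), the per-$i$ integrand reduces to $(\pj)_i M_i - (1 - (\pj)_i) q_i \Delta_i$. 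Multiplying \Cref{eq: r upper bound} by $\bc_i(w) \leq 1$ and taking expectation yields $r \Delta_i \leq M_i$. I would then verify the inequality $(\pj)_i M_i \geq (1 - (\pj)_i) q_i \Delta_i$ in the two cases: if $r(\pj)_i \leq 1$ then $q_i = r(\pj)_i$ and the inequality reduces to $M_i \geq (1 - (\pj)_i) r \Delta_i$, implied by $r \Delta_i \leq M_i$; if $r(\pj)_i > 1$ then $q_i = 1$ and $(\pj)_i > 1/r$, so $(\pj)_i M_i > M_i/r \geq \Delta_i \geq (1 - (\pj)_i)\Delta_i$. Putting the two regimes together shows $\Delta_j P - \Delta_j D$ is a sum of nonnegative integrals, completing the proof. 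The crux is that \Cref{eq: r upper bound} is precisely the quantitative constraint needed to make the $w > w_{ij}$ contribution of $\beta_j$ dominate the positive contribution of the reset in $\Delta \alpha_i$.
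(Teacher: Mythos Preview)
Your proof is correct and follows essentially the same approach as the paper: decompose $\Delta_j P - \Delta_j D$ by weight level, split into $w \le w_{ij}$ and $w > w_{ij}$, use the ODE $a' = f' + a$ together with monotonicity of $a$ for the first regime, and use \Cref{eq: r upper bound} for the second. The only noticeable difference is in the $w > w_{ij}$ bookkeeping: the paper first bounds $\alpha_i^{(j)}(w) \le \alpha_i^{(j-1)}(w)$ and replaces $q_i = \min(r(\pj)_i,1)$ by $r(\pj)_i$ (valid because it multiplies a nonpositive quantity via \Cref{eq: f lower a}), which collapses everything to the pointwise inequality $r(f(x)\Gamma - a(x)) + (\Gamma - a(x)) \ge 0$; you instead keep $q_i$ and the post-update $\alpha_i$, expand $\Gamma - \alpha_i^{\mathrm{post}}(w) = M_i + q_i\Delta_i$ exactly, and finish with a short two-case analysis. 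Both routes are equivalent.
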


\begin{proof}
    For each primal and dual variable, we use a superscript of $^{(j)}$ to refer to its value after the $j^{\mathrm{th}}$ update. Similarly, a superscript of $^{(j - 1)}$ refers to its value before the update. For any variable $\mathrm{var}$, we use the shorthand $\Delta_j \mathrm{var}$ to refer $\mathrm{var}^{(j)} - \mathrm{var}^{(j-1)} $.

    \begin{align*}
        \Delta_j P - \Delta_j D &= \left(-\sum_{i \in L} \int_0^\infty \Delta_j y_i(w) dw\right) - \left(-\sum_{i \in L}\int_{0}^\infty \Delta_j\alpha_i(w) dw + \beta_j\right) \\
        &=  \sum_{i \in L} \left(-\int_0^\infty (\Delta_j y_i(w) -  \Delta_j\alpha_i(w)) dw\right) - \left(\sum_{i \in L} (\pj)_i \cdot (\Gamma \cdot w_{ij} - \alpha_i^{(j)}) \right)
    \end{align*}
    Recall that $\alpha_i = \int_0^\infty (\Gamma - \alpha_i(w))dw$. Therefore,
    \begin{align*}
        \Gamma \cdot w_{ij} - \alpha_i^{(j)} &= \left(\int_0^{w_{ij}} \Gamma \cdot dw\right) - \left(\int_0^\infty (\Gamma - \alpha_i^{(j)}(w))dw \right) \\
        &= \left(\int_0^{w_{ij}} \alpha_i^{(j)}(w) dw\right) - \left(\int_{w_{ij}}^\infty (\Gamma - \alpha_i^{(j)}(w))dw \right)
    \end{align*}
    Combining the above equations, we have that
    \begin{align*}
        \Delta_j P - \Delta_j D = &\sum_{i \in L} \left( \int_{0}^{w_{ij}} \left(- \Delta_j y_i(w) + \Delta_j\alpha_i(w) - (\pj)_i  \cdot \alpha_i^{(j)}(w)\right)dw\right) \\ +&\sum_{i \in L}\left(\int_{w_{ij}}^{\infty} \left(- \Delta_j y_i(w) + \Delta_j\alpha_i(w) + (\pj)_i \cdot \big(\Gamma - \alpha_i^{(j)}(w)\big)\right)dw \right)
    \end{align*}

   We will show that for any choice of $w \geq 0$ and $i \in L$, the corresponding term in the above equation is positive. This implies that $\Delta_j P - \Delta_j D$ is also positive.

    \pparagraph{Case 1: $w \leq w_{ij}$}. We wish to show that $- \Delta_j y_i(w) + \Delta_j\alpha_i(w) - (\pj)_i  \cdot\alpha_i^{(j)}(w)\geq 0$. We expand each term separately.
    \begin{align*}
        \Delta_j y_i(w) &= y_i^{(j)}(w) - y_i^{(j-1)}(w) \\
        &= \Ex\left[\bc_i^{(j)}(w) \cdot (f(\bx_i^{(j)}(w) - f(\bx_i^{(j)}(w) - (\pj)_i))\right] \\
        &= \Ex\left[\int_{\bx_i^{(j)}(w) - (\pj)_i}^{\bx_i^{(j)}(w)} \bc_i^{(j)}(w) \cdot f'(x) dx\right]
    \end{align*}
    Similarly,
    \begin{align*}
        \Delta_j \alpha_i(w) &= \alpha_i^{(j)}(w) - \alpha_i^{(j-1)}(w) \\
        &= \Ex\left[\bc_i^{(j)}(w) \cdot (a(\bx_i^{(j)}(w) - a(\bx_i^{(j)}(w) - (\pj)_i))\right] \\
        &= \Ex\left[\int_{\bx_i^{(j)}(w) - (\pj)_i}^{\bx_i^{(j)}(w)} \bc_i^{(j)}(w) \cdot a'(x) dx\right]
    \end{align*}
    Finally,
    \begin{align*}
        (\pj)_i  \cdot\alpha_i^{(j)}(w) &= (\pj)_i \cdot \Ex\left[\bc_i^{(j)}(w) \cdot a(\bx_i^{(j)}(w)) \right] \\
        &= \int_{\bx_i^{(j)}(w) - (\pj)_i}^{\bx_i^{(j)}(w)} \Ex\left[\bc_i^{(j)}(w) \cdot a(\bx_i^{(j)}(w)) \right]dx \\
        &=  \Ex\left[\int_{\bx_i^{(j)}(w) - (\pj)_i}^{\bx_i^{(j)}(w)} \bc_i^{(j)}(w) \cdot a(\bx_i^{(j)}(w)) dx\right]
    \end{align*}
    Comparing the above three equations, it is enough to show that for any $x < x_i^{(j)}(w)$, that $-f'(x) + a'(x) - a(x_i^{(j)}) \geq 0$.
    \begin{align*}
        -f'(x) + a'(x) - a(x_i^{(j)}) &\geq -f'(x) + a'(x) - a(x) \tag{$a$ is decreasing} \\
         &= 0 \tag{\Cref{eq:a diffeq}}
    \end{align*}
    This completes the first case.
    
    \pparagraph{Case 2: $w > w_{ij}$}. We wish to show that $- \Delta_j y_i(w) + \Delta_j\alpha_i(w) + (\pj)_i \cdot \big(\Gamma - \alpha_i^{(j)}(w)\big) \geq 0$. For $w > w_{ij}$, $y_i(w)$ does not change. Hence, we only need to expand the other two terms. Recall that, for $w \geq w_{ij}$, we set $\bc_i(w) \leftarrow \bc_i(w) \cdot f(\bx_i(w))$ and $\bx_i(w) \leftarrow 0$ with probability $1 - (1 - (\pj)_i)^r$. Otherwise, we leave the parameters unchanged. Expanding,
    \begin{align*}
        \Delta_j\alpha_i(w) = \Ex\left[(1 - (1 - (\pj)_i)^r) \cdot \bc_i^{(j-1)}(w) \cdot \big(f(\bx_i^{(j-1)}(w)) \cdot a(0) - a(\bx_i^{(j-1)}(w))\big)\right].
    \end{align*}
    Recall from \Cref{eq: f lower a} that $f(x) \cdot a(0) \leq a(x)$ for all $x \geq 0$. Therefore, we may replace the $1 - (1 - (\pj)_i)^r$ factor with an upper bound on it, in this case $r \cdot (\pj)_i$, and bound,
    \begin{align*}
        \Delta_j\alpha_i(w) \geq \Ex\left[r \cdot (\pj)_i \cdot \bc_i^{(j-1)}(w) \cdot \big(f(\bx_i^{(j-1)}(w)) \cdot a(0) - a(\bx_i^{(j-1)}(w))\big)\right].
    \end{align*}
    Using the fact that $\alpha_i^{(j)}(w) \leq \alpha_i^{(j-1)}(w)$, which follows from \Cref{prop:alpha-incr}, we expand the other term of our desired inequality.
    \begin{align*}
        (\pj)_i \cdot \big(\Gamma - \alpha_i^{(j)}(w)\big) &\geq (\pj)_i \cdot \big(\Gamma - \alpha_i^{(j-1)}(w)\big) \\ &= (\pj)_i \cdot \Ex\left[\Gamma -   \bc_i^{(j-1)}(w) \cdot a(\bx_i^{(j-1)}(w))\right] 
    \end{align*}
    The desired inequality holds as long as, for all $x \geq 0$,
    \begin{align*}
        r \cdot (f(x) \cdot a(0) - a(x)) + (\Gamma - a(x)) \geq 0.
    \end{align*}
    which is equivalent to \Cref{eq: r upper bound}. This completes the proof of the second case and reverse weak duality.
\end{proof}

\Crefname{enumi}{Step}{Steps}
\begin{figure}
  \captionsetup{width=.9\linewidth}
\begin{tcolorbox}[colback = white,arc=1mm, boxrule=0.25mm]
\vspace{3pt} 
$\textsc{MatchOnline}(L)$
\begin{enumerate}
    \item \textbf{Initialization:} For each $i \in L$, and $w \geq 0$, initialize the distributions $\bx_i(w), \bc_i(w)$ as
    \begin{align*}
        \bx_i(w) \overset{\mathrm{init}}{\longleftarrow} 0 \quad \text{with probability $1$}\\
        \bc_i(w) \overset{\mathrm{init}}{\longleftarrow} 1 \quad \text{with probability $1$}
    \end{align*}
    The algorithm maintains the \emph{distribution} of possible $(\bx_i(w), \bc_i(w))$ over the randomness in \Cref{step:update-dist}.
    \item \textbf{Invariants:} \label{step:invariants}For each $i \in L$ and $w \geq 0$, maintain the primal variables, where the expectations are over the distribution of $(\bx_i(w), \bc_i(w))$,
    \begin{align*}
        y_i(w) = \E[\bc_i(w) \cdot f(\bx_i(w))]
    \end{align*}
    and dual variables
    \begin{align*}
        \alpha_i(w) = \Ex\Big[\bc_i(w) \cdot a\big(\bx_i(w)\big)\Big],
    \end{align*}
    where $a$ is the function from \Cref{lem:a-properties}.
    \item Upon receiving $w_{ij}$ for each $i \in L$
    \begin{enumerate}
        \item \textbf{Determine input to OCS:} Using \Cref{lem:choose p}, determine $\pj$ to satisfy \Cref{eq:beta sum}. This choice of $\pj$ is a function of the current state of the algorithm as well as $w_{ij}$ for each $i \in L$.
        \item \textbf{Update distribution of $(\bx_i(w), \bc_i(w))$:} \label{step:update-dist}For each $i \in L$ and $w \leq w_{ij}$, update
        \begin{align*}
            \bx_i(w) \leftarrow \bx_i(w) + (\pj)_i.
        \end{align*}
        For each $w > w_{ij}$, with probability $1 -(1 - (\pj)_i)^r$ set
        \begin{align*}
            \begin{split}
            \bc_i(w) \leftarrow \bc_i(w) \cdot f(\bx_i(w)) \\
            \bx_i(w) \leftarrow 0 
            \end{split}
        \end{align*}
        Otherwise, leave $\bc_i(w)$ and $\bx_i(w)$ unchanged. This is the step that makes $\bc_i(w)$ and $\bx_i(w)$ random variables rather than fixed.
        \item \textbf{Invariants:} Maintain the primal and dual variables as in \Cref{step:invariants}.
        \item \textbf{Match:} Pass $\pj$ into the OCS and match $j$ to the winning vertex.
    \end{enumerate}
\end{enumerate}
\end{tcolorbox}
\caption{The algorithm for edge-weighted online bipartite matching. For computational efficiency concerns, see \Cref{remark:step function,remark:dist-efficient}.}
\label{fig:MatchOnline}
\end{figure}
\Crefname{enumi}{Item}{Item}

\section{From discrete to continuous OCSes}
\label{sec:cont-to-disc-OCS}

In this section, we show how to construct a continuous OCS from a discrete OCS. That will allow us to prove \Cref{cor:Gamma-fahrbach,thm:discrete-competitive-ratio-informal}.

\begin{lemma}
    \label{lem:disc-to-cont}
    If there is an $(F, m)$-discrete OCS for convex $F$, there is an $(f, r=m)$-continuous OCS for
    \begin{align}
        \label{eq:disc-to-cont}
        f(x) \coloneqq \Ex_{\bk \sim \Poi(\lambda = m \cdot x)}[F(\bk)]
    \end{align}
\end{lemma}
The continuous OCS algorithm in \Cref{lem:disc-to-cont} is simple. When the continuous OCS receives the probability vector $\pj$, it sets the multiset $A_j = \{\bi_1, \ldots, \bi_m\}$ where each $\bi_\ell$ is sampled iid from $\Categorical(\pj)$. It passes this multiset into the $(F,m)$-discrete OCS, and then selects the same vertex selected by the discrete OCS. We include pseudocode for this reduction in \Cref{fig:disc-from-cont}.

\begin{figure}[h]
  \captionsetup{width=.9\linewidth}
\begin{tcolorbox}[colback = white,arc=1mm, boxrule=0.25mm]
\vspace{3pt} 
$\mathrm{ContinuousOCS}(\mathcal{O}, m)$
\begin{itemize}
    \item[] Upon receiving $\pj$
    \begin{enumerate}
        \item \textbf{Sample:} For each $k = 1, \ldots, m$, sample independently $\bi_k \sim \mathrm{Categorical(\pj)}$.
        \item \textbf{Use discrete OCS:} Pass the mulitiset $A_j = \{\bi_1, \ldots, \bi_m\}$ into $\mathcal{O}$ and return the same winner as it.
    \end{enumerate}
\end{itemize}
\end{tcolorbox}
\caption{A construction for a continuous OCS using black-box access to a $m$-discrete OCS $\mathcal{O}$.}
\label{fig:disc-from-cont}
\end{figure}

Before analyzing this OCS, we'll need some well-known facts about the convex order (recall \Cref{def:cx}).
\begin{fact}[Theorem 3.A.4 of \cite{SS07}]
    \label{fact:cx-coupling}
    For any univariate random variables $\ba, \bb$, if there exists a coupling $(\hat{\ba}, \hat{\bb})$ where $\ba \overset{d}{=} \hat{\ba} $ and $\bb \overset{d}{=} \hat{\bb}$ such that
    \begin{align*}
        \Ex[\hat{\bb} \mid \hat{\ba}] = \hat{\ba}
    \end{align*}
    then $\ba \cx \bb$.
\end{fact}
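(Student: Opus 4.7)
The plan is to reduce the statement to a one-line application of conditional Jensen's inequality. Let $\phi : \R \to \R$ be an arbitrary convex function for which $\Ex[\phi(\ba)]$ and $\Ex[\phi(\bb)]$ exist. Since $\hat{\ba} \overset{d}{=} \ba$ and $\hat{\bb} \overset{d}{=} \bb$, we have $\Ex[\phi(\ba)] = \Ex[\phi(\hat{\ba})]$ and $\Ex[\phi(\bb)] = \Ex[\phi(\hat{\bb})]$, so by \Cref{def:cx} it suffices to prove $\Ex[\phi(\hat{\ba})] \leq \Ex[\phi(\hat{\bb})]$. Once this is done for every convex $\phi$, the conclusion $\ba \cx \bb$ follows immediately.

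The key step is to view the coupling $(\hat{\ba}, \hat{\bb})$ as a two-step martingale: the hypothesis $\Ex[\hat{\bb} \mid \hat{\ba}] = \hat{\ba}$ says precisely that the conditional distribution of $\hat{\bb}$ given $\hat{\ba}$ has mean $\hat{\ba}$. Applying conditional Jensen's inequality to the convex function $\phi$ along this conditional distribution gives, pointwise in $\hat{\ba}$,
\begin{align*}
    \phi(\hat{\ba}) \;=\; \phi\!\left(\Ex[\hat{\bb} \mid \hat{\ba}]\right) \;\leq\; \Ex\!\left[\phi(\hat{\bb}) \,\middle|\, \hat{\ba}\right].
\end{align*}
Taking expectations over $\hat{\ba}$ and using the tower property yields $\Ex[\phi(\hat{\ba})] \leq \Ex[\phi(\hat{\bb})]$, which closes the argument.

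The only technicalities are measurability and integrability of $\phi(\hat{\bb})$ so that the conditional expectation and Jensen's inequality are valid; both are standard given that $\phi$ is convex (hence Borel measurable) and that $\Ex[\phi(\bb)]$ is assumed to exist, so $\phi(\hat{\bb})$ is integrable. I do not expect any serious obstacle here — the result is essentially a direct encoding of Jensen's inequality for martingales, and the content of the lemma is really the reduction from ``$\ba \cx \bb$'' to ``exists a mean-preserving coupling,'' which is exactly the hypothesis we are handed.
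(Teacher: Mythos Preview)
Your proof is correct and is the standard conditional-Jensen argument for this direction of the convex-order characterization. The paper itself does not supply a proof of this statement; it is recorded as a \emph{Fact} with a citation to \cite{SS07}, so there is nothing to compare against beyond noting that your argument is precisely the textbook one.
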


\begin{fact}[Theorem 3.A.12(d) of \cite{SS07}]
    \label{fact:cx-closed-sum}
    The convex order is closed under convolution: Let $\ba_1, \ldots, \ba_k$ and $\bb_1, \ldots, \bb_k$ each be independent univariate random variables. If $\ba_\ell \cx \bb_\ell$ for each $\ell = 1, \ldots, m$, then
    \begin{align*}
        \sum_{\ell = 1}^m \ba_\ell \cx \sum_{\ell=1}^m \bb_\ell.
    \end{align*}
\end{fact}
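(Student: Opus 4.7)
The plan is to prove the statement by induction on $k$, reducing to the two-summand case. The base case $k=1$ is immediate from the hypothesis. For the inductive step, set $\bA = \sum_{\ell=1}^{k-1} \ba_\ell$ and $\bB = \sum_{\ell=1}^{k-1} \bb_\ell$; by the inductive hypothesis $\bA \cx \bB$, and after realizing all variables on a common product space, $(\bA, \bB)$ is independent of $(\ba_k, \bb_k)$. So it suffices to prove the two-variable lemma: whenever $\ba \cx \bb$, $\bc \cx \bd$, and the pair $(\ba,\bb)$ is independent of $(\bc,\bd)$, then $\ba + \bc \cx \bb + \bd$.

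For the two-variable lemma, fix any convex $\phi: \R \to \R$ for which both sides make sense, and establish the chain
\[
\Ex[\phi(\ba + \bc)] \le \Ex[\phi(\bb + \bc)] \le \Ex[\phi(\bb + \bd)].
\]
For the first inequality, note that for each fixed $c \in \R$ the shift $\phi_c(x) \coloneqq \phi(x + c)$ is convex on $\R$, so $\ba \cx \bb$ yields $\Ex[\phi_c(\ba)] \le \Ex[\phi_c(\bb)]$, i.e.\ $\Ex[\phi(\ba + c)] \le \Ex[\phi(\bb + c)]$. Conditioning on $\bc$, using the independence of $(\ba,\bb)$ from $\bc$ to apply this bound pointwise in $c$, and then integrating against the law of $\bc$ via Fubini gives $\Ex[\phi(\ba + \bc)] \le \Ex[\phi(\bb + \bc)]$. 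The second inequality is symmetric: now fix $\bb$, observe that $x \mapsto \phi(\bb + x)$ is convex for each realization of $\bb$, invoke $\bc \cx \bd$ pointwise, and integrate over $\bb$. Chaining the two inequalities yields $\Ex[\phi(\ba + \bc)] \le \Ex[\phi(\bb + \bd)]$, which by definition of $\cx$ gives the lemma.

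The argument is essentially bookkeeping, so there is no serious obstacle, but two points deserve care. First, the definition of $\cx$ only guarantees $\Ex[\phi(\ba)] \le \Ex[\phi(\bb)]$ for convex $\phi$ whose expectations exist; when invoking this for the shifted functions $\phi_c$ one must verify the same existence, which is immediate since shifting by a constant does not affect integrability of a convex function against the laws of $\ba$ or $\bb$. Second, the induction requires that independence of the pairs $(\ba_\ell, \bb_\ell)$ across $\ell$ translates into independence of $(\bA, \bB)$ from $(\ba_k, \bb_k)$; this is immediate by working on the product of the marginal coupling spaces and using that functions of independent coordinates remain independent.
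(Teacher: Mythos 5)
Your proof is correct, but note that the paper does not prove this statement at all: it is quoted as Theorem 3.A.12(d) of the Shaked--Shanthikumar book \cite{SS07} and used as a black box. What you have written is essentially the standard textbook argument for that result: reduce by induction to two summands, then prove $\ba + \bc \cx \bb + \bd$ via the chain $\Ex[\phi(\ba+\bc)] \le \Ex[\phi(\bb+\bc)] \le \Ex[\phi(\bb+\bd)]$, using that $x \mapsto \phi(x+c)$ is convex for each fixed shift, conditioning on the independent summand, and integrating. Your handling of the coupling in the induction is also fine: since $\cx$ depends only on marginal laws and each sequence is internally independent, you may realize all $2k$ variables on a product space with the pairs independent across $\ell$ without changing the laws of the two sums. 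The one place where I would be slightly more careful than ``immediate'' is the existence of the middle term $\Ex[\phi(\bb+\bc)]$: the definition of $\cx$ only quantifies over $\phi$ whose expectations exist, so you should observe that a convex $\phi$ is bounded below by an affine function and that $\ba \cx \bb$ forces equal (finite) means (apply $\phi(x)=\pm x$), so all the expectations in the chain are well defined in $(-\infty,+\infty]$ and the conditional inequalities integrate correctly; in the paper's actual application (Proposition on Bernoulli sums versus Poisson, with $F$ taking values in $[0,1]$) the test functions are bounded, so this subtlety is harmless.
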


The following proposition is a simple consequence of the above two facts.
\begin{proposition}
    \label{prop:ber-sum}
    For any $p_1, \ldots, p_n \in [0,1]$, let $\bx_i \sim \Ber(p_i)$ and $\by \sim \Poi(\lambda = \sum_{i = 1}^n p_i)$. Then,
    \begin{align*}
        \sum_{i=1}^n \bx_i \cx \by.
    \end{align*}
\end{proposition}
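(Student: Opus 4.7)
The plan is to reduce the $n$-variable claim to the single-variable case $\Ber(p) \cx \Poi(p)$ via the two cited facts. First, note that by the additivity of independent Poissons, if we let $\by_1, \ldots, \by_n$ be independent with $\by_i \sim \Poi(p_i)$, then $\sum_{i=1}^n \by_i$ has the same distribution as $\by \sim \Poi(\sum_i p_i)$. By Fact \ref{fact:cx-closed-sum}, it then suffices to prove the single-variable comparison $\bx_i \cx \by_i$ for each $i$, and sum.

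For the single-variable comparison, I would invoke Fact \ref{fact:cx-coupling} by exhibiting a coupling $(\hat{\bx}, \hat{\by})$ of $\Ber(p)$ with $\Poi(p)$ (writing $p = p_i$ for brevity) satisfying $\Ex[\hat{\by}\mid \hat{\bx}] = \hat{\bx}$. This forces two requirements: conditional on $\hat{\bx}=0$ we must have $\hat{\by}=0$ almost surely, and conditional on $\hat{\bx}=1$ the variable $\hat{\by}$ must be supported on $\Z_{\geq 0}$ with mean exactly $1$. I would define the conditional distribution of $\hat{\by}$ given $\hat{\bx}=1$ by
\begin{align*}
    \Pr[\hat{\by}=0 \mid \hat{\bx}=1] = \frac{e^{-p} - (1-p)}{p}, \qquad \Pr[\hat{\by}=k \mid \hat{\bx}=1] = \frac{p^{k-1} e^{-p}}{k!} \text{ for } k \geq 1,
\end{align*}
which is a valid probability distribution since $e^{-p} \geq 1-p$ for all $p \in [0,1]$.

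It then remains to check two routine identities: (i) that the induced marginal of $\hat{\by}$ is indeed $\Poi(p)$, which follows by summing the two cases $\hat{\bx}=0$ and $\hat{\bx}=1$ at each value $k$; and (ii) that $\Ex[\hat{\by}\mid \hat{\bx}=1] = 1$, which is the identity $\sum_{k \geq 1} k \cdot p^{k-1} e^{-p}/k! = e^{-p} \sum_{k \geq 1} p^{k-1}/(k-1)! = 1$. Given these, Fact \ref{fact:cx-coupling} yields $\bx_i \cx \by_i$, and Fact \ref{fact:cx-closed-sum} combined with Poisson additivity closes the argument.

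There is no real obstacle here; the only small subtlety is writing down the conditional distribution of $\hat{\by}$ given $\hat{\bx}=1$ in a way that simultaneously has the right marginal and the right conditional mean, but this is pinned down uniquely by the two constraints and the calculation above shows it is consistent.
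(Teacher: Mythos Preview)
Your proposal is correct and follows essentially the same approach as the paper: reduce to the single-variable case via Poisson additivity and \Cref{fact:cx-closed-sum}, then invoke \Cref{fact:cx-coupling} with a martingale coupling of $\Ber(p)$ and $\Poi(p)$. The paper presents the same coupling but in the reverse direction (sample $\hat{\by}$ first, then set $\hat{\bx}=0$ with conditional probability $(1-p)/e^{-p}$ when $\hat{\by}=0$), which yields the identical joint distribution you wrote down.
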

\begin{proof}
    By \Cref{fact:cx-closed-sum}, it is enough to prove that $\bx_i \cx \by_i$ where $\by_i \sim \Poi(\lambda = p_i)$. First, we note that $\by_i$ is more likely to be $0$ than $\bx_i$.
    \begin{align*}
        \Pr[\by_i = 0] &= \frac{p_i^k \cdot e^{-{p_i}}}{0!} = e^{-(p_i)} \geq (1 - p_i) = \Pr[\bx_i = 0].
    \end{align*}
    Consider the following coupling: First draw $\hat{\by}_i \sim \Poi(p_i)$. If $\hat{\by_i} = 0$, then with probability $(\Pr[\bx_i = 0])/(\Pr[\by_i = 0])$, set $\hat{\bx_i} = 0$ as well. Otherwise, set $\hat{\bx_i} = 1$. 
    
    We verify that this coupling meets the criteria of \Cref{fact:cx-coupling} which implies $\bx_i \cx \by_i$. By design, we have that $\hat{\by_i} \overset{d}{=} \by_i$ and $\hat{\bx_i} \overset{d}{=} \bx_i$. If $\hat{\bx_i} = 0$, then $\hat{\by_i} = 0$ as well, so
    \begin{align*}
        \Ex[\hat{\by_i} \mid \hat{\bx_i} = 0] = 0
    \end{align*}
    Then, using the law of total expectation, we have that
    \begin{align*}
        \Ex[\hat{\by_i} \mid \hat{\bx_i} = 0] \cdot \Pr[\hat{\bx_i} = 0] + \Ex[\hat{\by_i} \mid \hat{\bx_i} = 1] \cdot \Pr[\hat{\bx_i} = 1] = \Ex[\hat{\by_i}]
    \end{align*}
    Substituting $\Ex[\hat{\by_i} \mid \hat{\bx_i} = 0] = 0$, $\Ex[\hat{\by_i}] = p_i$, and $\Pr[\hat{\bx_i} = 1] = p_i$, we have that
    \begin{align*}
         \Ex[\hat{\by_i} \mid \hat{\bx_i} = 1] = 1.
    \end{align*}
    Therefore, by \Cref{fact:cx-coupling}, $\bx_i \cx \by_i$.
\end{proof}

Now, we are ready to prove \Cref{lem:disc-to-cont}.
\begin{proof}
    Let $\mathcal{A}$ be some $(F, m)$ discrete OCS. We construct a continuous OCS $\mathcal{B}$ as in \Cref{fig:disc-from-cont}: Whenever $\mathcal{B}$ receives the probability vector $\pj$, it sets $\bA_j = \{\bi_1, \ldots, \bi_m\}$ where each $\bi_\ell$ is sampled iid from $\Categorical(\pj)$. It passes this multiset into $\mathcal{A}$, and selects the same winner that $\mathcal{A}$ selected. By \Cref{def:discrete-parameters}, for any disjoint consecutive sequences of time steps $S_1, \ldots, S_k$,
    \begin{align*}
        \Pr[i \text{ never selected in $S_1 \cup \ldots \cup S_k$}] \leq \prod_{\ell = 1}^k F(\textbf{count}_i(S_\ell))
    \end{align*}
    where $\textbf{count}_i(S_\ell)$ is a random variable defined as
    \begin{align*}
        \textbf{count}_i(S_\ell) \coloneqq \sum_{j \in S_{\ell}}(\br_j(i)) \quad \quad \text{where $\br_j(i)$ is the multiplicity of $i$ in $\bA_j$}.
    \end{align*}
    Based on how we defined $\bA_j$, we have that $\br_j(i)$ is the sum of $m$ independent variables distributed according to $\Ber((\pj)_i)$. Therefore, by \Cref{prop:ber-sum},
    \begin{align*}
        \textbf{count}_i(S_\ell) \cx \bk_{\ell} \quad \quad \text{where } \bk_{\ell} \sim \Poi\left(m \cdot \sum_{j \in S_{\ell}} (\pj)_i \right).
    \end{align*}
    
    We are ready to show that $\mathcal{B}$ meets the requirements of \Cref{def:OCS parameters}. First, we consider some consecutive $S = \{j_1, \ldots, j_2\}$. Since $F$ is convex,
    \begin{align*}
        \Pr[i \text{ not selected in }S] &= \Ex[F(\textbf{count}_i(S))] \\
        &\leq \Ex[F(\bk)] \quad \quad \text{where } \bk \sim \Poi\left(m \cdot \sum_{j \in S} (\pj)_i \right) \tag{$\textbf{count}_i(S) \cx \bk$} \\
        &= f\left(  \sum_{j \in S} (\pj)_i\right) \tag{\Cref{eq:disc-to-cont}}
    \end{align*}
    For the second requirement, ``partially ignore small gaps," of \Cref{def:OCS parameters}, consider any disjoint $S_1, S_2 \subseteq \N$ where every element of $S_1$ precedes every element of $S_2$. Define $j^\star$ as in \Cref{def:OCS parameters}. Then, the probability $\br_{j^\star}(i) = 0$ is $w = (1 - (p^{(j^\star)})_i)^m$. If $\br_{j^\star}(i) = 0$ then the probability $i$ is selected in $S_1 \cup S_2$ is the same as if the time step $j^\star$ didn't exist (since $\mathcal{A}$ won't see $i$ in it). Hence, \Cref{eq:ignore-gaps} holds.
\end{proof}

In order to give a formal version of \Cref{thm:discrete-competitive-ratio-informal} we'll need to know the derivatives of $f$. There is a convenient relation between the \emph{discrete derivatives} of $F$ and the \emph{derivatives} of $f$.

\begin{lemma}
    \label{lem:cont-to-disc-deriv}
    Choose any $F: \Z_{\geq 0} \to \R$, $m \in Z$ and let $f:\R_{\geq 0} \to \R$ be
    \begin{align*}
        f(x) = \Ex_{\bk \sim \Poi(mx)}[F(\bk)].
    \end{align*}
    Then, for any $\ell \in \N$ and $x \in \R_{\geq 0}$, the derivatives satisfy the relation
    \begin{align*}
        f^{(\ell)}(x) = m^\ell \cdot \Ex_{\bk \sim \Poi(mx)}[\Delta^{(\ell)} F(\bk)].
    \end{align*}
\end{lemma}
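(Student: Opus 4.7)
The plan is to prove this by induction on $\ell$, with the base case $\ell=0$ being simply the definition of $f$. The entire content of the lemma then reduces to the single identity
\begin{align*}
    \frac{d}{dx}\Ex_{\bk \sim \Poi(mx)}[G(\bk)] = m \cdot \Ex_{\bk \sim \Poi(mx)}[(\Delta G)(\bk)]
\end{align*}
for any reasonable $G: \Z_{\geq 0} \to \R$. Once this identity is established, applying it with $G = \Delta^{(\ell)}F$ immediately gives the inductive step, since $\Delta$ and $\Delta^{(\ell)}$ commute and stack into $\Delta^{(\ell+1)}$, picking up one more factor of $m$ each time.

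To prove the identity, I would expand the expectation as an infinite sum using the Poisson pmf $p_k(x) \coloneqq \frac{(mx)^k e^{-mx}}{k!}$, and compute
\begin{align*}
    p_k'(x) = m \bigl(p_{k-1}(x) - p_k(x)\bigr) \quad \text{for } k \geq 1,
\end{align*}
together with $p_0'(x) = -m \, p_0(x)$, which fits the same formula if we set $p_{-1} \equiv 0$. Then
\begin{align*}
    \frac{d}{dx}\sum_{k=0}^\infty G(k) p_k(x) = m \sum_{k=0}^\infty G(k) p_{k-1}(x) - m\sum_{k=0}^\infty G(k) p_k(x),
\end{align*}
and reindexing the first sum via $j = k-1$ (which is legal because $p_{-1}=0$) produces
\begin{align*}
    m \sum_{k=0}^\infty \bigl(G(k+1) - G(k)\bigr) p_k(x) = m \cdot \Ex_{\bk \sim \Poi(mx)}[(\Delta G)(\bk)],
\end{align*}
which is the desired identity.

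The only real subtlety is justifying the exchange of differentiation and infinite summation, which I would handle by a routine dominated convergence / uniform convergence argument on compact intervals of $x$ — the Poisson pmf decays faster than any polynomial, and $F$ takes values in $[0,1]$ throughout the paper, so the tails are well-controlled and termwise differentiation is valid. This is the most technical piece, but it is truly routine and not the conceptual heart of the argument; the conceptual heart is the one-step identity above, which simply reflects the classical fact that differentiating a Poisson expectation in its rate parameter produces a discrete derivative of the integrand.
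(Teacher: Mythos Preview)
Your proposal is correct and takes essentially the same approach as the paper: both argue by induction on $\ell$, reducing to the $\ell=1$ case, and prove that case by expanding the Poisson pmf, differentiating termwise, and reindexing one of the resulting sums. Your use of the convention $p_{-1}\equiv 0$ and the explicit formula $p_k'(x)=m(p_{k-1}(x)-p_k(x))$ is a slightly cleaner packaging of the same computation the paper carries out by hand.
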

\begin{proof}
    It is enough to prove the case where $\ell = 1$, as larger $\ell$ follow by induction. The remainder of this proof is algebraic manipulations. We expand
    \begin{align*}
        f'(x) &= \frac{d}{dx}\left(\sum_{k = 0}^\infty \frac{e^{-mx} \cdot (mx)^k}{k!} \cdot F(k) \right) \\
        &=  \sum_{k = 0}^\infty \left(\frac{d}{dx} e^{-mx} x^k \right)\cdot \frac{m^k \cdot F(k)}{k!} 
    \end{align*}
    We'll use the linearity of derivatives. To do so, we first compute the derivative of the above coefficients.
    \begin{align*}
        \frac{d}{dx} e^{-mx} x^k = \begin{cases}
        -me^{-mx} \cdot x^k & k = 0 \\
        -me^{-mx} \cdot x^k + ke^{-mx}x^{k-1} & k \geq 1
        \end{cases}
    \end{align*}
    Therefore,
    \begin{align*}
        f'(x) &= -m \cdot \sum_{k = 0}^\infty \frac{e^{-mx} \cdot (mx)^k}{k!} \cdot F(k) + \sum_{k = 1}^\infty \frac{ke^{-mx}(mx)^{k-1}\cdot m}{k!} \cdot F(k)\\
         &= m \cdot\left(-\sum_{k = 0}^\infty \frac{e^{-mx} \cdot (mx)^k}{k!} \cdot F(k) + \sum_{k = 0}^\infty \frac{e^{-mx}(mx)^{k}\cdot m}{k!} \cdot F(k+1)\right) \\
        &= m \cdot\left(\sum_{k = 0}^\infty \frac{e^{-mx} \cdot (mx)^k}{k!} \cdot (F(k+1) - F(k))\right) \\
        &= \Ex_{\bk \sim \Poi(mx)}[\Delta F(\bk)],
    \end{align*}
    as desired.
\end{proof}

As a consequence, we give the main theorem of this subsection, a formal version of \Cref{thm:discrete-competitive-ratio-informal}
\begin{theorem}[Formal version of \Cref{thm:discrete-competitive-ratio-informal}]
    \label{thm:discrete-competitive-ratio-formal}
    For any $m \in \N$ and convex $F: \Z_{\geq 0} \to [0,1]$ satisfying, for any $k_1, k_2 \in \Z_{\geq 0}$,
    \begin{align}
        \label{eq:discrete-log-concave-regular}
        \Delta^{(2)} F(k_1) \cdot F(k_2) + F(k_1) \cdot \Delta^{(2)}  F(k_2) \leq 2 \cdot \Delta F(k_1) \cdot \Delta F(k_2)\\
        \label{eq:discrete-log-concave}
        \Delta^{(3)} F(k_1) \cdot \Delta F(k_2) + \Delta F(k_1) \cdot \Delta^{(3)}  F(k_2) \leq 2 \cdot \Delta^{(2)}F(k_1) \cdot \Delta^{(2)}F(k_2)
    \end{align}
    Let
    \begin{align}
        \label{eq:Gamma-disc}
        \Gamma \coloneqq 1 - \sum_{n = 0}^\infty \frac{m^n}{(m+1)^{n+1}} \cdot F(n),
    \end{align}
    and suppose that
    \begin{align}
        \label{eq:m-bound}
        m \leq \frac{-m \cdot \Delta F(0) - \Gamma}{\Gamma + (1 - \Gamma) \cdot m \cdot \Delta F(0)}.
    \end{align}
    Then, if there an $(F,m)$-discrete OCS, there is an $\Gamma$-competitive algorithm for edge-weighted online bipartite matching.
\end{theorem}
\begin{proof}
    Apply \Cref{lem:disc-to-cont} to construct an $(f,m)$-continuous OCS for $f$ defined \Cref{eq:disc-to-cont}. We wish to apply \Cref{thm:OCS-cont-Gamma} to that continuous OCS. If $F$ is convex, then $f$ is convex by \Cref{lem:cont-to-disc-deriv}. We verify that $f$ and $f'$ are log-concave:
    \begin{align*}
        f''(x) \cdot f(x) - (f'(x))^2 = \Ex_{\bk_1, \bk_2 \sim \Poi(mx)}\left[\Delta^{(2)} F(k_1) \cdot  F(k_2) -   \Delta F(k_1) \cdot \Delta F(k_2)\right] \leq 0.\\
        f'''(x) \cdot f'(x) - (f''(x))^2 = \Ex_{\bk_1, \bk_2 \sim \Poi(mx)}\left[\Delta^{(3)} F(k_1) \cdot \Delta F(k_2) -   \Delta^{(2)}F(k_1) \cdot \Delta^{(2)}F(k_2)\right] \leq 0.
    \end{align*}
    Therefore, the desired result holds with
    \begin{align*}
        \Gamma = 1 - \int_{0}^{\infty}e^{-t} \Ex_{\bk \sim \Poi(mt)}[F(\bk)] dt
    \end{align*}
    We just need to show the above simplifies to \Cref{eq:Gamma-disc}.
    \begin{align*}
        1 - \Gamma &= \int_{0}^{\infty}e^{-t} \sum_{k = 0}^\infty \Pr[\Poi(m\cdot t) = k] \cdot F(k)dt\\
        &= \int_{0}^{\infty}e^{-t} \sum_{k = 0}^\infty \frac{e^{-mt} \cdot (mt)^k}{k!} \cdot F(k)dt \\
        &= \sum_{k = 0}^\infty \frac{m^k}{k!} F(k) \cdot \int_{0}^{\infty}e^{-(m + 1)t} \cdot t^k dt \\
        &= \sum_{k = 0}^\infty \frac{m^k}{k!} F(k) \cdot \left(\frac{k!}{(m+1)^{k+1}}\right) \\
        &= \sum_{k = 0}^\infty \frac{m^k}{(m+1)^{k+1}} \cdot F(k).
    \end{align*}
    Hence, \Cref{thm:OCS-cont-Gamma} guarantees the same competitive ratio as in \Cref{thm:discrete-competitive-ratio-formal}.
\end{proof}
\begin{remark}
    \label{rem:m-constraint}
    \Cref{eq:m-bound} constrains how large $m$ can be (we should think of $m \cdot \Delta F(0)$ as fixed; it is $-1$ in our constructions). For that reason, we are forced to set $m \leq 6$ in order to achieve a competitive ratio of more than $0.533$. Depending on the construction, the constraint in \Cref{eq:m-bound} may not be fully necessary. That constraint comes from the fact that in \Cref{lem:disc-to-cont}, the continuous OCS is $(f,r)$-continuous for $r = m$. There may be tighter bounds for $r$ that allow $m$ to be set significantly higher. For our constructions, we believe even in the case of $m \to \infty$, that $r$ approaches some constant. Since \Cref{eq:Gamma-disc} is often increasing in $m$ (it is for various constructions we tried), more tight analysis of $r$ coupled with larger $m$ could lead to an improved competitive ratio.
\end{remark}

\subsection{Proof of \Cref{cor:Gamma-fahrbach,thm:main}}

We first prove the following more general result, and then use it for both \Cref{cor:Gamma-fahrbach,thm:main}.
\begin{lemma}
    \label{lem:disc-OCS-competitive-exponential}
    For any $m, k^\star \in \N$, $c \in [0,1]$, and $F:\Z_{\geq 0} \to [0,1]$ where, for all $k \geq k^\star$,
    \begin{align}
        \label{eq:F-k-star}
        F(k) = F(k^{\star}) \cdot (1 - c)^{k - k^{\star}}.
    \end{align}
    Suppose that \Cref{eq:discrete-log-concave-regular,eq:discrete-log-concave} hold for all $k_1, k_2 < k^\star$, that for all $k < k^\star$, that $\Delta^2 F(k) \geq 0$ and
    \begin{align}
        \label{eq:discrete-log-concave-c-regular}
        \Delta^{(2)} F(k) + 2c \cdot \Delta F(k) + c^2\cdot F(k) \leq 0\\
        \label{eq:discrete-log-concave-c}
        \Delta^{(3)} F(k) + 2c \cdot \Delta^{(2)} F(k) + c^2\cdot \Delta F(k) \geq 0.
    \end{align}
    Let,
    \begin{align}
        \label{eq:competitive-ratio-exponential}
        \Gamma \coloneqq 1 - \left(\sum_{n = 0}^{k^\star - 1} \frac{m^n}{(m+1)^{n+1}} \cdot F(n)\right) - \left(\frac{m}{m+1}\right)^{k^\star} \cdot \frac{1}{1 + m\cdot c} \cdot F(k^{\star}),
    \end{align}
    and suppose that
    \begin{align}
        \label{eq:m-bound-exponential}
        m \leq \frac{-m \cdot \Delta F(0) - \Gamma}{\Gamma + (1 - \Gamma) \cdot m \cdot \Delta F(0)}.
    \end{align}
    Then, if there is an $(F,m)$-discrete OCS, there is an $\Gamma$-competitive algorithm for edge-weighted online bipartite matching.
\end{lemma}
\begin{proof}
     To verify the conditions of \Cref{thm:discrete-competitive-ratio-formal}, we need to compute the derivatives of $F$ in the case where $k \geq k^\star$.
    \begin{align*}
        \Delta^{(\ell)} F(k) = (-c)^\ell \cdot F(k) \quad \quad \text{for }k \geq k^\star
    \end{align*}
    Therefore, the condition $\Delta^{(2)}F(k) \geq 0$ for $k < k^\star$ is enough to guarantee that $F$ is convex. 
    
    We need to show \Cref{eq:discrete-log-concave,eq:discrete-log-concave-regular} holds in the case where exactly one of $k_1$ and $k_2$ is at least $k^\star$ and in the case where both are at least $k^\star$. In the case where $k_1, k_2 \geq k^{\star}$, \Cref{eq:discrete-log-concave-regular} simplifies to
    \begin{align*}
        c^2 \cdot F(k_1) \cdot F(k_2) +  c^2 \cdot F(k_1)  \cdot  F(k_2) \leq 2 c^2 \cdot F(k_1) \cdot F(k_2)
    \end{align*}
    and \Cref{eq:discrete-log-concave} to
    \begin{align*}
        c^4 \cdot F(k_1) \cdot F(k_2) +  c^4 \cdot F(k_1)  \cdot  F(k_2) \leq 2 c^4 \cdot F(k_1) \cdot F(k_2)
    \end{align*}
    both of which hold with equality. Consider the other case, where exactly one of $k_1, k_2$ is less than $k^\star$. Without loss of generality, let it be $k_2$.
    Then, \Cref{eq:discrete-log-concave-regular} requires
    \begin{align*}
        c^2F(k_1)F(k_2) + F(k_1)\Delta^{(2)}F(k_2) \leq 2 (-c)F(k_1) \Delta F(k_2)
    \end{align*}
    which is equivalent to \Cref{eq:discrete-log-concave-c-regular}. Similarly \Cref{eq:discrete-log-concave} requires
    \begin{align*}
        -c^3 \cdot  F(k_1) \cdot \Delta F(k_2) - c \cdot F(k_1) \cdot  \Delta^{(3)} F(k_2) \leq 2 c^2 \cdot F(k_1) \cdot \Delta^{(2)}F(k_2)
    \end{align*}
    which is equivalent to \Cref{eq:discrete-log-concave-c}. Therefore, we can apply \Cref{thm:discrete-competitive-ratio-formal} to determine the competitive ratio. First, we compute
    \begin{align*}
        \sum_{n = k^\star}^{\infty}\frac{m^n}{(m+1)^{n+1}} \cdot F(n) &= \sum_{n = k^\star}^{\infty}\frac{m^n}{(m+1)^{n+1}} \cdot F(k^\star) \cdot(1 - c)^{n - k^\star} \\
        &= \frac{\frac{m^{k^\star}}{(m+1)^{k^{\star}+1}} \cdot F(k^\star)}{1 - \frac{m}{m+1} \cdot (1 - c)} \\
        &=\left(\frac{m}{m+1}\right)^{k^\star}  \cdot \frac{1}{(m+1) - m \cdot(1 - c)}\cdot F(k^\star) \\
         &=\left(\frac{m}{m+1}\right)^{k^\star}  \cdot \frac{1}{1 + m\cdot c}\cdot F(k^\star)
    \end{align*}
    By \Cref{eq:Gamma-disc}, the competitive ratio is,
    \begin{align*}
        \Gamma &= 1 - \sum_{n = 0}^\infty \frac{m^n}{(m+1)^{n+1}} \cdot F(n) \\
        &= 1 - \left(\sum_{n = 0}^{k^\star - 1} \frac{m^n}{(m+1)^{n+1}} \right) - \left(\frac{m}{m+1}\right)^{k^\star}  \cdot \frac{1}{1 + m\cdot c}\cdot F(k^\star).
    \end{align*}
\end{proof}

Next we prove the following theorem, restated for convenience.

\corfahr* 
\begin{proof}
    Recall that a $\gamma$-OCS is $(F, 2)$-discrete OCS for $F$,
    \begin{align*}
        F(k) \coloneqq 2^{-k} \cdot (1 - \gamma)^{(k-1)_+}.
    \end{align*}
    $F$ has the form in \Cref{eq:F-k-star} for $k^\star = 1$ and $c = \frac{1 + \gamma}{2}$. We compute discrete derivatives for $k = 0$:
    \begin{align*}
        F(0) &= 1\\
        \Delta F(0) &= 
        -\lfrac{1}{2} \\
        \Delta^{(2)} F(0) &= \frac{1 - \gamma}{4} \\
        \Delta^{(3)} F(0) &=\frac{\gamma^2 + 4 \gamma - 1}{8}
    \end{align*}
    The second derivative is positive for all $\gamma \in (0,1)$, so $F$ is convex. We verify \Cref{eq:discrete-log-concave-regular,eq:discrete-log-concave} for all $k_1, k_2 < k^\star$. The only such case is when $k_1 = k_2 = 0$ in which case we need the following quantities to be nonpositive.
    \begin{align*}
        2 \cdot \frac{1 - \gamma}{4} - 2 \left(-\frac{1}{2}\right)^2 = -\frac{\gamma}{2} \leq 0. \\
        2 \cdot \left(-\frac{1}{2}\right) \cdot \frac{\gamma^2 + 4 \gamma - 1}{8} - 2 \cdot \left(\frac{1 - \gamma}{4}\right)^2 = -\frac{\gamma(\gamma + 1)}{4} \leq 0.
    \end{align*}
    We verify \Cref{eq:discrete-log-concave-c-regular,eq:discrete-log-concave-c} for $k = 0$
    \begin{align*}
        \frac{1 - \gamma}{4} + 2\left( \frac{1 + \gamma}{2}\right) \left(-\frac{1}{2}\right) + \left( \frac{1 + \gamma}{2}\right)^2 \cdot 1 = \frac{\gamma \cdot(\gamma - 1)}{4} \leq 0 \\
       \frac{\gamma^2 + 4 \gamma - 1}{8} + 2\left( \frac{1 + \gamma}{2}\right) \cdot \frac{1 - \gamma}{4} + \left( \frac{1 + \gamma}{2}\right)^2\cdot (-\frac{1}{2}) = \frac{\gamma(1 - \gamma)}{4} \geq 0 
    \end{align*}
    Then, we compute the competitive ratio.
    \begin{align*}
        \Gamma &= 1 - \left(\sum_{n = 0}^{k^\star - 1} \frac{m^n}{(m+1)^{n+1}} \cdot F(n)\right) - \left(\frac{m}{m+1}\right)^{k^\star} \cdot \frac{1}{1 + m\cdot c} \cdot F(k^{\star}) \\
        &= 1 - \frac{1}{3} \cdot 1 - \frac{2}{3}\cdot \frac{1}{2 + \gamma} \cdot \frac{1}{2}\\
        &= \frac{3 + 2\gamma}{6 + 3\gamma}
    \end{align*}
    Lastly, we need to verify \Cref{eq:m-bound-exponential}. Since $\Delta F(0) = -\lfrac{1}{m}$, the upper bound on $m$ simplifies to
    \begin{align*}
        \frac{1 - \Gamma}{\Gamma - (1 - \Gamma)} = \frac{1 - \Gamma}{2\Gamma - 1} 
    \end{align*}
    This is a decreasing function of $\Gamma$. The largest possible value for $\Gamma$ is $\frac{5}{9}$, corresponding to when $\gamma = 1$. In the case, the above simplifies to $4$. Since $m = 2 \leq 4$, \Cref{eq:m-bound-exponential} holds.
\end{proof}

\Cref{thm:main} is a consequence of \Cref{lem:disc-OCS-competitive-exponential} and the existence of an $(F,6)$-discrete OCS for the $F:\Z_{\geq 0} \to [0,1]$ described in \Cref{fig:OCS-parameters}. In a publicly available Colab notebook\footnote{To see our code, go to \url{https://colab.research.google.com/drive/1yQErphKVkwwPPXsUWGT2b-nIPPaLBtnh?usp=sharing}}, we verify that OCS has the properties required by \Cref{lem:disc-OCS-competitive-exponential} and that the competitive ratio computed by \Cref{eq:competitive-ratio-exponential} is $\Gamma \geq 0.5368$.

\begin{figure}
    \begin{center}
         \begin{tabular}{| c || c |c |c| c| c| c| c| c| c| c| c| c| c|}
         \hline
       $n$ & 0 & 1 & 2 & 3 & 4 & 5 & 6 & 7 & 8 & 9 &  $ n \geq 10$ \\
      $F(n)$  & 1.0 & 0.833 & 0.677 & 0.54 & 0.426 & 0.333 & 0.260 & 0.201 & 0.156 & 0.121 & $0.093 \cdot 0.773^{n - 10}$\\
      \hline
    \end{tabular}
    \end{center}
    \caption{We prove the existence of an $(F,6)$-discrete OCS for the $F$ described in this table. The construction is given in \Cref{subsec:construction-overview,sec:W} with the hyperparameter $p = 0.48$.}
    \label{fig:OCS-parameters}
\end{figure}

\section{From win distributions to discrete OCSes}
\label{sec:metathm-proof}
In this section, we prove the following meta theorem, restated for convenience. Recall the pseudocode for $\mathrm{OCS}_{\mathcal{W},\mathcal{T}}$ is given in \Cref{fig:OCS meta algorithm}.
\metaalg*

We break the proof of \Cref{thm:meta-algorithm} into two steps: In \Cref{lem:meta-NA} we assuming the existence
of a tournament subroutine with some desirable properties. Then, in \Cref{lem:small-to-T}, we prove the
existence of such a tournament subroutine as long as the win distribution is sufficiently small.

\subsection{Proof of \Cref{sec:metathm-proof} assuming a good tournament subroutine}
We prove the following Lemma.
\begin{lemma}
    \label{lem:meta-NA}
    Let $\mathcal{W}$ be a win-distribution with the following properties.
    \begin{enumerate}
        \item \textbf{Negatively associated:} For any $n \geq 1$ and $\bx \sim \W$, the variables $\bx_1, \ldots, \bx_n$ are negative associated.
        \item \textbf{Shift-invariant:} For any $n \geq 1$ and $\bx \sim \W$, the distribution of $\bx_1, \bx_2, \ldots$ is identical to that of $\bx_n, \bx_{n+1}, \ldots$.
    \end{enumerate}
    Let $\mathcal{T}$ be a tournament subroutine with the following properties.
    \begin{enumerate}
        \item \textbf{Consistent:} For each $w \in [0,1]$, the probability $i$ wins round $j$ given that $\bw_j(i) = w$ is at least $w$.
        \item \textbf{Stochastic ordering:} Whether $i$ wins round $j$ is a stochastically increasing function of $\bw_j(i)$ and a stochastically decreasing function of $\bw_j(i')$ for each $i' \neq i$.
    \end{enumerate}
    Then, $\mathrm{OCS}_{\mathcal{W}, \mathcal{T}}$ is a $(F,m)$-discrete OCS for
    \begin{align*}
        F(n) \coloneqq \Ex_{\bx \sim \D} \left[\prod_{\ell = 1}^n(1 - \bx_\ell) \right].
    \end{align*}
\end{lemma}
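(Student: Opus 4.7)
The plan is to bound, for each fixed vertex $i \in L$, the probability that $i$ never wins in $\bigcup_{\ell=1}^k S_\ell$ by $\prod_\ell F(\mathrm{count}_i(S_\ell))$, via two successive applications of negative association. The central objects are the intervals $I_j(i) = \{k_j(i), \dots, k_j(i) + r_j(i) - 1\}$ attached to $i$ at each round $j$ with $i \in A_j$; by the counter-increment rule these are pairwise disjoint across $j$, and by \Cref{eq:desired-win-prob-prod} we have $1 - \bw_j(i) = \prod_{k \in I_j(i)}(1 - \bx_k(i))$.

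Stage 1 (NA across other vertices' sequences). For any set of rounds $S$, I would first establish $\Pr[i \text{ loses every round of } S \mid \bx(i)] \leq \prod_{j \in S,\, i \in A_j}(1 - \bw_j(i))$. Let $\bu_j$ denote the tournament's independent per-round randomness and let $\ell_j = \Pr_{\bu_j}[i \text{ loses round } j \mid \bx(i), \bx(-i)]$. Independence of the $\bu_j$'s across rounds yields $\Pr[\text{all losses in } S \mid \bx(i)] = \Ex_{\bx(-i)}[\prod_{j \in S} \ell_j]$. By stochastic ordering, $\ell_j$ is monotone increasing in each $\bw_j(i')$ for $i' \in A_j \setminus \{i\}$, and each such $\bw_j(i')$ is monotone increasing in $\bx_k(i')$ for $k \in I_j(i')$; hence $\ell_j$ is monotone increasing in those variables. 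Because the $\bx(i')$ sequences are independent across $i'$ and each is NA, their concatenation $\{\bx_k(i')\}_{i' \neq i,\, k \geq 1}$ is itself NA (union of independent NA families). Across distinct $j \in S$, the $\ell_j$'s depend on disjoint subsets of this collection (disjoint in both the vertex $i'$ and the index $k$), so NA applied to monotone-increasing functions gives $\Ex[\prod \ell_j] \leq \prod \Ex[\ell_j]$, and consistency bounds each factor by $1 - \bw_j(i)$ (the extra conditioning on $\bx(i)$ beyond $\bw_j(i)$ is harmless by independence).

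Stage 2 (NA on $\bx(i)$). Substituting $1 - \bw_j(i) = \prod_{k \in I_j(i)}(1 - \bx_k(i))$ and grouping rounds by the block $S_\ell$ they belong to gives
\[
\Pr[\text{all losses in } \textstyle\bigcup_\ell S_\ell \mid \bx(i)] \;\leq\; \prod_\ell \prod_{k \in T_\ell}(1 - \bx_k(i)),
\]
where $T_\ell \coloneqq \bigcup_{j \in S_\ell,\, i \in A_j} I_j(i)$ is a consecutive run of $\bx(i)$-indices of length $\mathrm{count}_i(S_\ell)$, and the $T_\ell$'s are pairwise disjoint. Taking expectation over $\bx(i)$ and applying NA of $\bx(i)$ to the monotone-decreasing functions $\prod_{k \in T_\ell}(1 - \bx_k(i))$ on the disjoint subsets $T_\ell$ factors the expectation, and shift-invariance of $\W$ identifies each factor with $F(\mathrm{count}_i(S_\ell))$, completing the bound.

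The main obstacle is Stage 1: deriving $\Pr[\text{losses in } S \mid \bx(i)] \leq \prod_j(1 - \bw_j(i))$ requires chaining together, in just the right order, the tournament's round-wise independence, the closure of NA under independent unions (so that NA can be invoked across the concatenated sequences of all $i' \neq i$), and the combination of stochastic ordering with consistency to convert pointwise monotonicity into a marginal bound. The second stage, by contrast, is a comparatively routine application of NA together with shift-invariance once Stage 1 is in place.
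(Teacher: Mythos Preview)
Your proof is correct and uses the same toolkit as the paper (closure of NA under independent unions and under monotone maps on disjoint coordinates, consistency, shift-invariance), but the decomposition is ordered differently. The paper first shows that the derived variables $\bw_j(i)$ are themselves NA, then merges $\{-\bw_j(i)\}_j$ with $\{\bw_j(i')\}_{j,\,i'\neq i}$ into a single NA family and factors the loss probability across the blocks $S_1,\ldots,S_k$ \emph{first} (\Cref{lem:lose-disjoint-sets}); only afterwards, for a single consecutive $S$, does it condition on the $\bw_j(i)$ and invoke consistency via NA of the others' variables (\Cref{lem:lose-seperate}). You instead condition on the entire sequence $\bx(i)$ up front and work directly with the raw variables $\bx_k(i')$ rather than the derived $\bw$'s; the factoring across $S_\ell$'s then falls out at the end from NA of $\bx(i)$ alone. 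Your route is a bit more economical---no separate proposition asserting NA of the $\bw$'s and no need to combine $i$'s variables with the others' into one NA family---while the paper's route isolates the two conceptually distinct uses of NA (across blocks, and within a block) as standalone reusable lemmas.
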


We'll need a variety of desirable properties satisfied by negatively associated variables. All of the below facts have short proofs (see~\cite{JP83}).
\begin{fact}
    \label{fact:indep-are-NA}
    If $\bx_1, \ldots, \bx_n$ are independent, then they are negatively associated.
\end{fact}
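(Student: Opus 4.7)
The plan is to unpack the definition of negative association and observe that disjointness of the index sets is exactly what is needed to reduce the claim to an immediate consequence of independence. The definition requires that for every pair of disjoint index sets $A_1, A_2 \subseteq [n]$ and every pair of coordinate-wise increasing functions $f_1, f_2$, we have $\Cov[f_1(\bx_{A_1}), f_2(\bx_{A_2})] \leq 0$. So I fix arbitrary such $A_1, A_2, f_1, f_2$ and aim to establish this covariance inequality.

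The key step is the following observation. Because $A_1 \cap A_2 = \emptyset$, the random vectors $\bx_{A_1}$ and $\bx_{A_2}$ are built from disjoint subcollections of the independent variables $\bx_1, \ldots, \bx_n$. A standard fact (which I will cite as the grouping property of independence) then gives that $\bx_{A_1}$ and $\bx_{A_2}$ are independent as random vectors. Since measurable functions of independent random objects remain independent, $f_1(\bx_{A_1})$ and $f_2(\bx_{A_2})$ are independent real-valued random variables. Independent random variables with finite second moments have covariance exactly zero; if the covariance is not well-defined (e.g.\ infinite expectations) then by convention the negative association inequality only needs to hold when the expressions make sense, and in that case we again get $0 \leq 0$.

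Thus $\Cov[f_1(\bx_{A_1}), f_2(\bx_{A_2})] = 0 \leq 0$, and since $A_1, A_2, f_1, f_2$ were arbitrary, $\bx_1, \ldots, \bx_n$ satisfy \Cref{def:NA}. There is no real obstacle here — the entire content of the fact is the observation that disjointness plus independence of the underlying coordinates yields independence (and hence zero covariance) of any functions of disjoint blocks, so the negative association inequality holds trivially with equality.
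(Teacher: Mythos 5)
Your proof is correct and is exactly the standard short argument the paper relies on (the paper simply cites \cite{JP83} for this fact rather than proving it): disjoint blocks of independent coordinates are independent, so any functions of them have covariance zero, which trivially satisfies \Cref{def:NA}.
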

\begin{fact}
    \label{fact:union-NA}
    The union of independent sets of negatively associated random variables is also negatively associated.
\end{fact}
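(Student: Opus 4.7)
The plan is to reduce immediately to the case of two independent collections, since the general case follows by induction on the number of sets. So suppose $\bx = (\bx_1, \ldots, \bx_n)$ is NA, $\by = (\by_1, \ldots, \by_m)$ is NA, and $\bx$ is independent of $\by$. I want to show the concatenated vector $\bz = (\bx_1, \ldots, \bx_n, \by_1, \ldots, \by_m)$ is NA. Fix disjoint index sets $A_1, A_2$ of $\bz$ and increasing functions $f_1, f_2$; split each $A_t$ into its $\bx$-part $A_t^x \subseteq [n]$ and its $\by$-part $A_t^y \subseteq [m]$. Note $A_1^x, A_2^x$ are disjoint and $A_1^y, A_2^y$ are disjoint. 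Write $F_t = f_t(\bx_{A_t^x}, \by_{A_t^y})$.

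The key tool is the law of total covariance, conditioning on $\bx$:
\begin{align*}
    \mathrm{Cov}[F_1, F_2] = \Ex\big[\mathrm{Cov}[F_1, F_2 \mid \bx]\big] + \mathrm{Cov}\big[\Ex[F_1 \mid \bx], \Ex[F_2 \mid \bx]\big].
\end{align*}
I would show each term is nonpositive. For the first term, condition on $\bx = x$. By independence, the conditional law of $\by$ is its marginal law, so $F_1$ becomes the increasing function $\by_{A_1^y} \mapsto f_1(x_{A_1^x}, \by_{A_1^y})$ and $F_2$ becomes the increasing function $\by_{A_2^y} \mapsto f_2(x_{A_2^x}, \by_{A_2^y})$, with disjoint $\by$-indices. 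The NA property of $\by$ then gives $\mathrm{Cov}[F_1, F_2 \mid \bx = x] \leq 0$ pointwise, so the outer expectation is also $\leq 0$. (If $A_t^y = \emptyset$ for some $t$, then conditional on $\bx$ that $F_t$ is constant and the conditional covariance is $0$, still $\leq 0$.)

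For the second term, I would observe that $\Ex[F_t \mid \bx]$ depends only on $\bx_{A_t^x}$ (again by independence of $\bx$ and $\by$), and is an increasing function of $\bx_{A_t^x}$: fixing a larger value in some coordinate of $A_t^x$ makes $f_t$ pointwise larger in $\by$, hence increases the expectation over $\by$. Since $A_1^x$ and $A_2^x$ are disjoint subsets of $[n]$ and $\bx$ is NA, this term is also $\leq 0$. Summing, $\mathrm{Cov}[F_1, F_2] \leq 0$, which is the defining NA inequality for $\bz$.

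The only subtlety worth double-checking is the measurability / monotonicity claim that $x \mapsto \Ex_{\by}[f_t(x_{A_t^x}, \by_{A_t^y})]$ is increasing when $f_t$ is increasing; this is standard since expectation preserves pointwise inequalities. Once the two-set case is established, an induction on the number of independent NA families closes the proof: having shown the union of the first $k$ families is NA (using only independence of the $k$-th family from the union of the first $k-1$, which follows from mutual independence), one applies the two-set result to combine it with the $(k{+}1)$-st. I expect no real obstacle here — the argument is essentially the covariance decomposition plus the observation that conditioning on the independent block preserves both monotonicity and disjointness of index sets.
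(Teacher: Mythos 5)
Your proof is correct: the law-of-total-covariance decomposition, with the conditional covariance handled by negative association of the second block and the covariance of conditional expectations handled by negative association of the first block (after checking that the conditional expectations are increasing functions of disjoint index sets), is exactly the standard argument, and the induction to finitely many independent families is routine. The paper itself states this fact without proof, deferring to \cite{JP83}, and your argument is essentially the canonical proof found there, so there is no substantive difference in approach to report.
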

\begin{fact}
    \label{fact:sibset-NA}
    Any subset of a set of negatively associated variables is also negatively associated.
\end{fact}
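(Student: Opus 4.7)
The plan is to derive the subset property as an almost immediate consequence of how negative association is quantified in \Cref{def:NA}. Let $\bx_1, \ldots, \bx_n$ be negatively associated, and let $B \subseteq [n]$ index the subcollection whose negative association we wish to establish. I would take the approach of verifying the NA property for $\{\bx_i\}_{i \in B}$ directly from the definition, without invoking any external probabilistic machinery.

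First I would fix an arbitrary pair of disjoint subsets $A_1, A_2 \subseteq B$ and arbitrary increasing functions $f_1, f_2$ on $\mathbb{R}^{|A_1|}$ and $\mathbb{R}^{|A_2|}$ respectively. The key observation is that disjoint subsets of $B$ are in particular disjoint subsets of $[n]$, so the pair $(A_1, A_2)$ is a legitimate witness for the NA property of the larger collection $\bx_1, \ldots, \bx_n$. The functions $f_1, f_2$ act on the coordinates $\bx_{A_1}, \bx_{A_2}$, and these are literally the same random vectors whether we regard them as extracted from $(\bx_1, \ldots, \bx_n)$ or from $\{\bx_i\}_{i \in B}$. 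Hence by the NA assumption on the full collection,
\[
\Cov[f_1(\bx_{A_1}), f_2(\bx_{A_2})] \leq 0.
\]
Since $A_1, A_2, f_1, f_2$ were arbitrary disjoint subsets of $B$ and arbitrary increasing functions, this verifies \Cref{def:NA} for $\{\bx_i\}_{i \in B}$.

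There is no genuine obstacle in this argument: the proof is essentially a tautological unpacking of the definition, and the only thing to be careful about is the relabeling between viewing $B$ as an index set in its own right versus as a subset of $[n]$. I would explicitly note that the universal quantifier in \Cref{def:NA} over ``every pair of disjoint subsets'' is what makes the property automatically inherited by sub-indexings, in contrast to, say, pairwise negative correlation, which does not a priori transfer so cleanly. No convexity, coupling, or inequality manipulation is required.
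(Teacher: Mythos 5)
Your proof is correct and is essentially the same argument as the one the paper relies on (it defers to the short proofs in \cite{JP83}): since \Cref{def:NA} quantifies over \emph{all} pairs of disjoint index subsets, any disjoint $A_1, A_2$ inside the sub-collection are already valid witnesses for the full collection, and the covariance inequality transfers verbatim. Nothing further is needed.
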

\begin{fact}
    \label{fact:increasing-NA}
    Increasing functions defined on disjoint subsets of negatively associated random variables are also negatively associated.
\end{fact}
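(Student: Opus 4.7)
The plan is to unpack the definition of negative association and show that any increasing function of a bundle $\by_\ell = g_\ell(\bx_{B_\ell})$ is itself an increasing function of the underlying $\bx$-coordinates, at which point the NA property of the original variables applies directly.

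Concretely, suppose $\bx_1, \ldots, \bx_n$ are negatively associated, $B_1, \ldots, B_k \subseteq [n]$ are pairwise disjoint, and $g_1, \ldots, g_k$ are coordinatewise increasing functions with $g_\ell$ defined on $\bx_{B_\ell}$. Set $\by_\ell \coloneqq g_\ell(\bx_{B_\ell})$. To show $\by_1, \ldots, \by_k$ are NA, I would fix arbitrary disjoint $A_1, A_2 \subseteq [k]$ and coordinatewise increasing $f_1, f_2$, and prove
\[
    \Cov\bigl[f_1(\by_{A_1}),\, f_2(\by_{A_2})\bigr] \leq 0.
\]

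The key step is to define, for $s \in \{1,2\}$, the index set $C_s \coloneqq \bigcup_{\ell \in A_s} B_\ell \subseteq [n]$ and the composed function
\[
    h_s(\bx_{C_s}) \coloneqq f_s\bigl((g_\ell(\bx_{B_\ell}))_{\ell \in A_s}\bigr).
\]
I would then verify two things. First, $C_1$ and $C_2$ are disjoint: this follows because the $B_\ell$'s are pairwise disjoint and $A_1 \cap A_2 = \emptyset$. Second, $h_s$ is coordinatewise increasing in $\bx_{C_s}$: raising a coordinate $\bx_i$ with $i \in B_\ell$ (for the unique $\ell \in A_s$ with $i \in B_\ell$) only increases $g_\ell(\bx_{B_\ell})$ since $g_\ell$ is increasing, which in turn only increases $f_s$ since $f_s$ is increasing in its $\ell$-th argument; coordinates $i \notin C_s$ do not affect $h_s$. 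Hence $h_1, h_2$ are coordinatewise increasing functions defined on disjoint subsets $C_1, C_2$ of $[n]$.

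Applying the NA hypothesis for $\bx_1, \ldots, \bx_n$ to $(h_1, C_1)$ and $(h_2, C_2)$ gives
\[
    \Cov\bigl[h_1(\bx_{C_1}),\, h_2(\bx_{C_2})\bigr] \leq 0,
\]
which by construction equals $\Cov[f_1(\by_{A_1}), f_2(\by_{A_2})]$, completing the proof. There is no real obstacle here; the main care needed is the bookkeeping that ensures $C_1 \cap C_2 = \emptyset$ (which uses disjointness of both the $A_s$'s and the $B_\ell$'s) and that composition preserves monotonicity coordinatewise.
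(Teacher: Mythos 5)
Your proof is correct: the composition argument (grouping the underlying coordinates into disjoint index sets $C_1, C_2$ and observing that $h_s = f_s \circ (g_\ell)_{\ell \in A_s}$ is coordinatewise increasing) is exactly the standard short proof of this closure property, which the paper does not reproduce but instead defers to the cited reference of Joag-Dev and Proschan. No gaps; the bookkeeping on disjointness and monotonicity is precisely what is needed.
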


Next, we prove an analogue of \Cref{fact:increasing-NA} for stochastically increasing functions.
\begin{proposition}
    \label{prop:NA-closed-stochastically-increasing}
    Stochastically increasing functions defined on disjoint subsets of negatively associated random variables are also negatively associated.
\end{proposition}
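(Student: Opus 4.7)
The plan is to reduce this to the deterministic analogue \Cref{fact:increasing-NA} by producing a monotone coupling of each stochastic function with auxiliary randomness. Specifically, for each stochastically increasing function $\boldf_i$ on $\bx_{A_i}$, I would define $g_i(x, u) \coloneqq F_{i,x}^{-1}(u)$, where $F_{i,x}$ is the CDF of the output distribution $\boldf_i(x)$ and $F_{i,x}^{-1}$ is its quantile function. Drawing $\bu_i \sim \Uniform(0,1)$, the random variable $g_i(x, \bu_i)$ has the same distribution as $\boldf_i(x)$ for every fixed $x$. The key property of this classical quantile representation is that if $x \succeq y$ coordinate-wise, then $\boldf_i(x)$ stochastically dominates $\boldf_i(y)$ (combining the single-coordinate condition in \Cref{def:stochastic-increasing}), which means $F_{i,x}(t) \leq F_{i,y}(t)$ for all $t$, and so $g_i(x, u) \geq g_i(y, u)$ pointwise in $u$. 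Hence each $g_i$ is a deterministic function that is increasing in each coordinate of its $x$-argument.

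Next, draw $\bu_1, \ldots, \bu_k$ as mutually independent uniforms, independent of $\bx_1, \ldots, \bx_n$. By \Cref{fact:indep-are-NA} the $\bu_i$ are negatively associated, and since these are independent of the $\bx$-family, \Cref{fact:union-NA} shows the combined collection $\{\bx_1, \ldots, \bx_n, \bu_1, \ldots, \bu_k\}$ is negatively associated. Because the subsets $A_1, \ldots, A_k$ of $[n]$ are disjoint, adjoining one fresh $\bu_i$ to each $A_i$ yields disjoint index sets, and each $g_i(\bx_{A_i}, \bu_i)$ is an increasing function of the variables in its set. Applying \Cref{fact:increasing-NA} then gives that $g_1(\bx_{A_1}, \bu_1), \ldots, g_k(\bx_{A_k}, \bu_k)$ are negatively associated.

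Finally, I would conclude by observing that, conditional on $\bx$, the variables $\boldf_i(\bx_{A_i})$ use independent randomness, so their joint distribution equals that of $g_1(\bx_{A_1}, \bu_1), \ldots, g_k(\bx_{A_k}, \bu_k)$. Negative association depends only on the joint distribution, so the former is negatively associated as well.

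The one step that requires care, and which I view as the main subtlety rather than a real obstacle, is verifying that the quantile construction $g_i(x, u) = F_{i,x}^{-1}(u)$ is \emph{coordinate-wise} monotone when the argument $x$ is multivariate. This works because the output of $\boldf_i$ is univariate, so stochastic dominance translates directly into a pointwise ordering of inverse CDFs; the coordinate-wise version of \Cref{def:stochastic-increasing} lets us upgrade this to joint monotonicity by composing single-coordinate changes. Everything else is routine invocation of \Cref{fact:indep-are-NA,fact:union-NA,fact:increasing-NA}.
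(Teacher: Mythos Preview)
Your proposal is correct and takes essentially the same approach as the paper: both introduce independent $\Uniform(0,1)$ auxiliaries, use the inverse-CDF/quantile coupling to realize each $\boldf_i(\bx_{A_i})$ as a deterministic increasing function of $(\bx_{A_i},\bu_i)$, and then invoke \Cref{fact:indep-are-NA,fact:union-NA,fact:increasing-NA}. If anything, you are slightly more explicit than the paper in justifying coordinate-wise monotonicity of the quantile map and in noting that the coupling reproduces the \emph{joint} distribution (not just the marginals).
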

\begin{proof}
    Let $\bx = \bx_1, \ldots, \bx_n$ be negatively associated, $A_1, \ldots, A_m \subseteq [n]$ disjoint subsets, and $\boldf_1, \ldots, \boldf_m$ stochastically increasing functions. We draw $\by_1, \ldots, \by_m \overset{iid}{\sim} \Uniform(0,1)$. By \Cref{fact:indep-are-NA} and \Cref{fact:union-NA}, the set $(\bx_1, \ldots, \bx_n, \by_1, \ldots, \by_m)$ is together negatively associated. Then, we define $\bz_i$ for each $i \in M$ as,
    \begin{align*}
        \bz_i = \text{min $v$ such that } \Pr[\boldf_i(\bx_{A_i}) \leq v] \geq \by_i
    \end{align*}
    Then the distribution of $\bz_i$ is equivalent to that of $\boldf_i(\bx_{A_i})$. Furthermore, since $\boldf_i$ is stochastically increasing, $\bz_i$ is an increasing function of $\bx_{A_i}$ and $\by_i$. By \Cref{fact:increasing-NA}, $\bz_1, \ldots, \bz_m$ are negatively associated.
\end{proof}
As a quick corollary of the above,
\begin{corollary}
    \label{cor:NA-stochastic-product}
    Let $\bx = \bx_1, \ldots, \bx_n$ be negatively associated. For any stochastically increasing functions $\boldf_1, \ldots, \boldf_m$ defined on disjoint coordinates of $\bx$, $A_1, \ldots, A_m$ respectively,
    \begin{align*}
        \Ex\left[\prod_{i = 1}^m \boldf_i(\bx_{A_i})\right] \leq \prod_{i = 1}^m \Ex\left[\boldf_i(\bx_{A_i})\right]
    \end{align*}
\end{corollary}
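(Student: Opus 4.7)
The plan is to combine \Cref{prop:NA-closed-stochastically-increasing} with an inductive application of the definition of negative association. First I would set $\bz_i \coloneqq \boldf_i(\bx_{A_i})$ for each $i \in [m]$. Since the $A_i$ are disjoint and the $\boldf_i$ are stochastically increasing, \Cref{prop:NA-closed-stochastically-increasing} immediately implies that $\bz_1, \ldots, \bz_m$ are negatively associated. This reduces the corollary to the well-known fact that nonnegative negatively associated random variables satisfy $\Ex[\prod \bz_i] \leq \prod \Ex[\bz_i]$.

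The proof of that fact goes by induction on $m$. The base case $m = 2$ is just \Cref{def:NA} applied with $A_1 = \{1\}$, $A_2 = \{2\}$, and $f_1, f_2$ the identity (both trivially increasing): the covariance bound $\Ex[\bz_1 \bz_2] - \Ex[\bz_1]\Ex[\bz_2] \leq 0$ is exactly what we want. For the inductive step, split the index set as $\{1, \ldots, m-1\}$ and $\{m\}$, and apply \Cref{def:NA} to the increasing functions $f_1(z_1, \ldots, z_{m-1}) \coloneqq \prod_{i=1}^{m-1} z_i$ and $f_2(z_m) \coloneqq z_m$. This yields
\begin{align*}
    \Ex\left[\prod_{i=1}^m \bz_i\right] \leq \Ex\left[\prod_{i=1}^{m-1} \bz_i\right] \cdot \Ex[\bz_m],
\end{align*}
and the inductive hypothesis bounds the first factor on the right by $\prod_{i=1}^{m-1}\Ex[\bz_i]$, completing the step.

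The one subtlety that deserves care is the requirement that $f_1 = \prod_{i=1}^{m-1} z_i$ actually be nondecreasing in each coordinate, which holds precisely when each $\bz_i$ is nonnegative. In the paper's applications the $\boldf_i$ output win probabilities or indicator-style values in $[0,1]$, so nonnegativity holds. I would either state the nonnegativity assumption explicitly or note it as the context in which the corollary is invoked. Modulo that observation, the argument is a short two-step proof: apply \Cref{prop:NA-closed-stochastically-increasing} once, then induct using the pairwise NA inequality. There is no real obstacle here; this corollary is a standard consequence of negative association and the purpose of stating it separately is just convenience when invoking it during the analysis of the meta algorithm.
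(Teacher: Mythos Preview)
Your proposal is correct and follows essentially the same approach as the paper: define $\bz_i = \boldf_i(\bx_{A_i})$, invoke \Cref{prop:NA-closed-stochastically-increasing} to get that the $\bz_i$ are negatively associated, and then apply \Cref{def:NA} repeatedly (the paper simply says ``by repeated application of \Cref{def:NA}'' without spelling out the induction). Your remark about needing nonnegativity so that $\prod_{i<m} z_i$ is coordinatewise increasing is a valid subtlety that the paper's proof glosses over; as you note, it holds in all the applications since the $\boldf_i$ there take values in $[0,1]$.
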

\begin{proof}
    Let $\bz_i$ be the output of $\boldf_i(\bx_{A_i})$ for $i = 1, \ldots, m$. Then, by \Cref{prop:NA-closed-stochastically-increasing}, $\bz_1, \ldots, \bz_m$ are negatively associated. By repeated application of \Cref{def:NA}, we have the desired result.
\end{proof}
We'll also need that negative association is preserved under negation.\footnote{Negative association is only preserved under negation of \emph{all} the variables. Negating only some variables can create a positive correlation.}
\begin{proposition}
    \label{prop:NA-negation}
    Let $\bx_1, \ldots, \bx_n$ be negatively associated. Then $-\bx_1, \ldots, -\bx_n$ are also negatively associated
\end{proposition}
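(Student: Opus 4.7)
\smallskip
\noindent\textbf{Proof proposal for \Cref{prop:NA-negation}.} The plan is to unfold the definition of negative association applied to $-\bx_1, \ldots, -\bx_n$, and reduce it back to negative association of the original $\bx_1, \ldots, \bx_n$ by using two elementary observations: (i) if $g$ is increasing then $x \mapsto g(-x)$ is decreasing (coordinatewise), and (ii) the covariance of a pair of random variables is unchanged when both are negated, since $\mathrm{Cov}[-U,-V] = \mathrm{Cov}[U,V]$.

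Concretely, fix disjoint $A_1, A_2 \subseteq [n]$ and arbitrary coordinatewise increasing functions $g_1, g_2$. I want to show $\mathrm{Cov}[g_1((-\bx)_{A_1}), g_2((-\bx)_{A_2})] \leq 0$. Define $h_i(y) \coloneqq g_i(-y)$ for $i = 1,2$, so that $g_i((-\bx)_{A_i}) = h_i(\bx_{A_i})$. Because $g_i$ is increasing in each coordinate, $h_i$ is decreasing in each coordinate, and therefore $-h_i$ is increasing in each coordinate. Applying the definition of negative association of $\bx_1,\ldots,\bx_n$ (\Cref{def:NA}) to the pair of increasing functions $-h_1, -h_2$ on the disjoint index sets $A_1, A_2$ yields
\begin{align*}
    \mathrm{Cov}\!\left[-h_1(\bx_{A_1}),\, -h_2(\bx_{A_2})\right] \leq 0.
\end{align*}

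Finally, pulling the minus signs out using bilinearity of covariance gives $\mathrm{Cov}[h_1(\bx_{A_1}), h_2(\bx_{A_2})] \leq 0$, which is exactly $\mathrm{Cov}[g_1((-\bx)_{A_1}), g_2((-\bx)_{A_2})] \leq 0$. Since $A_1, A_2, g_1, g_2$ were arbitrary, $-\bx_1, \ldots, -\bx_n$ satisfy \Cref{def:NA}. There is no real obstacle here; the only subtle point worth flagging explicitly (as the footnote in the paper already emphasizes) is that the argument crucially negates \emph{all} the variables at once, so that the two ``decreasing in each coordinate'' functions can be simultaneously flipped to increasing functions without breaking the disjointness of the index sets used to invoke \Cref{def:NA}.
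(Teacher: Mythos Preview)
Your proof is correct and follows essentially the same approach as the paper: both reduce the claim to the negative association of $\bx_1,\ldots,\bx_n$ by observing that $x \mapsto -g_i(-x)$ is increasing and that $\mathrm{Cov}[-U,-V]=\mathrm{Cov}[U,V]$. The only difference is cosmetic---you name the intermediate functions $h_i$, whereas the paper works directly with $-f_i((-\bx)_{A_i})$.
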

\begin{proof}
    Let $f_1, f_2$ be increasing functions over disjoint coordinates $A_1, A_2 \subseteq [n]$. We wish to prove
    \begin{align*}
        \Cov[f_1((-\bx)_{A_1}), f_2((-\bx)_{A_2})] \leq 0.
    \end{align*}
    Since $-f_1((-\bx)_{A_1})$ and $-f_2((-\bx)_{A_2})$ are increasing and disjoint functions of $\bx$ and $\bx$ is negatively associated,
    \begin{align*}
        \Cov[-f_1((-\bx)_{A_1}), -f_2((-\bx)_{A_2})] \leq 0.
    \end{align*}
    The Lemma follows from the fact that $\Cov[-\ba, -\bb] = \Cov[\ba, \bb]$ for any random variables $\ba, \bb$.
\end{proof}
    
With the above properties established, we are able to make progress towards proving \Cref{lem:meta-NA}. First, we establish that the desired win probabilities are negatively associated.
\begin{proposition}
    \label{prop:w-NA}
    Under the conditions of \Cref{lem:meta-NA}, for each $i \in L$, the variables $\bw_1(i), \bw_2(i), \ldots$ in \Cref{fig:OCS meta algorithm} are negatively associated.
\end{proposition}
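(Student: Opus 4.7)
The plan is to identify each $\bw_j(i)$ as a coordinatewise increasing function of a \emph{disjoint} block of the negatively associated sequence $\bx(i) \sim \W$, and then invoke Fact \ref{fact:increasing-NA}.

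Fix $n \geq 1$ and set $J_j := \{k_j(i),\, k_j(i)+1,\, \ldots,\, k_j(i)+r_j(i)-1\}$ for $j \in [n]$. The counter update $k_{j+1}(i) = k_j(i) + r_j(i)$ in \Cref{fig:OCS meta algorithm} immediately gives that $J_1, \ldots, J_n$ are pairwise disjoint. From \Cref{eq:desired-win-prob-prod},
\[ \bw_j(i) \;=\; 1 - \prod_{\ell \in J_j}\bigl(1 - \bx_\ell(i)\bigr), \]
which is coordinatewise increasing in $(\bx_\ell(i))_{\ell \in J_j}$ since each factor $1 - \bx_\ell(i)$ is decreasing in $\bx_\ell(i)$. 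Choosing $N$ large enough that $\bigcup_{j \leq n} J_j \subseteq [N]$, the hypothesis on $\W$ says $\bx_1(i), \ldots, \bx_N(i)$ are negatively associated. If the blocks $J_j$ were deterministic, Fact \ref{fact:increasing-NA} would immediately give NA of $\bw_1(i), \ldots, \bw_n(i)$ and finish the proof.

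The one subtlety is that the multiplicities $r_j(i)$ come from the (potentially adaptive) adversarial inputs $A_j$ and so may depend on past OCS outputs, hence on $\bx_1(i), \ldots, \bx_{k_j(i)-1}(i)$. The key structural fact is that $r_j(i)$ depends only on $\bx$ coordinates strictly \emph{below} the block $J_j$. I would handle this by one of two routes: (i) enumerate over the countably many possible multiplicity sequences $(r_1(i), \ldots, r_n(i))$; on each such event the blocks $J_1, \ldots, J_n$ are fixed, Fact \ref{fact:increasing-NA} gives conditional NA of $\bw_1(i), \ldots, \bw_n(i)$, and averaging the defining covariance inequality $\Cov[f(\bw_{A_1}), g(\bw_{A_2})] \leq 0$ over this countable partition preserves its sign; or (ii) induct on $j$, where at stage $j$ one conditions on the already-exposed coordinates (which determines $J_j$), observes that $(\bx_\ell(i))_{\ell \in J_j}$ remains a fresh disjoint block of NA variables, and uses Fact \ref{fact:increasing-NA} to adjoin $\bw_j(i)$ to the previously NA tuple $\bw_1(i), \ldots, \bw_{j-1}(i)$.

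I expect the main obstacle to be the adaptivity-lifting step: making precise that conditioning on block endpoints which are themselves functions of earlier $\bx(i)$ coordinates (and independent external randomness from the tournament and the adversary) preserves the NA structure of the fresh coordinates. Approach (i) is the slickest since it reduces to deterministic blocks on each event of the partition; approach (ii) is conceptually cleaner but needs a short lemma that, for NA variables, conditioning on a measurable function of a prefix does not spoil the NA property of the complementary suffix in the presence of the coordinatewise increasing test functions we use.
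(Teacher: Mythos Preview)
Your core argument---each $\bw_j(i)$ is a coordinatewise increasing function of a disjoint block $J_j$ of the negatively associated sequence $\bx(i)$, then invoke Fact~\ref{fact:increasing-NA}---is exactly the paper's proof, which dispatches the proposition in three lines.

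The adaptivity concern you raise is not present in the paper's setting and the paper does not address it. The multisets $A_j$ (hence the multiplicities $r_j(i)$ and the block endpoints) are determined independently of the win sequences $\bx(i)$: in the reduction, the primal-dual component is explicitly deterministic given the input graph (see the remark at the end of \Cref{subsec:determining p} and the paragraph in Section~\ref{sec:primal-dual} emphasizing that the primal-dual algorithm does not see the OCS's random choices), and the Categorical sampling in \Cref{lem:disc-to-cont} uses fresh randomness independent of $\bx(i)$. So one may simply condition on the full realization of $(A_j)_j$, after which the blocks $J_j$ are deterministic and Fact~\ref{fact:increasing-NA} applies directly.

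One caution if you ever do need the adaptive version: your route~(i) as written has a gap. Conditional negative covariance on each cell of a partition does \emph{not} in general imply unconditional negative covariance, because the law of total covariance contributes the extra term $\Cov\big[\,\Ex[f(\bw_{A_1})\mid r],\,\Ex[g(\bw_{A_2})\mid r]\,\big]$, which need not be nonpositive when the partition event $r$ is itself a function of the $\bx$-coordinates. This is moot here since $r$ is independent of $\bx(i)$, making that cross term vanish, but the sentence ``averaging the defining covariance inequality over this countable partition preserves its sign'' would need justification in a genuinely adaptive model.
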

\begin{proof}
    First, we claim that for any $r \geq 1$, the function $1 - \prod_{i=1}^r (1 - x_i)$ is an increasing function of $x_1, \ldots, x_r$. This is by a series of compositions.
    \begin{align*}
        &(x_i \mapsto 1-x_i) &\text{is decreasing}\\
        \implies& ((x_1, \ldots, x_r) \mapsto \prod_{i=1}^r (1 - x_i)) &\text{is decreasing}\\
         \implies& ((x_1, \ldots, x_r) \mapsto 1 - \prod_{i=1}^r (1 - x_i)) &\text{is increasing}
    \end{align*}
    Therefore, $\bw_j(i)$ are increasing functions of disjoint subsets of $(\bx_1(i), \bx_2(i), \ldots)$. The desired result follows from \Cref{fact:increasing-NA}.
\end{proof}

\begin{lemma}
    \label{lem:lose-disjoint-sets}
    Under the conditions of \Cref{lem:meta-NA}, for any $i \in L$ and disjoint subsets of time steps $S_1, \ldots, S_k \subseteq [T]$
    \begin{align*}
        \Pr[i \text{ never a winner in } \bigcup_{\ell = 1}^k S_\ell] \leq \prod_{\ell = 1}^k \Pr[i \text{ never a winner in } S_\ell] 
    \end{align*}
\end{lemma}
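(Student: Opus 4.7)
The plan is to condition on the per-vertex win sequences $\bx(i')$ drawn at initialization---equivalently, on all desired win probabilities $\bw$---and exploit that the tournament's internal coins at different time steps are independent. Under this conditioning the events ``$i$ does not win round $j$'' are independent across $j$, so if I let $\mathbf{p}_j$ denote the (deterministic-in-$\bw$) probability that $i$ fails to win round $j$ and set
\[
\mathbf{P}_\ell \;:=\; \prod_{j\in S_\ell}\mathbf{p}_j \;=\; \Pr[i\text{ never wins in }S_\ell \mid \bw],
\]
then it suffices to establish
\[
\Ex\!\left[\prod_{\ell=1}^k \mathbf{P}_\ell\right] \;\le\; \prod_{\ell=1}^k \Ex[\mathbf{P}_\ell],
\]
since taking expectations over $\bw$ on both sides recovers the two sides of the lemma.

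To prove this inequality I plan to invoke \Cref{cor:NA-stochastic-product}. The stochastic-ordering property of $\mathcal{T}$ gives that $\mathbf{p}_j$ is decreasing in $\bw_j(i)$ and increasing in $\bw_j(i')$ for every $i' \in A_j\setminus\{i\}$ (with $\mathbf{p}_j=1$ trivially for rounds where $i\notin A_j$). Consequently $\mathbf{P}_\ell$ is an increasing function of the coordinate block
\[
V_\ell \;:=\; \{-\bw_j(i):\, j\in S_\ell\} \;\cup\; \bigcup_{i'\neq i}\{\bw_j(i'):\, j\in S_\ell,\, i'\in A_j\},
\]
and because the $S_\ell$ are disjoint, so are the blocks $V_\ell$.

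Finally, the combined vector $\bigcup_\ell V_\ell$ is negatively associated: \Cref{prop:w-NA} shows that each per-vertex sequence $\bw_1(i''),\bw_2(i''),\ldots$ is NA; \Cref{prop:NA-negation} lets me negate the $i$-sequence without losing NA; the per-vertex sequences are mutually independent since the win distributions $\bx(i'')$ are drawn independently; and \Cref{fact:union-NA} glues them into one NA vector. Applying \Cref{cor:NA-stochastic-product} to the (deterministic, hence stochastically) increasing functions $\mathbf{P}_\ell$ on disjoint subsets of this NA vector delivers the displayed inequality, which closes the argument.

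The only real subtlety is the sign flip on the $\bw(i)$ coordinates: without it $\mathbf{P}_\ell$ would be a mixed-monotone function of the underlying NA variables and \Cref{cor:NA-stochastic-product} would not apply directly. \Cref{prop:NA-negation} is what makes the flip harmless, but the bookkeeping---negate on vertex $i$, leave positive on every other vertex, confirm the $V_\ell$'s remain disjoint across $\ell$---has to be tracked carefully to the final invocation.
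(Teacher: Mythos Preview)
Your proposal is correct and follows essentially the same approach as the paper: assemble the NA vector $\{-\bw_j(i)\}_j \cup \bigcup_{i'\neq i}\{\bw_j(i')\}_j$ via \Cref{prop:w-NA}, \Cref{prop:NA-negation}, and \Cref{fact:union-NA}, then apply \Cref{cor:NA-stochastic-product} to the increasing functions ``$i$ never wins in $S_\ell$'' on disjoint coordinate blocks. The only cosmetic difference is that you first condition on $\bw$ and use independence of the tournament coins across rounds to obtain the deterministic increasing functions $\mathbf{P}_\ell$, whereas the paper skips this step and directly treats the $\{0,1\}$-valued indicator of ``$i$ never wins in $S_\ell$'' as a \emph{stochastically} increasing function of the same variables (the stochastic part being precisely the tournament randomness you conditioned on); both framings land on the same invocation of \Cref{cor:NA-stochastic-product}.
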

\begin{proof}
    According to the properties of $\mathcal{T}$, whether $i$ never wins in $S_\ell$ is a stochastically increasing function of $(-\bw_j(i))$ and of $\bw_j(i')$ for each $i' \neq i, j\in S_\ell$. By \Cref{prop:NA-negation} and \Cref{prop:w-NA}, the variables $(-\bw_1(i), -\bw_2(i), \ldots)$ are negatively associated. Applying \Cref{fact:union-NA}, we have that
    \begin{align*}
        \bigcup_{j = 1}^\infty \left(\{-\bw_j(i)\} \cup \bigcup_{i' \neq i} \bw_j(i') \right)  \text{ are collectively negatively associated.}
    \end{align*}
     The desired results follows \Cref{cor:NA-stochastic-product}.
\end{proof}

\Cref{lem:lose-disjoint-sets} allows us to prove \Cref{lem:meta-NA} by just considering how the OCS behaves on consecutive subsequences rather than on unions of consecutive subsequences. We prove a final ingredient.

\begin{lemma}
    \label{lem:lose-seperate}
    Under the conditions of \Cref{lem:meta-NA}, for any $i \in L$, subset of time steps $S \subseteq [T]$ in which $i$ appears, and desired win probabilities $w_j(i)$ for each $j \in S$, 
    \begin{align*}
        \Pr[i \text{ never a winner in } S \mid \bw_j(i) = w_j(i) \text{ for each } j \in S] \leq \prod_{j \in S} (1 - w_j(i)).
    \end{align*}
\end{lemma}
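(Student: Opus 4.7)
The plan is to condition on $\bw_j(i) = w_j(i)$ for each $j \in S$ and reduce the claim to an application of \Cref{cor:NA-stochastic-product}. Let $\brho_j$ denote the internal random coins used by $\mathcal{T}$ at round $j$; these are independent across rounds and independent of all win sequences. For each $j \in S$, the event that $i$ fails to win at round $j$ is determined by $\bw_j(A_j)$ and $\brho_j$ alone. Therefore, conditional on the full vector of round-$j$ desired win probabilities, the events that $i$ does not win at $j$ are mutually independent across $j \in S$, and
\begin{align*}
\Pr\!\bigl[i \text{ never a winner in } S \,\bigm|\, \bw_j(i) = w_j(i) \text{ for each } j \in S\bigr]
= \Ex\!\left[\prod_{j \in S} h_j\bigl(w_j(i),\,\{\bw_j(i')\}_{i' \in A_j \setminus \{i\}}\bigr)\right],
\end{align*}
where $h_j(w, v) \coloneqq \Pr_{\brho_j}[\mathcal{T}(w, v, \brho_j) \neq i]$ and the outer expectation is over the competitors' desired win probabilities.

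Two structural properties drive the bound. First, since $\mathcal{T}$'s probability of selecting $i$ is stochastically decreasing in each competitor's desired win probability, each $h_j$ is nondecreasing in each coordinate corresponding to $\bw_j(i')$ with $i' \neq i$. Second, for each fixed $i' \neq i$, the sequence $\bw_1(i'), \bw_2(i'), \ldots$ is negatively associated by \Cref{prop:w-NA}; since the win sequences of distinct offline vertices are drawn independently in \Cref{fig:OCS meta algorithm}, \Cref{fact:union-NA} combines them into a single negatively associated family $\{\bw_j(i')\}_{i' \neq i,\, j \in S}$, and conditioning on $\bw_S(i)$ does not disturb this family because $i$'s win sequence is independent of the others'. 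The functions $h_j$ depend on subsets of this family that are disjoint across $j$, so \Cref{cor:NA-stochastic-product} (viewing each $h_j(w_j(i), \cdot)$ as a nondecreasing function of its remaining coordinates) yields
\begin{align*}
\Ex\!\left[\prod_{j \in S} h_j\bigl(w_j(i),\,\{\bw_j(i')\}_{i' \in A_j \setminus \{i\}}\bigr)\right]
\leq \prod_{j \in S} \Ex\!\left[h_j\bigl(w_j(i),\,\{\bw_j(i')\}_{i' \in A_j \setminus \{i\}}\bigr)\right].
\end{align*}
Each factor on the right equals $\Pr[i \text{ does not win round } j \mid \bw_j(i) = w_j(i)]$, which by the consistency of $\mathcal{T}$ is at most $1 - w_j(i)$; multiplying over $j \in S$ gives the claim.

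The main obstacle is keeping the different sources of randomness straight: we must verify that conditioning on $\bw_S(i)$ preserves the negative association of the other vertices' desired win probabilities (which holds because $i$'s win sequence is independent of the others') and that the round-wise internal randomness of $\mathcal{T}$ factors across $j$ (which follows from the definition of the meta algorithm). Once this bookkeeping is settled, the three ingredients---consistency of $\mathcal{T}$, stochastic monotonicity of $\mathcal{T}$, and negative association of the sequences produced by $\W$---slot cleanly into \Cref{cor:NA-stochastic-product}.
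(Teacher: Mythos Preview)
Your proof is correct and follows essentially the same approach as the paper: establish that the competitors' desired win probabilities $\{\bw_j(i')\}_{i'\neq i,\,j\in S}$ are negatively associated (via \Cref{prop:w-NA} and \Cref{fact:union-NA}), use the monotonicity of $\mathcal{T}$ to view the loss events as increasing functions on disjoint coordinate blocks, apply \Cref{cor:NA-stochastic-product} to split the product, and finish with consistency. Your version is somewhat more explicit than the paper's---you integrate out $\mathcal{T}$'s internal coins first to obtain deterministic monotone functions $h_j$, and you spell out why conditioning on $\bw_S(i)$ leaves the competitors' family undisturbed---but the argument is the same.
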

\begin{proof}
    Just as in \Cref{lem:lose-disjoint-sets},
    \begin{align*}
         \bigcup_{j = 1}^\infty \left(\bigcup_{i' \neq i} \bw_j(i') \right)  \text{ are all negatively associated.}
    \end{align*}
    Whether $i$ loses in round $j$ is a stochastically increasing function of $w_j(i')$ for each $i \neq i'$. Applying \Cref{cor:NA-stochastic-product},
    \begin{align*}
        \Pr[i \text{ never a winner in } &S \mid \bw_j(i) = w_j(i) \text{ for each } j \in S] \\
        &\leq \prod_{j \in S} \left(\Pr[i \text{ loses in round } j \mid \bw_j(i) = w_j(i)]\right)
    \end{align*}
    By the consistency of $\mathcal{T}$, $\Pr[i \text{ loses in round } j \mid \bw_j(i) = w_j(i)]$ is at most $1 - w_j(i)$, proving the desired result.
\end{proof}

Finally, we prove \Cref{lem:meta-NA}.
\begin{proof}[Proof of \Cref{lem:meta-NA}]
    Our goal is to prove that for any vertex $i \in L$ and disjoint consecutive time steps $S_1, \ldots, S_k$, that
    \begin{align*}
        \Pr[i \text{ never a winner in }S_1, \ldots, S_k] \leq \prod_{\ell = 1}^k F(\mathrm{count}_i(S_\ell))
    \end{align*}
    where $f$ is as defined in \Cref{eq:def f}. By \Cref{lem:lose-disjoint-sets} and the assumption that $\mathcal{W}$ is shift-invariant, it is enough to prove, for $r_j(i) \coloneqq$ the number of appearances of $i$ in $A_j$,
    \begin{align*}
        \Pr[i \text { never a winner in $S \coloneqq \{1, \ldots, m\}$}] \leq F\left(\sum_{j = 1}^m r_j(i)\right).
    \end{align*}
    We compute,
    \begin{align*}
        \Pr[i \text { never a winner in $S$}] &= \Ex_{\bw_1(i), \ldots, \bw_m(i)}\left[\Pr[i \text { never a winner in $S$} \mid \bw_1(i), \ldots, \bw_m(i) ]  \right]\\
        & \leq \Ex_{\bw_1(i), \ldots, \bw_m(i)}\left[\prod_{j \in S}(1 - \bw_j(i))  \right] \tag{\Cref{lem:lose-seperate}}\\
        & = \Ex_{\bx(i) \sim \W}\left[\prod_{j \in S}\left( \prod_{\ell = 0}^{r_j(i)-1} \left(1 - \bx_{k_j(i) + \ell}\right)\right) \right]\tag{\Cref{eq:desired-win-prob-prod}} \\
         & = \Ex_{\bx(i) \sim \W}\left[\prod_{\ell = 1}^{r_1(i) + \ldots + r_m(i)}\left( 1 - \bx_{\ell}\right) \right]\tag{Combine products} \\
         &= F\left(\sum_{j \in S} r_j(i)\right) \tag{\Cref{eq:def f}}
    \end{align*}
    as desired.
\end{proof}

\subsection{Design of the tournament subroutine}

In this subsection, we prove the following Lemma which completes the proof of \Cref{thm:meta-algorithm}.
\begin{lemma}[Sufficiently small $\implies$ good $\mathcal{T}$]
    \label{lem:small-to-T}
    Let $\mathcal{W}$ be a win-distribution for which the following holds. For each $r \in [m-1]$ define $\bw^{(r)}$ to be the random variable representing the distribution of the desired win probability of a vertex of multiplicity $r$,
    \begin{align*}
       \bw^{(r)} \sim \left(1 - \prod_{\ell = 1}^{r} \left(1 - \bx_{\ell}\right) \right) \quad\quad \text{where} \quad\quad \bx \sim \W,
    \end{align*}
    If, for each $t \in [0,1]$ and $r \in [m-1]$,
    \begin{align}
        \label{eq:w-small}
        \Ex_{\bx \sim \W} \left[\left(\bw^{(r)} - t \right)_+ \right] \leq 1 - t  - \frac{m-r}{m} \cdot \left (1 - t^{\frac{m}{m-r}} \right),
    \end{align}
    then there is a tournament subroutine, $\mathcal{T}$ with the following properties.
    \begin{enumerate}
        \item \textbf{Consistent:} For each $w \in [0,1]$, the probability $i$ wins round $j$ given that $\bw_j(i) = w$ is at least $w$.
        \item \textbf{Stochastic ordering:} The probability $i$ wins a round is a stochastically increasing function of $\bw_j(i)$ and a stochastically decreasing function of $\bw_j(i')$ for each $i' \neq i$.
    \end{enumerate}
\end{lemma}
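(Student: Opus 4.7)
The plan is to reinterpret the hypothesis \Cref{eq:w-small} as an increasing convex order statement, then build $\mathcal{T}$ from a stochastically monotone Strassen-type coupling. For $\bu \sim \Uniform(0,1)$, the variable $\bu^{m-r}$ has CDF $t^{1/(m-r)}$ on $[0,1]$, so integrating the survival function gives $\Ex[(\bu^{m-r} - t)_+] = 1 - t - \tfrac{m-r}{m}(1 - t^{m/(m-r)})$. By the standard characterization of the icx order via integrated upper tails, \Cref{eq:w-small} is equivalent to $\bw^{(r)} \icx \bu^{m-r}$ for every $r \in [m-1]$.

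Next I define the tournament $\mathcal{T}$. For each distinct element $i \in A_j$ with multiplicity $r_i = r_j(i)$, attach a score $\bv_i$ whose marginal distribution is the maximum of $r_i$ iid $\Uniform(0,1)$ variables, so $\bv_i$ has CDF $v^{r_i}$ and $\bv_i^{m-r_i}$ has CDF $v^{r_i/(m-r_i)}$. Since $v^{r_i/(m-r_i)} \leq v^{1/(m-r_i)}$ on $[0,1]$ for $r_i \geq 1$, $\bu^{m-r_i}$ is stochastically dominated by $\bv_i^{m-r_i}$, hence $\bu^{m-r_i} \icx \bv_i^{m-r_i}$; combined with the preceding paragraph and transitivity of $\icx$, we obtain $\bw^{(r_i)} \icx \bv_i^{m-r_i}$. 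Invoking a stochastically monotone Strassen / M\"uller--R\"uschendorf coupling for the icx order, couple $(\bw_j(i), \bv_i)$ so that $\bv_i$ keeps its prescribed marginal, $\Ex[\bv_i^{m-r_i} \mid \bw_j(i)] \geq \bw_j(i)$, and $\bv_i \mid \bw_j(i)$ is stochastically increasing in $\bw_j(i)$. Apply these couplings independently across the distinct elements of $A_j$, and declare the winner of round $j$ to be the element attaining the largest $\bv_i$ (ties have measure zero).

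For consistency, condition on $\bv_i$ and use independence across couplings: the scores $\bv_{i'}$ for $i' \neq i$ have marginal CDF $v^{r_{i'}}$ and are independent of $(\bw_j(i), \bv_i)$, so $\Pr[i \text{ wins} \mid \bv_i] = \prod_{i' \neq i} \bv_i^{r_{i'}} = \bv_i^{m-r_i}$, and taking expectation over $\bv_i \mid \bw_j(i)$ yields $\Pr[i \text{ wins} \mid \bw_j(i)] = \Ex[\bv_i^{m-r_i} \mid \bw_j(i)] \geq \bw_j(i)$. Stochastic ordering follows from the monotonicity of the coupling: increasing $\bw_j(i)$ stochastically increases $\bv_i$, hence the winning probability of $i$; increasing $\bw_j(i')$ stochastically increases $\bv_{i'}$, hence stochastically decreases the probability that $\bv_i$ is the largest. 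The main obstacle is to establish the stochastically monotone Strassen coupling for the icx order, since the classical theorem only guarantees existence of \emph{some} coupling attaining the sub-martingale bound. I plan to build it explicitly, starting from the quantile coupling (monotone but possibly violating the conditional-mean bound pointwise) and adding upward perturbations at precisely the $\bw_j(i)$-quantiles where the bound is deficient, arranged so as to preserve both the marginal of $\bv_i$ and the stochastic monotonicity in $\bw_j(i)$.
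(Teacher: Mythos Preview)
Your approach is essentially the paper's: read \Cref{eq:w-small} as an increasing-convex-order comparison, invoke a stochastically monotone Strassen-type coupling, and run the tournament by taking the argmax of scores whose marginals are maxima of uniforms. The consistency and stochastic-ordering arguments you give are the same as the paper's.

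There is, however, a concrete slip in your first paragraph that breaks the chain for $r\geq 2$. You claim $\Ex[(\bu^{m-r}-t)_+]=1-t-\tfrac{m-r}{m}\bigl(1-t^{m/(m-r)}\bigr)$, but $\bu^{m-r}$ has CDF $t^{1/(m-r)}$ and its integrated upper tail is
\[
\int_t^1\bigl(1-s^{1/(m-r)}\bigr)\,ds \;=\; (1-t)\;-\;\frac{m-r}{\,m-r+1\,}\Bigl(1-t^{(m-r+1)/(m-r)}\Bigr),
\]
which matches the right-hand side of \Cref{eq:w-small} only when $r=1$. So \Cref{eq:w-small} is \emph{not} the statement $\bw^{(r)}\icx\bu^{m-r}$, and your transitivity step $\bw^{(r)}\icx\bu^{m-r}\icx\bv_i^{m-r_i}$ does not go through. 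The good news is that the detour through $\bu^{m-r}$ is entirely unnecessary: differentiating the right-hand side of \Cref{eq:w-small} in $t$ shows it is exactly the integrated upper tail of the law with CDF $t^{r/(m-r)}$, which is precisely your $\bv_i^{m-r_i}$. Hence \Cref{eq:w-small} \emph{directly} says $\bw^{(r)}\icx\bv_i^{m-r_i}$, and you can drop the $\bu^{m-r}$ step altogether. This is exactly how the paper proceeds (its \Cref{lem:large-diamond-holds} identifies the benchmark law with CDF $t^{r/(m-r)}$ and checks ($\diamondsuit$) for it; its \Cref{lem:icx-to-strength-dist} transfers ($\diamondsuit$) down the icx order).

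Finally, you do not need to construct the stochastically monotone coupling by hand. The paper cites it as a known theorem of M\"uller and R\"uschendorf (its \Cref{thm:mr01}): whenever $\ba\icx\bb$, there is a coupling $(\hat\ba,\hat\bb)$ with $\Ex[\hat\bb\mid\hat\ba]\geq\hat\ba$ and with $a\mapsto(\hat\bb\mid\hat\ba=a)$ stochastically increasing. Citing that result suffices.
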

Our tournament subroutine will be simple: We'll design stochastically increasing strength functions $\BS_1, \ldots, \BS_{m-1}$  that map a desired win probability to a ``strength." For each $i \in A_j$, we set its strength to $\BS_{r_j(i)}(\bw_j(i))$, and the vertex with the highest strength wins the round. We give pseudocode for this subroutine in \Cref{fig:tournament}.

\begin{figure}[h]
  \captionsetup{width=.9\linewidth}
\begin{tcolorbox}[colback = white,arc=1mm, boxrule=0.25mm]
\vspace{3pt} 
$\mathrm{\mathcal{T}}_{\BS_1, \ldots, \BS_{m-1}}$
\begin{enumerate}
    \item If there is a vertex with $r_j(i) = m$, then select $i$ as the winner.
    \item Otherwise, for each $i \in A_j$, sample a strength,
    \begin{align*}
        \bs_j(i) \coloneqq \BS_{r_j(i)}(\bw_j(i)),
    \end{align*}
    and choose the $i$ maximizing $\bs_j(i)$ as the winner.
\end{enumerate}
\end{tcolorbox}
\caption{A tournament subroutine for returning a winner given desired win probabilities $\bs_j(i)$ and multiplicities $r_j(i)$. This subroutine is parameterized by strength functions $\BS_1, \ldots, \BS_{m-1}$.}
\label{fig:tournament}
\end{figure}

As long as the strength functions are stochastically increasing, the stochastic ordering criteria of \Cref{lem:small-to-T} will hold. On the other hand, consistency is more challenging. The following fundamental question underlies this subsection.

\begin{quote} {\sl For which distributions of $\bw^{(r)}$ is it possible to design stochastically increasing strength functions such that the probability $i$ wins round $j$ given $\bw_j(i) = w$ is at least $w$.} \hfill{($\diamondsuit$)} 
\end{quote} 
Consider the case where $r = 1$. Since there can only be one winner in a round, for ($\diamondsuit$) to hold, it must be the case that $\E[\bw^{(1)}] \leq \frac{1}{m}$. It turns out, that condition is not sufficient. Suppose that $\bw^{(1)}$ is $1$ with probability $\frac{1}{m}$ and $0$ otherwise. Then, there is a nonzero probability that $2$ (or more) vertices will both declare that they should win with probability $1$, making ($\diamondsuit$) impossible. This example shows that not only do we want the expectation of $\bw^{(1)}$ to be small, we want its distribution to not be too spread out.

We will show that these two notions -- how large and spread out $\bw^{(r)}$ is -- fully capture when \hfill{($\diamondsuit$)}  is possible. Recall the following definition which formalizes the notion of ``large and spread out."
\icxdef*

The following Lemma formalizes our intuition that as long as the desired win probabilities are sufficiently small and not spread out, ($\diamondsuit$) holds.
\begin{lemma}[Increasing convex order $\to$ ($\diamondsuit$)]
    \label{lem:icx-to-strength-dist}
    For any random variables $\bw^{(1)} \ldots, \bw^{(m-1)}$ and $\bw'^{(1)} \ldots, \bw'^{(m-1)}$, if
    \begin{align*}
        \bw^{(r)} \icx \bw'^{(r)} \quad \quad \text{for each $r = 1, \ldots, m-1$},
    \end{align*}
    and ($\diamondsuit$) holds for $(\bw'^{(1)}, \ldots, \bw'^{(m-1)})$, then it also holds for $(\bw^{(1)}, \ldots, \bw^{(m-1)})$.
\end{lemma}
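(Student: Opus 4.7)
The plan is to translate the tournament subroutine $\mathcal{T}'$ for $(\bw'^{(1)},\ldots,\bw'^{(m-1)})$ into one $\mathcal{T}$ for $(\bw^{(1)},\ldots,\bw^{(m-1)})$ by preprocessing each vertex's input through a coupling kernel supplied by the Müller--Rüschendorf theorem. For each $r\in[m-1]$, the hypothesis $\bw^{(r)} \icx \bw'^{(r)}$ yields a Markov kernel $K_r$ on $[0,1]$ such that if $\hat\bw \sim \bw^{(r)}$ and $\hat\bw' \mid \hat\bw \sim K_r(\hat\bw,\cdot)$, then $\hat\bw' \overset{d}{=} \bw'^{(r)}$, with $\E[\hat\bw' \mid \hat\bw] \geq \hat\bw$ almost surely, and with $K_r$ stochastically monotone in its first argument (i.e.\ $K_r(a,\cdot)$ is stochastically increasing in $a$). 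The MR coupling generalizes Strassen, and in the univariate $\icx$ setting one can ensure kernel monotonicity in addition to the martingale-type inequality; this is the ``powerful result'' invoked in the introduction.

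Given $K_1,\ldots,K_{m-1}$, define $\mathcal{T}$ as follows. On input $(w_i)_{i\in A_j}$ with multiplicities $(r_i)_{i\in A_j}$, independently sample $\bw'_j(i) \sim K_{r_i}(w_i,\cdot)$ for every $i$, then invoke $\mathcal{T}'$ on the desired win probabilities $(\bw'_j(i))_{i\in A_j}$ and return its winner. Because the $\bw'_j(i)$ are sampled independently and marginally distributed as $\bw'^{(r_i)}$, the joint input handed to $\mathcal{T}'$ has exactly the distribution under which $\mathcal{T}'$'s consistency and stochastic-ordering guarantees are stated.

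To verify consistency, fix $w \in [0,1]$. Conditioning on $\bw_j(i)=w$ and then averaging over the kernel sample and over the other vertices' inputs,
\begin{align*}
\Pr_{\mathcal{T}}[\,i\text{ wins}\mid \bw_j(i)=w\,]
= \E_{w'\sim K_{r_i}(w,\cdot)}\!\bigl[\Pr_{\mathcal{T}'}[\,i\text{ wins}\mid \bw'_j(i)=w'\,]\bigr]
\geq \E_{w'\sim K_{r_i}(w,\cdot)}[w']
\geq w,
\end{align*}
where the first inequality is consistency of $\mathcal{T}'$ (legitimate since the other inputs' marginals match those expected by $\mathcal{T}'$) and the second is the defining MR inequality. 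For stochastic ordering, observe that $\Pr_{\mathcal{T}}[i \text{ wins}]$ as a function of the fixed vector $(w_{i'})$ is an average of $\Pr_{\mathcal{T}'}[i\text{ wins}\mid \bw'_j(\cdot)]$ against the product of monotone kernels. Increasing $w_i$ makes $K_{r_i}(w_i,\cdot)$ stochastically larger and then, by stochastic ordering of $\mathcal{T}'$ in $\bw'_j(i)$, the expectation can only increase; symmetrically, increasing $w_{i'}$ for $i'\neq i$ decreases the expectation.

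The main obstacle is exactly the step of securing a stochastically monotone kernel: the bare Strassen/MR existence statement produces the inequality $\E[\hat\bw'\mid\hat\bw]\geq \hat\bw$ but does not, a priori, guarantee that the conditional law of $\hat\bw'$ given $\hat\bw=a$ is stochastically increasing in $a$. For the univariate $\icx$ relation this monotone version is available (and is the form the paper invokes), so the rest of the argument -- the construction of $\mathcal{T}$ and the two-line verifications of consistency and stochastic ordering -- is essentially a direct chase through the coupling.
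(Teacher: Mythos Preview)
Your proof is correct and takes essentially the same approach as the paper: invoke the M\"uller--R\"uschendorf monotone coupling (Theorem~\ref{thm:mr01}) and precompose with the resulting kernel, then check consistency via $\Ex[\hat{\bw}'\mid\hat{\bw}]\geq\hat{\bw}$ and stochastic ordering via kernel monotonicity. The only cosmetic difference is that the paper phrases the construction directly at the level of strength functions, setting $\BS_r(w)\coloneqq\BS_r'(\hat{\bw}')$ with $\hat{\bw}'\sim K_r(w,\cdot)$, whereas you phrase it as preprocessing the tournament subroutine's inputs; since $(\diamondsuit)$ is stated in terms of strength functions and the tournament is the max-strength rule, the two unwrap to the same construction.
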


Our proof of \Cref{lem:icx-to-strength-dist} leverages a powerful theorem of M{\"u}ller and R{\"u}schendorf.
\begin{theorem}[\cite{MR01}]
    \label{thm:mr01}
    For univariate random variables $\ba, \bb$, the following are equivalent.
    \begin{enumerate}
        \item $\ba$ is smaller than $\bb$ in the increasing convex order.
        \item There is a coupling of $(\hat{\ba}, \hat{\bb})$ such that the marginal distribution of $\hat{\ba}$ and $\hat{\bb}$ are the same as those of $\ba$ and $\bb$ respectively, and, for any choice of $a$,
        \begin{align*}
            \Ex[\hat{\bb} \mid \hat{\ba}] \geq \hat{\ba}.
        \end{align*}
        Furthermore, the stochastic function mapping $a \mapsto (\hat{\bb} \mid \hat{\ba} = a)$ is stochastically increasing.
    \end{enumerate}
\end{theorem}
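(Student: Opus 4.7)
The plan is to prove the equivalence by establishing the two directions separately; (2) $\Rightarrow$ (1) is a short Jensen-type argument, while (1) $\Rightarrow$ (2) is the substantive direction and requires a two-step coupling construction together with a monotonicity refinement.

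For (2) $\Rightarrow$ (1), assume the coupling $(\hat{\ba}, \hat{\bb})$ exists. For any increasing convex $\phi: \R \to \R$, I would compute
\begin{align*}
\Ex[\phi(\bb)] = \Ex[\phi(\hat{\bb})] = \Ex\big[\Ex[\phi(\hat{\bb}) \mid \hat{\ba}]\big] \geq \Ex\big[\phi(\Ex[\hat{\bb} \mid \hat{\ba}])\big] \geq \Ex[\phi(\hat{\ba})] = \Ex[\phi(\ba)],
\end{align*}
where the first inequality is Jensen applied to the convex function $\phi$ conditional on $\hat{\ba}$, and the second uses that $\phi$ is increasing together with the hypothesis $\Ex[\hat{\bb} \mid \hat{\ba}] \geq \hat{\ba}$.

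For (1) $\Rightarrow$ (2), I would use a two-step decomposition. First I would show that $\ba \icx \bb$ is equivalent to the existence of an intermediate random variable $\bc$ satisfying $\ba \preceq \bc$ in stochastic dominance and $\bc \cx \bb$ in the convex order. This auxiliary characterization follows from a standard LP-duality / separation argument: the set of distributions $\mu$ with $\ba \preceq \mu$ and $\mu \cx \bb$ can be shown to be nonempty iff no increasing convex $\phi$ separates the two expectations. Given such a $\bc$, I would construct the desired coupling by composition: use the quantile coupling for $\ba \preceq \bc$ (so that $\hat{\ba} \leq \hat{\bc}$ pointwise), and then apply the classical Strassen theorem (\Cref{fact:cx-coupling}) to obtain a martingale coupling between $\hat{\bc}$ and $\hat{\bb}$ with $\Ex[\hat{\bb} \mid \hat{\bc}] = \hat{\bc}$. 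By the tower property, $\Ex[\hat{\bb} \mid \hat{\ba}] = \Ex[\hat{\bc} \mid \hat{\ba}] \geq \hat{\ba}$, establishing the submartingale property.

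The main obstacle is arranging the stochastic monotonicity of the map $a \mapsto (\hat{\bb} \mid \hat{\ba} = a)$, since a generic Strassen martingale coupling need not be monotone. My plan here would be a discretization and limit argument. For finitely supported $\bc$ and $\bb$ with $\bc \cx \bb$, I would construct a monotone martingale coupling by a greedy procedure: order the atoms of $\bc$ from smallest to largest, and at each step assign mass of $\bb$'s atoms to match the conditional mean constraint while keeping the conditional distributions stochastically ordered (this is essentially an adjacent-pool redistribution that is standard in the convex-order literature). Combined with the monotone quantile coupling $\hat{\ba} \mapsto \hat{\bc}$, this yields a stochastically increasing coupling in the discrete case. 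For general distributions, I would approximate both $\ba$ and $\bb$ by finitely supported variables converging weakly, invoke tightness of the resulting couplings on $\R^2$, and extract a weak limit; stochastic monotonicity is preserved under weak limits since it is characterized by the closed condition $\Pr[\hat{\bb} \leq t \mid \hat{\ba} \leq a]$ being decreasing in $a$, completing the proof.
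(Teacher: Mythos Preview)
The paper does not prove this theorem. It is stated as a cited result of M{\"u}ller and R{\"u}schendorf \cite{MR01}, accompanied only by the remark that it strengthens Strassen's classical theorem by adding the requirement that the conditional kernel $a \mapsto (\hat{\bb}\mid \hat{\ba}=a)$ be stochastically increasing. There is therefore no paper proof to compare your proposal against.

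On the substance of your sketch: the $(2)\Rightarrow(1)$ direction is correct as written. For $(1)\Rightarrow(2)$, the decomposition through an intermediate $\bc$ with $\ba \preceq_{\mathrm{st}} \bc \cx \bb$ is a standard and valid route to the submartingale coupling. However, the monotonicity refinement is exactly the nontrivial content of \cite{MR01}, and your treatment of it has gaps. First, the ``greedy adjacent-pool redistribution'' you invoke for the discrete case is not clearly specified, and it is not obvious that any such left-to-right procedure yields a kernel that is stochastically increasing in $a$ while simultaneously matching conditional means; this is precisely the construction that requires work. Second, your limit argument is problematic: stochastic monotonicity of a kernel means that $a \mapsto \Pr[\hat{\bb}>t \mid \hat{\ba}=a]$ is increasing, a statement about disintegrations rather than about the joint law, and it does not follow directly from weak convergence of the couplings (the characterization you wrote, in terms of $\Pr[\hat{\bb}\le t \mid \hat{\ba}\le a]$, is not equivalent). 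A correct limiting argument would need an additional selection or compactness step for monotone kernels, which you have not supplied.
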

\Cref{thm:mr01} is a generalization of a classic result by Strassen \cite{s65} who proved the same result without the restriction that $a \mapsto (\hat{\bb} \mid \hat{\ba} = a)$ be stochastically increasing. We'll need the stronger result since we want our strength functions to be stochastically increasing.
\begin{proof}[Proof of \Cref{lem:icx-to-strength-dist}]
    For each $r \in [m-1]$, let $(\hat{\bw}^{(r)}, \hat{\bw}'^{(r)})$ be the coupling guaranteed to exist by \Cref{thm:mr01} and let $\BS_1', \ldots,\BS_{m-1}'$ be the strength functions satisfying ($\diamondsuit$) for $\bw'^{(1)} \ldots, \bw'^{(m-1)}$. Define, for each $r \in [m-1]$,
    \begin{align*}
        \BS_r(w) \coloneqq \BS_r'(\hat{\bw}') \quad \text{where} \quad\hat{\bw'} \sim ( \hat{\bw}'^{(r)} \mid \hat{\bw}^{(r)} = w).
    \end{align*}
    We claim that $\BS_1, \ldots, \BS_{m-1}$ are strength functions satisfying ($\diamondsuit$) for $(\bw^{(1)}, \ldots, \bw^{(m-1)})$. First, since the composition of stochastically increasing functions is stochastically increasing, $\BS_r$ is stochastically increasing for all $r \in [m-1]$. Second, the probability $i$ wins a round with strength $w$ is at least $\E[\hat{\bw}'^{(r)} \mid \hat{\bw}^{(r)} = w]$, which is at least $w$.
\end{proof}

\begin{remark}[Computational efficiency of \Cref{lem:icx-to-strength-dist}] Computing $\BS_r$ requires knowing the coupling $(\hat{\bw}^{(r)}, \hat{\bw}'^{(r)})$. If $\hat{\bw}^{(r)}$ and $ \hat{\bw}'^{(r)}$ are discrete random variables, finding a desired coupling is simply a linear program. If they are continuous, they can be discretized to any desired accuracy and then treated as discrete.
\end{remark}

In order to use \Cref{lem:icx-to-strength-dist}, we need to be able to verify when one variable is smaller, in the increasing convex order, than another. We use the following well known characterization.
\begin{fact}[\cite{SS07}]
    \label{fact:icx-expectations}
    For univariate random variables $\ba, \bb$, the following are equivalent.
    \begin{enumerate}
     \item $\ba$ is smaller than $\bb$ in the increasing convex order.
     \item For every $t \in \R$,
     \begin{align*}
         \Ex[(\ba - t)_+] \leq \Ex[(\bb - t)_+] .
     \end{align*}
     \end{enumerate}
\end{fact}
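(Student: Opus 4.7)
The implication $(1)\Rightarrow(2)$ is immediate: for each fixed $t \in \R$ the function $\phi_t(x) \coloneqq (x-t)_+$ is increasing and convex on $\R$, so applying the definition of $\icx$ (Definition~\ref{def:icx}) to $\phi_t$ yields $\Ex[(\ba - t)_+] \leq \Ex[(\bb - t)_+]$.

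For the substantive direction $(2)\Rightarrow(1)$, the plan is to represent every increasing convex $\phi$ as a nonnegative mixture of ``call payoffs'' $x \mapsto (x-t)_+$, plus a constant and a nonnegative multiple of the identity, so that the desired inequality $\Ex[\phi(\ba)] \leq \Ex[\phi(\bb)]$ falls out by linearity of expectation. A preliminary observation is that (2) implies $\Ex[\ba] \leq \Ex[\bb]$: taking any $t$ below the essential infima of $\ba, \bb$ (or $t \to -\infty$ if both are integrable and unbounded below) gives $(\bx - t)_+ = \bx - t$, and (2) becomes $\Ex[\ba] - t \leq \Ex[\bb] - t$. Next, consider the case where $\phi \in C^2(\R)$ is increasing, convex, and eventually affine, i.e.\ $\phi''$ has compact support. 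Pick $M \in \R$ strictly below the support of $\phi''$; then integration by parts gives
\begin{align*}
\phi(x) \;=\; \phi(M) + \phi'(M)\,(x - M) + \int_M^{\infty} \phi''(t)\,(x - t)_+ \, dt
\end{align*}
for every $x \in \R$ (the formula is trivial when $x \leq M$ since both the linear and integral terms then reduce correctly). Taking expectations and applying Tonelli to interchange the integral and expectation (permitted because $\phi'' \geq 0$), I get
\begin{align*}
\Ex[\phi(\ba)] \;=\; \phi(M) + \phi'(M)\,(\Ex[\ba] - M) + \int_M^{\infty} \phi''(t)\,\Ex[(\ba - t)_+]\, dt,
\end{align*}
and similarly for $\bb$. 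Since $\phi'(M) \geq 0$ (as $\phi$ is increasing) and $\phi''(t) \geq 0$ (as $\phi$ is convex), the inequalities $\Ex[\ba] \leq \Ex[\bb]$ and $\Ex[(\ba - t)_+] \leq \Ex[(\bb - t)_+]$ propagate through every term, yielding $\Ex[\phi(\ba)] \leq \Ex[\phi(\bb)]$.

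Finally, for a general increasing convex $\phi$ with $\Ex[\phi(\ba)]$ and $\Ex[\phi(\bb)]$ finite, I would approximate $\phi$ from below by a monotone sequence $\phi_n$ of $C^2$ increasing convex functions that are eventually affine (for example, mollify $\phi$ and then cap its slope at level $n$ outside a large interval, ensuring $\phi_n \leq \phi$ and $\phi_n \uparrow \phi$ pointwise). Applying the smooth-and-eventually-affine case to each $\phi_n$ and then passing to the limit via the monotone convergence theorem on each side gives $\Ex[\phi(\ba)] \leq \Ex[\phi(\bb)]$. The main obstacle is this approximation step: one must construct $\phi_n$ that are simultaneously (i)~smooth, (ii)~increasing and convex, (iii)~eventually affine, (iv)~dominated by $\phi$ so that $\Ex[\phi_n(\bb)] \leq \Ex[\phi(\bb)] < \infty$, and (v)~convergent to $\phi$ pointwise monotonically; verifying all five simultaneously (and handling the case where $\phi$ is only one-sided differentiable at finitely many kinks) is the fussy technical core of the argument, but is a standard convex-analytic construction.
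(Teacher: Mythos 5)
The paper does not prove this statement at all: it is imported verbatim as a known characterization of the increasing convex order, with a citation to \cite{SS07} (where it appears as a standard theorem), so there is no in-paper argument to compare against. Your proof is the standard textbook one and is essentially sound: $(1)\Rightarrow(2)$ is immediate since $x \mapsto (x-t)_+$ is increasing and convex; for $(2)\Rightarrow(1)$ your derivation of $\Ex[\ba]\leq\Ex[\bb]$, the representation $\phi(x) = \phi(M) + \phi'(M)(x-M) + \int_M^\infty \phi''(t)(x-t)_+\,dt$ for smooth increasing convex $\phi$ with compactly supported $\phi''$, and the Tonelli interchange are all correct, and the nonnegativity of $\phi'(M)$ and $\phi''$ lets the hypotheses propagate term by term. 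Two small points in the approximation step deserve care. First, plain mollification of a convex $\phi$ approximates it from \emph{above} (by Jensen's inequality), not below; to get $\phi_n \leq \phi$ you should shift the mollification of the increasing function, or better, avoid smoothness altogether by truncating $\phi$ with its tangent lines at $\pm n$ (tangents of a convex function lie below it, the truncation is increasing, convex, affine outside $[-n,n]$, and increases pointwise to $\phi$) and writing the representation against the second-derivative \emph{measure} $\phi(x)=\phi(M)+\phi'(M)(x-M)+\int (x-t)_+\,\mu(dt)$, which needs no $C^2$ hypothesis. Second, when passing to the limit, monotone convergence should be applied to $\phi_n - \phi_1$ (or one invokes Fatou with an integrable affine minorant), since $\phi_n$ need not be nonnegative; in the paper's application the variables live in $[0,1]$, so all such integrability issues are vacuous. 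With those routine repairs your argument is complete and matches the proof one finds in \cite{SS07}.
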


We complete the proof of \Cref{lem:small-to-T} by giving a particular choice for $(\bw^{(1)}, \ldots, \bw^{(m-1)})$ on which ($\diamondsuit$) holds.
\begin{lemma}
    \label{lem:large-diamond-holds}
    For each $r  \in [m-1]$, let $\bw^{(r)}$ be the random variable supported on $[0,1]$ satisfying
    \begin{align}
        \label{eq:good-w-expectation}
        \Ex[(\bw^{(r)} - t)_+] = 1 - t  - \frac{m-r}{m} \cdot \left (1 - t^{\frac{m}{m-r}}\right) \quad \quad\forall t\in [0,1]
    \end{align}
    Then, ($\diamondsuit$) holds for $(\bw^{(1)}, \ldots, \bw^{(m-1)})$
\end{lemma}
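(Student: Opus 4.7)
The plan is to give a clean, essentially explicit construction. First I would extract the actual distribution of $\bw^{(r)}$ from the equality condition: differentiating $\Ex[(\bw^{(r)}-t)_+] = 1 - t - \frac{m-r}{m}(1 - t^{m/(m-r)})$ in $t$ gives $-\Pr[\bw^{(r)} > t] = -1 + t^{r/(m-r)}$, so $\bw^{(r)}$ has CDF $F_r(t) = t^{r/(m-r)}$ on $[0,1]$. In particular the distributions are continuous and strictly increasing on $[0,1]$, so in the tournament of \Cref{fig:tournament} ties happen with probability $0$.

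Next I would propose the deterministic strength function
\[
    \BS_r(w) \coloneqq w^{1/(m-r)}, \qquad r \in [m-1].
\]
This is monotone increasing in $w$, hence stochastically increasing, so the stochastic ordering property of \Cref{lem:small-to-T} is immediate. It remains to verify consistency. The CDF of $\BS_r(\bw^{(r)})$ on $[0,1]$ is
\[
    H_r(s) = \Pr\!\left[\bw^{(r)} \leq s^{m-r}\right] = (s^{m-r})^{r/(m-r)} = s^r.
\]
For a round with multiplicity profile $(r_1,\ldots,r_k)$ summing to $m$, where we are in Case~2 of \Cref{fig:tournament} (the Case~1 case with some $r_\ell = m$ is trivial since that vertex always wins), the strengths $\bs_\ell = \BS_{r_\ell}(\bw_\ell)$ are independent across $\ell$ because the $\bw_\ell$ are. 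Conditioning on $\bw_i = w$ and using $\bs_i = w^{1/(m-r_i)}$,
\[
    \Pr[i \text{ wins}\mid \bw_i = w] \;=\; \prod_{\ell\neq i} H_{r_\ell}\!\left(w^{1/(m-r_i)}\right) \;=\; w^{\frac{1}{m-r_i}\sum_{\ell\neq i} r_\ell} \;=\; w^{\frac{m-r_i}{m-r_i}} \;=\; w,
\]
using $\sum_{\ell\neq i} r_\ell = m - r_i$. This is exact equality, which is in fact forced: $\sum_i \Pr[i \text{ wins}] = 1$ and $\sum_i \Ex[\bw_i] = \sum_i r_i/m = 1$, so any consistent tournament must achieve equality almost surely, and the above strength function realizes it.

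There is no real obstacle once the right $\BS_r$ is guessed; the plan is essentially just to verify the two identities above. The only thing that takes a moment to see is why the exponent $1/(m-r)$ is the correct one: it is precisely the choice that makes $H_r(s) = s^r$, so that multiplying over $\ell\neq i$ produces the exponent $\sum_{\ell\neq i} r_\ell = m - r_i$, which then exactly cancels the $1/(m-r_i)$ from $\bs_i = \bw_i^{1/(m-r_i)}$. Combined with \Cref{lem:icx-to-strength-dist} and \Cref{fact:icx-expectations}, this yields \Cref{lem:small-to-T} for any $\bw^{(r)}$ satisfying the inequality in \Cref{eq:w-small}, since such a distribution is $\icx$-dominated by the extremal one handled here.
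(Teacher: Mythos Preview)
Your proposal is correct and essentially identical to the paper's own proof: both extract the CDF $\Pr[\bw^{(r)}\le t]=t^{r/(m-r)}$, take the deterministic strength function $\BS_r(w)=w^{1/(m-r)}$, observe that $\BS_r(\bw^{(r)})$ then has CDF $s^r$, and verify that conditioned on $\bw_i=w$ the winning probability is exactly $w$. The only cosmetic difference is that the paper phrases the last step as ``the max strength of the other vertices is distributed as the max of $m-r(i)$ independent uniforms,'' whereas you write out the product over the actual multiplicity profile; the computations are the same.
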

\begin{proof}
    Straightforward computation verifies that \Cref{eq:good-w-expectation} is equivalent to each $\bw^{(r)}$ having the CDF.
    \begin{align*}
        \Pr[\bw^{(r)} \leq t] = t^{\frac{r}{m-r}} \quad \quad \forall t \in [0,1].
    \end{align*}
    Consider the strength functions, for each $r \in [m-1]$,
    \begin{align*}
        \BS_r(w) \coloneqq w^{\frac{1}{m-r}}.
    \end{align*}
    For each $r \in [m-1]$, we have that
    \begin{align*}
        \Pr[\BS_r(\bw^{(r)}) \leq t] = \Pr[\bw^{(r)} \leq t^{m-r}] = t^r.
    \end{align*}
    Therefore, $\BS_r(\bw^{(r)})$ is distributed according to the max of $r$ independent variables drawn from $\Uniform(0,1)$. A vertex $i$ is selected in a round if its strength is larger than the strengths of each of the other vertices. If $i$ has multiplicity $r(i)$, then the max strength of the other vertices is distributed according to the maximum of $m - r(i)$ independent uniforms. Therefore,
    \begin{align*}
        \Pr[i\text{ selected with strength }w] &= \Pr[\text{max of ($m - r(i)$) independent uniforms} < \BS_{r(i)}(w)] \\
        &= \BS_{r(i)}(w)^{m - r(i)}\\
        &= w,
    \end{align*}
    Furthermore, $\BS_r$ is increasing for all $r  \in [m-1]$. Therefore, both criteria of ($\diamondsuit$) hold.
\end{proof}

\section{Constructing a win distribution}
\label{sec:W}

In this section we will construct a distribution $(\bx_1, \bx_2, \ldots) \sim \W$ that meets the three criteria of \Cref{thm:meta-algorithm} and also leads to a good competitive ratio. We do this in two steps. First, we construct a ``seed distribution" $\D$ that returns an infinite sequence $(\by_1, \by_2, \ldots)$ of random variables that are negatively associated and shift-invariant, but the values $\by_j$ do not represent probabilities (they are unbounded). Then, for a carefully chosen increasing function $W$, we set $\bx_k = W(\by_k)$ to ensure the third criteria of \Cref{thm:meta-algorithm} holds. We can think of $W$ as mapping an output of $\D$ to a desired win probability.

In the remainder of this section, $p$ is a hyperparameter than impacts the distribution of $\D$ and $\W$. For concreteness, we will eventually set $p = 0.48$ to optimize the competitive ratio, a value we selected using brute-force search.

\begin{definition}[$\D$]
    \label{def:D}
    A sample $(\by_1, \by_2, \ldots) \sim \D$ is generated by the following process, where each random choice is independent.
    \begin{enumerate}
        \item Initialize $\bz_1 \sim \Geo(p)$.
        \item For $k = 1,2, \ldots$,
        \begin{enumerate}
            \item With probability $p$, set $\by_k = \bz_k$ and then reset $z$, setting $\bz_{k+1} \leftarrow 0$.
            \item Otherwise, set $\by_k=-1$ and increment $z$, setting $\bz_{k+1} \leftarrow \bz_{k}+ 1$
        \end{enumerate}
    \end{enumerate}
\end{definition}

It will often by convenient for us to refer to the following variants of $\D$.
\begin{definition}
    We use the following notation to specify variants of $\D$.
    \begin{enumerate}
        \item For any $v \geq 0$, we use $\D_v$ to refer to the generative process of \Cref{def:D} where $\bz_1$ is initialized to $v$ instead of drawn from $\Geo(p)$.
        \item For any $n \geq 1$, we use $\Dn$ to refer to first $n$ element of $\D$; i.e. to $\by_1, \ldots, \by_n$ where $\by \sim D$.
        \item We use $\D(y,z)$ to refer to the distribution that outputs two infinite sequences, one corresponding to the $(\by_1, \by_2, \ldots)$ and the other to the $(\bz_1, \bz_2, \ldots)$ in \Cref{def:D}.
    \end{enumerate}
    We'll also mix and match the above notation. For example, we might use $\by, \bz \sim \Dn(y,z)$ as shorthand for $\by$ and $\bz$ are each $n$-tuples containing $\by_1, \ldots, \by_n$ and $\bz_1, \ldots, \bz_n$ respectively.
\end{definition}
By examining the generative process in \Cref{def:D}, we observe that $\D$ is \emph{almost} memoryless.
\begin{fact}[$\D$ is memoryless except for $z$]
    \label{fact:memoryless}
    For any $k, v \geq 0$ and $\by, \bz \sim \D(y,z)$, the following distributions are identical.
    \begin{enumerate}
        \item The distribution of $(\by_k, \by_{k+1}, \ldots)$ conditioned on $\bz_k = v$.
        \item The distribution $\D_{v}$.
    \end{enumerate}
\end{fact}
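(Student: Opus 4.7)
The plan is to exploit the Markov structure that is manifest in the generative process of \Cref{def:D}. At each step $k$, the pair $(\by_k, \bz_{k+1})$ is produced by a time-homogeneous transition kernel applied to $\bz_k$ together with a fresh, independent Bernoulli($p$) coin: given $\bz_k = v$, with probability $p$ we set $\by_k = v$ and $\bz_{k+1} = 0$, and with probability $1-p$ we set $\by_k = -1$ and $\bz_{k+1} = v+1$. In particular, $(\by_k, \bz_{k+1})$ is conditionally independent of the history $(\by_1, \bz_1, \ldots, \by_{k-1}, \bz_{k-1})$ given $\bz_k$.

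Next, I would observe that the distribution $\D_v$ is, by definition, obtained by iterating this same one-step kernel starting from $\bz_1 = v$. So matching up step by step, the conditional distribution of the pairs $(\by_k, \bz_{k+1}), (\by_{k+1}, \bz_{k+2}), \ldots$ given $\bz_k = v$ agrees with the distribution of $(\by_1, \bz_2), (\by_2, \bz_3), \ldots$ under $\D_v$. Marginalizing onto the $\by$-coordinates yields exactly the claimed equality. Formally this is a short induction on the step index, or equivalently a direct invocation of the Markov property at time $k$: condition on $\bz_k$, replace the past by its effect on $\bz_k$, and note that the transition kernel does not depend on $k$.

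I expect no real obstacles in executing this plan. The only point worth checking carefully is that the conditioning event $\{\bz_k = v\}$ has positive probability so that the conditional distribution in item (1) is well-defined. Since $\bz_1 \sim \Geo(p)$ has full support on $\Z_{\geq 0}$ and every transition keeps $\bz$ in $\Z_{\geq 0}$ (either resetting to $0$ with probability $p$ or incrementing by $1$ with probability $1-p$), every value $v \geq 0$ is attained with positive probability at every step $k$, so the statement is unambiguous for all admissible $(k, v)$.
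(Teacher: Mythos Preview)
Your proposal is correct. The paper does not actually supply a proof of this fact; it simply introduces it with the sentence ``By examining the generative process in \Cref{def:D}, we observe that $\D$ is \emph{almost} memoryless,'' and then states the fact without further argument. Your write-up is thus a faithful fleshing-out of what the paper leaves implicit: you identify the time-homogeneous Markov kernel $(\bz_k \mapsto (\by_k,\bz_{k+1}))$ built into \Cref{def:D}, note that $\D_v$ is precisely this kernel iterated from initial state $v$, and conclude by the Markov property at time $k$. The extra care you take in verifying that $\{\bz_k = v\}$ has positive probability is a nice touch that the paper omits.
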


First, we prove that $\D$ is shift-invariant. This amounts to proving that for any $k \geq 1$, $\bz_k$ is distributed according to a $\Geo(p)$.
\begin{proposition}
    \label{prop:z-geo}
    Let $\by, \bz \sim \D(y,z)$. Then, for any $k \geq 1$, $\bz_k$ is distributed according to a $\Geo(p)$ distribution,
    \begin{align*}
        \Pr[\bz_k = v] = (1 - p)^v \dot p.
    \end{align*}
\end{proposition}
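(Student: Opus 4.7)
The plan is to prove this by induction on $k$. The base case $k=1$ holds immediately by the definition of $\D$, which sets $\bz_1 \sim \Geo(p)$.

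For the inductive step, I would use the generative process of \Cref{def:D} to express the distribution of $\bz_{k+1}$ in terms of $\bz_k$. Specifically, the coin flip at step $k$ is independent of $\bz_k$, and
\begin{align*}
    \bz_{k+1} = \begin{cases} 0 & \text{with probability } p \\ \bz_k + 1 & \text{with probability } 1 - p.\end{cases}
\end{align*}
Conditioning on the outcome of this coin flip and applying the inductive hypothesis that $\bz_k \sim \Geo(p)$, I would compute, for $v = 0$,
\begin{align*}
    \Pr[\bz_{k+1} = 0] = p + (1-p) \cdot \Pr[\bz_k = -1] = p,
\end{align*}
using the fact that $\bz_k$ is always nonnegative, and for $v \geq 1$,
\begin{align*}
    \Pr[\bz_{k+1} = v] = (1-p) \cdot \Pr[\bz_k = v - 1] = (1-p) \cdot (1-p)^{v-1} \cdot p = (1-p)^v \cdot p,
\end{align*}
matching the $\Geo(p)$ pmf. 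This closes the induction.

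There is no serious obstacle here; the argument is a one-step Markov chain computation showing that $\Geo(p)$ is the stationary distribution of the update rule $\bz \mapsto 0$ w.p.\ $p$, $\bz \mapsto \bz+1$ w.p.\ $1-p$, which is precisely the classic memoryless property of the geometric distribution. The only thing to be careful about is keeping track of the support ($\bz_k \geq 0$) so that the $v=0$ case is handled correctly and no probability mass is lost.
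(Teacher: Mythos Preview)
Your proposal is correct and follows essentially the same approach as the paper: induction on $k$, with the base case given by the initialization $\bz_1 \sim \Geo(p)$ and the inductive step handled by the one-step update rule from \Cref{def:D}, splitting into the cases $v=0$ and $v \geq 1$. The only cosmetic difference is that the paper indexes the induction from $k-1$ to $k$ rather than $k$ to $k+1$.
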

\begin{proof}
    By induction on $k$; $\bz_1$ is initialized from a geometric distribution, so \Cref{prop:z-geo} holds for $k = 1$.
    
    For the inductive step, consider some $k > 1$, $\bz_k = 0$ only it was reset at that round, which happens with probability $p$. For $v \geq 1$
    \begin{align*}
        \Pr[\bz_k = v] &= (1 - p) \cdot \Pr[\bz_{k-1} = v-1] \\
        &= (1 - p) \cdot (1 - p)^{v-1} \cdot p\\
        &= (1-p)^v \cdot p.
    \end{align*}
    Therefore, $\bz_k \sim \Geo(p)$.
\end{proof}
As a corollary of \Cref{fact:memoryless} and \Cref{prop:z-geo}, the following holds.
\begin{corollary}[$\D$ is shift-invariant]
   For any $k \geq 1$ and $\by \sim \D$, the distribution of $\by_1, \by_2, \ldots$ is identical to that of $\by_k, \by_{k+1}, \ldots$.
\end{corollary}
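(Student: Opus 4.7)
The plan is to combine the two results that have just been established, namely \Cref{fact:memoryless} (``$\D$ is memoryless except for $z$'') and \Cref{prop:z-geo} ($\bz_k \sim \Geo(p)$ for every $k \geq 1$). The intuition is that the only state carried between steps of the generative process in \Cref{def:D} is the current value of $\bz$, so the joint distribution of the tail $(\by_k, \by_{k+1}, \ldots)$ is completely determined by the distribution of $\bz_k$. Since $\bz_1$ and $\bz_k$ both have distribution $\Geo(p)$, the tails starting at positions $1$ and $k$ should be identically distributed.

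First I would fix $k \geq 1$ and let $\by, \bz \sim \D(y,z)$. By the law of total probability, for any measurable event $E$ on infinite sequences,
\begin{align*}
    \Pr[(\by_k, \by_{k+1}, \ldots) \in E] = \sum_{v \geq 0} \Pr[\bz_k = v] \cdot \Pr[(\by_k, \by_{k+1}, \ldots) \in E \mid \bz_k = v].
\end{align*}
By \Cref{fact:memoryless}, the conditional distribution $(\by_k, \by_{k+1}, \ldots) \mid \bz_k = v$ is exactly $\D_v$, so the conditional probability above equals $\Pr_{\by' \sim \D_v}[\by' \in E]$. By \Cref{prop:z-geo}, $\Pr[\bz_k = v] = (1-p)^v \cdot p$, which is precisely the $\Geo(p)$ weight that appears in the definition of $\D$ when sampling $\bz_1$.

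Next I would apply the same decomposition to the head of $\by$. By definition, $\D$ samples $\bz_1 \sim \Geo(p)$ and then runs the generative process; in particular, conditioned on $\bz_1 = v$, the sequence $(\by_1, \by_2, \ldots)$ is distributed as $\D_v$ (this is the $k=1$ case of \Cref{fact:memoryless}). Hence
\begin{align*}
    \Pr[(\by_1, \by_2, \ldots) \in E] = \sum_{v \geq 0} (1-p)^v \cdot p \cdot \Pr_{\by' \sim \D_v}[\by' \in E],
\end{align*}
which matches the expression obtained for $(\by_k, \by_{k+1}, \ldots)$. Therefore the two distributions agree on every event, proving shift-invariance.

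I do not anticipate a substantive obstacle here: the work has essentially already been done in \Cref{fact:memoryless} and \Cref{prop:z-geo}, and the corollary is just a matter of assembling them via total probability. The only technical care needed is to make sure the memoryless statement is applied with the right conditioning (on $\bz_k$, not on the entire history $\by_1, \ldots, \by_{k-1}$), which is exactly the form in which \Cref{fact:memoryless} was stated.
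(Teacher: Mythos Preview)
Your proposal is correct and matches the paper's approach exactly: the paper states the corollary immediately after \Cref{fact:memoryless} and \Cref{prop:z-geo} with the remark ``As a corollary of \Cref{fact:memoryless} and \Cref{prop:z-geo}'' and gives no further proof. Your total-probability argument conditioning on $\bz_k$ is precisely the intended assembly of those two ingredients.
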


\subsection{$\D$ is negatively associated}
We devote this subsection to establishing that $\D$ is negatively associated. Recall the definition for negatively associated variables.
\NA*

For starters, we prove (a strengthening of) the above definition holds for $\Dn$ and $\Dn_0$ if we restrict $A_1 = \{n\}$.
\begin{lemma}
    \label{lem:NA-just-last-bit}
    For any $n \geq 1$ and indices $S \subseteq [n - 1]$, let $\by_1, \ldots, \by_n$ be drawn as
    \begin{align*}
        (\by_1, \ldots, \by_n) \sim (\Dn \mid \by_i = -1 \text{ for each $i \in S$})
    \end{align*}
   Then, for any increasing functions $f_1: \Z \to \R, f_2: \Z^{n-1} \to \R$,
    \begin{align*}
        \Covx_{\by_1, \ldots, \by_n }\big[f_1(\by_n), f_2(\by_1, \ldots, \by_{n-1} )\big] \leq 0
    \end{align*}
    The same is true when using $\Dn_0$ rather than $\Dn$.
\end{lemma}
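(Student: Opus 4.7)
\textit{Plan.} I would proceed by strong induction on $n$, with the inductive hypothesis covering both $\Dn$ and $\Dn_0$ (and $\Dn_v$ for $v \geq 0$) and every conditioning set $S$; the base $n=1$ is vacuous. For the inductive step I condition on the first firing position in $[n-1]$, namely $T_1 \coloneqq \min\{i \in [n-1]: \by_i \geq 0\}$ (setting $T_1 = \infty$ if no such $i$ exists). By Fact~\ref{fact:memoryless}, on $\{T_1 = t\}$ with $t$ finite, the suffix $(\by_{t+1},\ldots,\by_n)$ is an independent $\Dn_0^{(n-t)}$ sample, the value $\by_t = \bz_1 + t - 1$ depends only on $\bz_1 \sim \Geo(p)$, and $\bz_1$ is independent of the suffix. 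The law of total covariance in $T_1$ gives
\[
\Cov[f_1(\by_n), f_2(\by_{[n-1]})] = \Ex_{T_1}\!\big[\Cov[\cdot,\cdot \mid T_1]\big] \;+\; \Cov_{T_1}\!\big[\Ex[f_1(\by_n)\mid T_1],\, \Ex[f_2(\by_{[n-1]})\mid T_1]\big].
\]

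\textit{Inner term.} The first term is nonpositive termwise. On $\{T_1 = t\}$ with $t$ finite, writing $f_2(\by_{[n-1]}) = F_{\bz_1}(\by'_1,\ldots,\by'_{n-1-t})$ (with the $-1$-prefix and the value $\bz_1 + t - 1$ absorbed into the definition of $F_{\bz_1}$), $F_{\bz_1}$ is increasing in the suffix variables. The inductive hypothesis in $\Dn_0^{(n-t)}$ (with $S$ shifted to index the suffix) then gives $\Cov[f_1(\by_n), F_{\bz_1}(\cdot) \mid T_1 = t, \bz_1] \leq 0$; the outer $\bz_1$ covariance vanishes because $\Ex[f_1(\by_n) \mid T_1, \bz_1]$ does not depend on $\bz_1$ (the suffix is independent of $\bz_1$). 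On $\{T_1 = \infty\}$ the vector $\by_{[n-1]}$ is deterministically $(-1,\ldots,-1)$, so the conditional covariance is $0$.

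\textit{Outer covariance and main obstacle.} For the second term I would compute $\Ex[f_1\mid T_1]$ and $\Ex[f_2\mid T_1]$ explicitly. Using the identity $\Pr[\bz_k^{\Dn_0} \geq j] = (1-p)^j$ for $k > j$, the law of $\by_n \mid T_1 = t$ is stochastically increasing as the suffix length $n - t$ grows, and $T_1 = \infty$ stochastically dominates every finite case via $\bz_n = \bz_1 + n - 1 \geq n-1$. A corresponding coupling of $\Dn_0$-suffixes of different lengths, combined with shift-invariance and the change in prefix structure as $T_1$ varies, should yield a signed comparison of $\Ex[f_2\mid T_1]$ pair by pair. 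Using the pairwise representation
\[
\Cov_{T_1}[A,B] = \tfrac12 \sum_{s,t}\Pr[T_1{=}s]\Pr[T_1{=}t](A(s)-A(t))(B(s)-B(t))
\]
and checking signs pairwise then delivers $\leq 0$. The principal obstacle is exactly this outer covariance: a direct monotonicity argument fails because $\Ex[f_2\mid T_1]$ need not be monotone in any natural ordering of $T_1$ (already for $n=3$ and $f_2 = y_1$ the values along $T_1 \in \{1,2,\infty\}$ are non-monotone), so one must carry out the pairwise sign analysis with care, handling the finite-vs-$\infty$ comparison separately, or strengthen the inductive hypothesis to a stochastic-domination statement robust enough to feed into the pairwise step.
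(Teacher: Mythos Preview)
Your proposal has a genuine gap, and you correctly identify it yourself: the outer covariance term
\[
\Cov_{T_1}\!\big[\Ex[f_1(\by_n)\mid T_1],\, \Ex[f_2(\by_{[n-1]})\mid T_1]\big]
\]
is not controlled. You observe that neither conditional expectation is monotone in any single ordering of $T_1$ (your $n=3$, $f_2 = y_1$ example already shows this), and you leave the resolution to an unspecified ``pairwise sign analysis'' or an unspecified strengthening of the inductive hypothesis. Neither is carried out, and it is not clear that either would close the gap without essentially rediscovering the structural fact the paper exploits. The inner-term argument via the inductive hypothesis is fine, but since the outer term is where all the difficulty lies, the proof as written does not go through.

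The paper's argument is entirely different and avoids induction for this lemma altogether. Instead of conditioning on the \emph{first} firing time $T_1$, it conditions on the \emph{last} coordinate $\by_n$ via the level sets $\{\by_n \geq j\}$. The key structural observation is that, because each step's emit/increment decision is an independent $\Ber(p)$ coin, the event $\{\by_n \geq j\}$ (for $j \geq 0$) is exactly the event that step $n$ is an emit and steps $n-1,\ldots,n-j$ are all increments; hence conditioning on it forces $\by_{n-1} = \cdots = \by_{n-j} = -1$ and leaves the law of $\by_1,\ldots,\by_{n-j-1}$ unchanged. Passing from $j$ to $j+1$ therefore just sets one more coordinate of $\by_{[n-1]}$ to $-1$, which can only decrease the increasing function $f_2$. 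So $j \mapsto \Ex[f_2(\by_{[n-1]}) \mid \by_n \geq j]$ is monotone decreasing, and writing $f_1(k) = \sum_{j=0}^{k} g(j)$ with $g \geq 0$ (after normalizing $f_1(-1)=0$) and regrouping gives
\[
\Ex[f_1(\by_n) f_2(\by_{[n-1]})] = \sum_{j \geq 0} g(j)\,\Pr[\by_n \geq j]\,\Ex[f_2 \mid \by_n \geq j] \;\leq\; \Ex[f_1(\by_n)] \cdot \Ex[f_2(\by_{[n-1]})].
\]
This ``backward'' conditioning on $\by_n$ is what your ``forward'' conditioning on $T_1$ misses: it aligns the conditioning with the variable $f_1$ depends on and turns the structural fact about $\D$ into a clean monotonicity statement, bypassing the outer-covariance obstacle entirely.
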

We only prove \Cref{lem:NA-just-last-bit} for $\Dn$ as the proof for $\Dn_0$ is identical. Throughout the following proof, the distribution on $\by_1, \ldots, \by_n$ is by default $(\Dn \mid \by_i = -1 \text{ for each $i \in S$})$.
\begin{proof}
    Without loss of generality, we can assume $f_1(-1) = 0$; otherwise, add an appropriate constant to $f_1$ so that it's true. Since $f_1$ is an increasing function of $\by_n$ and $\by_n \geq -1$, we can write,
    \begin{align*}
        f_1(k) = \sum_{j=0}^k g(j)
    \end{align*}
    for an appropriate nonnegative function $g: \Zpos \to \R_{\geq 0}$. We then expand the first term of the desired covariance.
    \begin{align*}
        \Ex_{\by_1, \ldots, \by_n }\big[f_1(\by_n) \cdot  f_2(\by_1 \cdot \ldots, \by_{n-1} )\big] &= \sum_{k = -1}^{\infty} f_1(k)  \cdot  \Pr[\by_n = k] \cdot \Ex\big[f_2(\by_1, \ldots, \by_{n-1}) \mid \by_n = k\big] \\
        &= \sum_{k = -1}^{\infty} \sum_{j=0}^k g(j) \cdot \Pr[\by_n = k] \cdot  \Ex\big[f_2(\by_1, \ldots, \by_{n-1}) \mid \by_n = k\big] \\
        &= \sum_{j = 0}^{\infty} \sum_{k = j}^{\infty} g(j) \cdot \Pr[\by_n = k] \cdot  \Ex\big[f_2(\by_1, \ldots, \by_{n-1}) \mid \by_n = k\big] \\
        &= \sum_{j = 0}^{\infty} g(j) \cdot \Pr[\by_n \geq j] \cdot  \Ex\big[f_2(\by_1, \ldots, \by_{n-1}) \mid \by_n \geq j\big]
    \end{align*}
    
    We claim that $\Ex\big[f_2(\by_1, \ldots, \by_{n-1}) \mid \by_n \geq j\big]$ is a decreasing function of $j$. First we note that conditioning on $(\by_n \geq j)$ forces that $\by_{n - j} =\ldots = \by_{n - j - 1} = - 1$ and leaves the distribution of $\by_1, \ldots, \by_{n - j -2}$ unchanged.
    
    Suppose we draw $\by_1, \ldots, \by_{n-1}$ conditioned on $(\by_n \geq j)$. Then, we set $\by_{n - (j + 1)}$ to $-1$ and leave all other $\by_1, \ldots, \by_{n-1}$ unchanged. This is equivalent to drawing $\by_1, \ldots, \by_{n-1}$ conditioned on $(\by_n \geq j + 1)$. Importantly, setting $\by_{n - (j + 1)} = -1$ can only decrease the value of $f_2(\by_1, \ldots, \by_{n-1})$, because $f_2$ is increasing. Therefore, we have that,
    \begin{align}
        \label{eq:conditionally-decreasing}
        &\Ex\big[f_2(\by_1, \ldots, \by_{n-1}) \mid \by_n \geq j\big] \quad \quad\text{is decreasing in $j$}.
    \end{align}
    We can now prove the desired result.
    \begin{align*}
        \Ex\big[&f_1(\by_n) \cdot  f_2(\by_1 \cdot \ldots, \by_{n-1} )\big] \\
        &= \sum_{j = 0}^{\infty} g(j) \cdot \Pr[\by_n \geq j] \cdot  \Ex\big[f_2(\by_1, \ldots, \by_{n-1}) \mid \by_n \geq j\big] \\
        &\leq  \sum_{j = 0}^{\infty} g(j) \cdot \Pr[\by_n \geq j] \cdot  \Ex\big[f_2(\by_1, \ldots, \by_{n-1}) \mid \by_n \geq -1\big] \\
        &= \left(\sum_{j = 0}^{\infty} g(j) \cdot \Pr[\by_n \geq j]\right) \cdot  \left(\Ex\big[f_2(\by_1, \ldots, \by_{n-1})\big]\right) \\
        &= \Ex\big[f_1(\by_n)\big] \cdot \Ex\big[f_2(\by_1 \cdot \ldots, \by_{n-1} )\big].
    \end{align*}
    Where the inequality follows from \Cref{eq:conditionally-decreasing}.
\end{proof}

Next, we prove that $\Dn$ is negatively associated.
\begin{lemma}
    \label{lem:DN-NA}
    For any $n \geq 1$, draw $\by_1, \ldots, \by_n \sim \Dn$. Then, $\by_1, \ldots, \by_n$ are negatively associated.
\end{lemma}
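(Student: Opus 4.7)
The plan is induction on $n$. The base case $n=1$ is immediate since a single random variable is vacuously negatively associated. For the inductive step, fix disjoint $A_1, A_2 \subseteq [n]$ and increasing $f_1, f_2$, and aim to show $\Cov[f_1(\by_{A_1}), f_2(\by_{A_2})] \leq 0$. If $n \notin A_1 \cup A_2$, the marginal distribution of $(\by_1, \ldots, \by_{n-1})$ under $\Dn$ is exactly $\Dn[n-1]$ (the $n$-th coin flip doesn't affect earlier coordinates), so the inductive hypothesis finishes this case. The meat is when $n \in A_2$ (WLOG); write $B := A_2 \setminus \{n\}$.

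The crucial structural fact I would exploit is that the $n$-th coin flip $\bc_n$ is independent of $\by_{[n-1]}$: with probability $1-p$ we have $\by_n = -1$, and with probability $p$ we have $\by_n = \bz_n(\by_{[n-1]})$ where $\bz_n$ is a deterministic function of the past. Conditioning on $\bc_n$ then splits
\[
\Cov[f_1,f_2] = (1-p)\,\Cov\!\big[f_1(\by_{A_1}),\,f_2(\by_B,-1)\big] + p\,\Cov\!\big[f_1(\by_{A_1}),\,f_2(\by_B,\bz_n(\by_{[n-1]}))\big],
\]
where both right-hand covariances are under $\by_{[n-1]} \sim \Dn[n-1]$. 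The first term is $\leq 0$ by the inductive hypothesis applied to $\Dn[n-1]$, since $f_2(\by_B,-1)$ is an increasing function of $\by_B$ and $B$ is disjoint from $A_1$.

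The problematic second term is where I expect the main obstacle. The quantity $\bz_n$ equals $n-1-\bi^*$ where $\bi^* = \max\{i \in [n-1] : \by_i \geq 0\}$, and $\bz_n$ is a non-monotone function of $\by_B$, destroying the direct applicability of the inductive hypothesis to $f_2(\by_B,\bz_n)$. My plan is to further condition on $\bi^*$: on the event $\{\bi^* = i^*\}$ with $i^* \geq 1$, the coordinates $\by_{i^*+1},\ldots,\by_{n-1}$ are forced to $-1$, $\bz_n = n-1-i^*$ becomes a constant, $(\by_1,\ldots,\by_{i^*-1})$ has distribution $\Dn[i^*-1]$ by independence of the relevant coins, and $\by_{i^*} = \bz_{i^*}$ is determined by $\by_{[i^*-1]}$ and $\bz_1$ with marginal $\Geo(p)$ (by \Cref{prop:z-geo} and \Cref{fact:memoryless}). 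Within each stratum, $f_2(\by_B,n-1-i^*)$ is now genuinely increasing in $\by_B$, so the within-stratum covariance is non-positive by the inductive hypothesis on $\Dn[i^*-1]$ combined with \Cref{lem:NA-just-last-bit} to couple in the deterministic $\by_{i^*} = \bz_{i^*}$ coordinate.

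The main technical obstacle is controlling the between-stratum contribution from the law of total covariance, namely $\Cov\!\big[\E[f_1 \mid \bi^*], \E[f_2(\by_B,\bz_n) \mid \bi^*]\big]$. As $\bi^*$ increases, $\bz_n = n-1-\bi^*$ decreases (driving $\E[f_2 \mid \bi^*]$ down via $f_2$'s monotonicity in its last coordinate), while the conditional law of $\by_{A_1}$ shifts in the opposite direction (loosely: more freedom in the suffix, but more constraints on the suffix being $-1$); a careful case analysis on $A_1 \cap [i^*+1,n-1]$ versus $A_1 \cap [1,i^*]$, again invoking the inductive hypothesis on smaller distributions, is what makes this opposite-monotonicity rigorous. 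This book-keeping — marrying \Cref{lem:NA-just-last-bit} with the inductive hypothesis across strata of $\bi^*$ — is the heart of the proof.
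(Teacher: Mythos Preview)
Your decomposition by first conditioning on the $n$-th coin $\bc_n$ is fine, and the $\bc_n=0$ branch does fall to the inductive hypothesis. The gap is in the $\bc_n=1$ branch, specifically the between-stratum term in your law-of-total-covariance after conditioning on $\bi^\star=\max\{i\in[n-1]:\by_i\ge 0\}$. Your claimed ``opposite monotonicity'' of $\E[f_1(\by_{A_1})\mid\bi^\star]$ and $\E[f_2(\by_B,\bz_n)\mid\bi^\star]$ is false in general. Take $n=4$, $A_1=\{2\}$, $A_2=\{1,3,4\}$ (so $B=\{1,3\}$), and linear $f_1,f_2$. Then for $\bi^\star=1,2,3$ one computes $\E[f_1\mid\bi^\star]=-1,\ (1-p)/p,\ 0$ and $\E[f_2(\by_B,\bz_4)\mid\bi^\star]=1/p,\ 0,\ (1-p)/p$; neither sequence is monotone. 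The competing effects you identified (as $\bi^\star$ increases, $\bz_n$ decreases but more coordinates of $\by_B$ become ``free'') do not resolve in one direction, and no amount of case analysis on $A_1\cap[\bi^\star+1,n-1]$ will produce the monotonicity you need, because the stratum $\bi^\star=i^\star$ singles out coordinate $i^\star$ (forcing $\by_{i^\star}=\bz_{i^\star}\ge 0$) regardless of whether $i^\star$ lies in $A_1$, $B$, or neither.

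The paper sidesteps this entirely with two coupled ideas you are missing. First, it does \emph{not} stratify by the global last-reset index; instead it conditions on a ``left signature'' defined relative to $A_1$ (the set containing $n$): the largest $i^\star\in A_1\setminus\{n\}$ with $\by_{i^\star}\ge 0$, together with all values $\by_1,\ldots,\by_{i^\star}$. This freezes every coordinate of $A_1\setminus\{n\}$, so within each stratum $f_1$ depends only on $\by_n$, and \Cref{lem:NA-just-last-bit} applies directly to give $\E[f_1 f_2\mid\ell]\le\E[f_1\mid\ell]\E[f_2\mid\ell]$. Second, the paper introduces $\bar f_1$, the average of $f_1$ over $\by_n$ given $\by_{A_1\setminus\{n\}}$; this $\bar f_1$ is \emph{constant} on each signature stratum and equals $\E[f_1\mid\ell]$ there, so $\E[f_1\mid\ell]\E[f_2\mid\ell]=\E[\bar f_1 f_2\mid\ell]$. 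Averaging over $\ell$ yields $\E[f_1 f_2]\le\E[\bar f_1 f_2]$, and since $\bar f_1$ no longer depends on $\by_n$, the inductive hypothesis on $\D^{(n-1)}$ finishes. The point is that tailoring the stratification to $A_1$ and passing through $\bar f_1$ eliminates any between-stratum covariance term; your global stratification by $\bi^\star$ creates one you cannot control.
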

\begin{proof}
    By induction on $n$. For $n = 1$, the desired result holds because a single variable is vacuously negatively associated.
    
    For any $n \geq 2$ fix some disjoint sets $A_1, A_2 \subseteq [n]$ and increasing functions $f_1: \Z^{|A_1|} \to \R, f_2:\Z^{|A_2|} \to \R$. Without loss of generality, we can assume that $n \in A_1$ and $n \notin A_2$. We define an ``averaging" of $f_1$ that does \emph{not} depend upon $\by_n$.
    \begin{align*}
        \bar{f}_1(\by_{A_1}) \coloneqq \Ex_{\by \sim \Dn} \big[f_1(\by_{A_1}) \mid \by_{i} = \by_i \text{ for each $i \in (A_1 \setminus \{n\})$} \big]
    \end{align*}
    Importantly, neither $\bar{f_1}$ nor $f_2$  depend on $\by_n$, so we can apply the inductive hypothesis.
    \begin{align}
        \label{eq:f-bar-inductive}
        \Ex[\bar{f}_1(\by_{A_1}) \cdot f_2(\by_{A_2})] \leq  \Ex[\bar{f}_1(\by_{A_1})] \cdot \Ex[f_2(\by_{A_2})]
    \end{align}
    Furthermore, $\bar{f}_1$ and $f_1$ have the same expectation. Our next goal is to show that
    \begin{align*}
        \Ex[f_1(\by_{A_1}) \cdot f_2(\by_{A_2})] \leq \Ex[\bar{f}_1(\by_{A_1}) \cdot f_2(\by_{A_2})]
    \end{align*}
    The above two equations will imply the desired result. 
    
    Define the ``left signature" of some $y_1, \ldots, y_n$ as follows. If $y_i = -1$ for all $i \in A_1 \setminus \{n\}$, then the signature is $\mathcal{L}(y_1, \ldots, y_n) = \emptyset$. Otherwise, let $i^\star$ be the largest $i \in A_1 \setminus \{n\}$ where $y_i \geq 0$ and the signature is $\mathcal{L}(y_1, \ldots, y_n) = (y_1, \ldots, y_{i^\star})$. We claim for any choice of left signature, $\ell$, that
    \begin{align*}
        \Ex[f_1(\by_{A_1}) \cdot f_2(\by_{A_2}) \mid \mathcal{L}(\by) = \ell] \leq \Ex[\bar{f}_1(\by_{A_1}) \cdot f_2(\by_{A_2}) \mid \mathcal{L}(\by) = \ell] .
    \end{align*}
    First consider the case where $\ell \neq \emptyset$. Conditioning upon signature $\ell$ is equivalent to:
    \begin{enumerate}
        \item Fixing the values of $\by_1, \ldots, \by_{i^{\star}}$ to those in $\ell$.
        \item Conditioning on $\by_j = -1$ for all $j \in A_1 \setminus \{n\}$ where $j > i^{\star}$. 
    \end{enumerate}
    By \Cref{fact:memoryless}, the distribution of $\by_{i^\star + 1}, \ldots, \by_n$ is the same as $\mathcal{D}^{(n - i^\star)}_0$ conditioned on some of the elements being $-1$. This is a distribution for which we can apply \Cref{lem:NA-just-last-bit}. We also need $f_1$ to just depend on the last bit, $\by_n$. Fortunately, conditioning on the signature fixes all bits in $A_1$ except for $\by_n$ so that property is satisfied. Therefore, by \Cref{lem:NA-just-last-bit}.
    \begin{align}
        \label{eq:cov-neg-conditioning-signature}
        \Ex[f_1(\by_{A_1}) \cdot f_2(\by_{A_2}) \mid \mathcal{L}(\by) = \ell] &\leq \Ex[f_1(\by_{A_1}) \mid \mathcal{L}(\by) = \ell] \cdot \Ex[f_2(\by_{A_2}) \mid \mathcal{L}(\by) = \ell]
    \end{align}
    In the other case where $\ell = \emptyset$, conditioning on $\ell$ is equivalent to drawing $\by_1, \ldots, \by_n$ from $(\Dn \mid y_i = -1 \text{ for each }i \in A_{1} \setminus \{n\})$. Once again, after this conditioning, $f_1$ only depends on $\by_n$, so we can apply \Cref{lem:NA-just-last-bit} and \Cref{eq:cov-neg-conditioning-signature} holds.
    
    Next, we wish to substitute in $\bar{f}_1$ for $f_1$. Fixing $\mathcal{L}(\by) = \ell$ fixes all variables in $A_1$ except for $n$. By the definition of $\bar{f}_1$, we have that
    \begin{align}
        \label{eq:f-bar-constant}
        \Ex[f_1(\by_{A_1}) \mid \mathcal{L}(\by) = \ell] = \bar{f}_1(x) \text{ for any } x \text{ where } \mathcal{L}(x) = \ell.
    \end{align}
    We'll use two aspects of the above equation: 
    \begin{enumerate}
        \item That $\bar{f}_1(x)$ has the same expectation as $f_1$ when conditioning on the signature.
        \item That $\bar{f}_1(x)$ is constant when conditioning on the signature.
    \end{enumerate},
    Combining the above with \Cref{eq:cov-neg-conditioning-signature}, where $\bell$ is drawn from $\mathcal{L}(\by)$ for $\by \sim \Dn$,
    \begin{align*}
        \Ex\big[f_1(\by_{A_1}) \cdot f_2(\by_{A_2})\big] &= \Ex_{\bell}\big[\Ex[f_1(\by_{A_1}) \cdot f_2(\by_{A_2}) \mid \mathcal{L}(\by) = \ell]\big] \tag{Law of total expectation}\\
        &\leq \Ex_{\bell}\big[\Ex[f_1(\by_{A_1}) \mid \mathcal{L}(\by) = \ell] \cdot \Ex[f_2(\by_{A_2}) \mid \mathcal{L}(\by) = \ell] \big] \tag{\Cref{eq:cov-neg-conditioning-signature}} \\
        &=  \Ex_{\bell}\big[\Ex[\bar{f}_1(\by_{A_1}) \mid \mathcal{L}(\by) = \ell] \cdot \Ex[f_2(\by_{A_2}) \mid \mathcal{L}(\by) = \ell] \big] \tag{\Cref{eq:f-bar-constant}}\\
        &=  \Ex_{\bell}\big[\Ex[\bar{f}_1(\by_{A_1}) \cdot f_2(\by_{A_2}) \mid \mathcal{L}(\by) = \ell] \big] \tag{$\bar{f}_1$ constant conditioned on $\ell$} \\
        &=  \Ex\big[\bar{f}_1(\by_{A_1}) \cdot f_2(\by_{A_2}) \big] \tag{Law of total expectation} \\
        &\leq \Ex\big[\bar{f}_1(\by_{A_1})\big] \cdot \Ex\big[f_2(\by_{A_2})\big] \tag{\Cref{eq:f-bar-inductive}} \\
        &= \Ex\big[f_1(\by_{A_1})\big] \cdot \Ex\big[f_2(\by_{A_2})\big] \tag{$\Ex\big[f_1\big] = \Ex\big[\bar{f}_1\big]$}
    \end{align*}
    Hence $f_1(\by_{A_1})$ and $f_2(\by_{A_2})$ are negatively correlated, so $\by_1, \ldots, \by_n$ satisfy the definition of negative association.
\end{proof}

\subsection{Choosing $W$}
Recall that we will set the output of $\W$ to $(W(\by_1), W(\by_2), \ldots)$ for $\by \sim \D$, where $W$ is some increasing function. Our goal in this section is to make $W$ as large as possible while still ensuring that the output of $\W$ isn't too large, meaning \Cref{eq:w-not-too-large} holds.

First, consider the case when every $A_j$ does not contain duplicates (meaning $r_j(i) \leq 1$ for all $i$). In this case, we have that the desired win probability of each vertex $i \in A_j$ is
\begin{align*}
    \bw_j(i) = \bx_{k_j(i)} = W(\by_{k_j(i)})
\end{align*}
One natural way to pick a winner this round is to pick the vertex with the largest $\by_{k_j(i)}$, breaking ties at random. Indeed, we choose $W$ in such a way that this occurs. Let $\bi_{\max}(y_1, \ldots, y_m)$ be the function that return the index $i$ maximizing $y_i$, choosing randomly from maximal $y_i$ in the case of a tie. We define
\begin{align}
    \label{eq:def-W}
    W(y) \coloneqq \Prx_{\by_2, \ldots, \by_m\overset{\mathrm{iid}}{\sim} \D^{(1)}}[\bi_{\max}(y, \by_2, \ldots, \by_m) = 1]
\end{align}
This $W(y)$ is a general choice than can work for any seed distribution $\D$ and, as long as $\D$ has negative correlations, this choice of $W$ results in $\W$ having negative correlations as well. Furthermore, the reason that we need \Cref{eq:w-not-too-large} is so that it's possible to design a tournament subroutine, and this choice of $W(y)$ allows for an easy tournament subroutine when each vertex only appears once in $A_j$: Just pick the vertex with largest desired win probability. Indeed, in \Cref{lem:suff-small-1}, we show that \Cref{eq:w-not-too-large} holds in the case of $r = 1$, though it still needs to be verified separately for $r \geq 2$. First, we compute the function $W$ for our $\D$.
\begin{lemma}
    \label{lem:compute-w}
    For $W$ defined in \Cref{eq:def-W} and $\D$ in \Cref{def:D},
    \begin{align}
        \label{eq:w-and-g}
        W(y) = \frac{G(y+1)^m - G(y)^m}{m\cdot(G(y+1) - G(y))} \quad\quad \text{where} \quad\quad G(v) \coloneqq \begin{cases}
        0 & \text{if } v = -1 \\
        (1 - p) + p \cdot (1 - (1 - p)^{v}) & \text{if } v \geq 0
        \end{cases}
    \end{align}
\end{lemma}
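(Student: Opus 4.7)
The plan is to directly unpack the definition of $W(y)$ by conditioning on how many of the iid samples $\by_2,\ldots,\by_m$ tie with $y$ (versus being strictly less or strictly greater). Since $\bi_{\max}$ breaks ties uniformly at random, the event $\{\bi_{\max}(y,\by_2,\ldots,\by_m)=1\}$ requires (a) no $\by_i$ strictly exceeds $y$, and (b) among the $k+1$ indices tied for the max (the first slot plus the $k$ others equal to $y$), the uniform tie-break lands on index $1$, contributing a factor $\tfrac{1}{k+1}$.

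First I would compute the marginal distribution of $\by_1 \sim \mathcal{D}^{(1)}$: inspecting Definition~\ref{def:D}, $\by_1=-1$ with probability $1-p$, and for $v\geq 0$, $\Pr[\by_1=v] = p\cdot\Pr[\bz_1=v] = p^2(1-p)^v$. A direct computation then identifies the function $G$ in \Cref{eq:w-and-g} as the strict left CDF $G(v)=\Pr[\by_1<v]$; in particular $G(y+1) = \Pr[\by_1\leq y]$ and $G(y+1)-G(y) = \Pr[\by_1=y]$, which is the right-hand-side denominator we need.

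Next, writing $q \coloneqq G(y+1)-G(y) = \Pr[\by_1=y]$ and $r \coloneqq G(y) = \Pr[\by_1<y]$, and using independence of $\by_2,\ldots,\by_m$, I would expand
\begin{align*}
W(y) \;=\; \sum_{k=0}^{m-1} \binom{m-1}{k}\, q^{k}\, r^{\,m-1-k}\cdot \tfrac{1}{k+1}.
\end{align*}
The key algebraic step is the identity $\tfrac{1}{k+1}\binom{m-1}{k} = \tfrac{1}{m}\binom{m}{k+1}$, after which reindexing by $j=k+1$ gives
\begin{align*}
W(y) \;=\; \frac{1}{m\,q}\sum_{j=1}^{m}\binom{m}{j} q^{j} r^{\,m-j} \;=\; \frac{(q+r)^{m} - r^{m}}{m\,q}.
\end{align*}
Finally, substituting $q+r = G(y+1)$ and $r = G(y)$ and $q = G(y+1)-G(y)$ yields the claimed formula.

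The only subtle point (the "main obstacle," though modest) is getting the tie-breaking bookkeeping right: one must include the contribution from the first index in the tie group so the denominator is $k+1$ rather than $k$, and one must treat the edge case $q=0$ separately (i.e. if $\Pr[\by_1=y]=0$, meaning $y\notin\{-1,0,1,\ldots\}$ in the support), in which case both sides of \Cref{eq:w-and-g} should be interpreted via the limit (equivalently, the summation identity shows both numerator and denominator vanish together with a well-defined quotient). All remaining steps are routine binomial-theorem manipulations.
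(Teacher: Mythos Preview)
Your proof is correct, but it takes a different route from the paper's. The paper avoids the tie-counting sum entirely by introducing a continuous coupling: it draws $\ba_1,\ldots,\ba_m\overset{\mathrm{iid}}{\sim}\Uniform(0,1)$, sets each $\by_i$ to the smallest $v$ with $G(v+1)\ge\ba_i$, and declares $\bi^\star$ to be the index maximizing $\ba_i$. This reproduces the joint law of $(\by_1,\ldots,\by_m,\bi_{\max})$, and since the $\ba_i$ are continuous there are no ties to track. Conditioning on $\by_1=y$ makes $\ba_1\sim\Uniform(G(y),G(y+1))$, so $W(y)=\Ex[\ba_1^{m-1}]=\tfrac{1}{G(y+1)-G(y)}\int_{G(y)}^{G(y+1)}t^{m-1}\,dt$, which evaluates directly to the claimed formula. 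Your combinatorial argument via the identity $\tfrac{1}{k+1}\binom{m-1}{k}=\tfrac{1}{m}\binom{m}{k+1}$ is more elementary and stays entirely at the discrete level, at the cost of the explicit tie bookkeeping; the paper's coupling is slicker and sidesteps both the tie case and your $q=0$ edge case, since the integral form is well-defined even when $G(y+1)=G(y)$ (as a limit). Both arrive at the same destination with comparable effort.
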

\begin{proof}
    Let $\by_1, \ldots, \by_m$ be $m$ independent samples from $\D^{(1)}$, and let $\bi^\star = \bi_{\max}(\by_1, \by_2, \ldots, \by_m)$. We wish to compute the probability $\bi^\star = 1$ as a function of $\by_1$. First, we observe that $G$ as defined in \Cref{eq:w-and-g} satisfies, for each $i = 1, \ldots, m$,
    \begin{align}
        \label{eq:G-property}
        G(v) = \Pr[\by_i \leq v - 1]
    \end{align}
    We consider a different process for generating $\by_1, \ldots, \by_m$ that simplifies the analysis. In general, we can generate any univariate random variable $\by$ by first generating $\ba \sim \Uniform(0,1)$ and then setting $\by$ to the smallest $v$ such that $\Pr[\by \leq v] \geq \ba$. Applying this to our setting, we first sample $\ba_1, \ldots, \ba_m \sim \Uniform(0,1)$ and then for each $i \in [m]$ set $\by_i$ to the smallest $v$ such that $G(v + 1) \geq \ba_i$. Then, we set $\bi^\star$ to the unique $i$ maximizing $\ba_i$ (which exists with probability $1$ since the $\ba_i$ are continuous). 
    
    This generating process for $(\by_1, \ldots, \by_m, \bi^\star)$ gives the same joint distribution as the original, so we are free to analyze it instead. Then, the distribution of $\ba_i$ conditioned on $\by_i$ is given by
    \begin{align*}
        (\ba_i \mid \by_i = y) \sim \Uniform(G(y), G(y + 1)).
    \end{align*}
    We next compute the desired result.
    \begin{align*}
        \Pr[\bi^\star = 1 \mid \by_1 = y] &= \Pr[\ba_{i} < \ba_1 \text{ for all $i = 2, \ldots, m$} \mid \by_1 = y] \\
        &= \Ex_{\ba_1} \left[(\ba_1)^{m - 1} \right] \tag{$\ba_{i} \sim \Uniform(0,1)$ for each $i  = 2, \ldots, m$} \\
        &= \frac{1}{G(y+1) - G(y)} \cdot \int_{G(y)}^{G(y+1)}
        t^{m-1}dt \tag{$\ba_1 \sim \Uniform(G(y),G(y+1))$ } \\
         &= \frac{G(y+1)^m - G(y)^m}{m\cdot(G(y+1) - G(y))} ,
    \end{align*}
    which is exactly the definition of $W(y)$.
    
\end{proof}

Next, we prove that using the $W$ from \Cref{eq:def-W} makes our win distribution satisfy \Cref{eq:w-not-too-large} for $r = 1$. We prove the following Lemma for our particular choice of $\D$ but note that a similar result holds for any $\D$ (as long as $W$ is defined as \Cref{eq:def-W} as a function of that $\D$).

\begin{lemma}[Sufficiently small for $r = 1$]
    \label{lem:suff-small-1}
    Let $W$ be as defined in \Cref{eq:def-W}. For any $t \in (0,1)$, 
    \begin{align*}
        \Ex_{\by \sim \D^{(1)}} \left[\left(W(\by)- t \right)_+ \right] \leq 1 - t  - \frac{m-1}{m} \cdot \left (1 - t^{\frac{m}{m-1}} \right)
    \end{align*}
    In particular, \Cref{eq:w-not-too-large} holds for $r = 1$.
\end{lemma}
\begin{proof}
    Straightforward computation verifies that if $\ba \sim \Uniform(0,1)$ and $\bz = \ba^{m-1}$, then for all $t \in (0,1)$
    \begin{align*}
        \Ex_{\bz} \left[\left(\bz- t \right)_+ \right] = 1 - t  - \frac{m-1}{m} \cdot \left (1 - t^{\frac{m}{m-1}} \right).
    \end{align*}
    Let $\bw$ have the distribution of $W(\by)$ when $\by \sim \D^{(1)}$. By \Cref{fact:icx-expectations}, the desired result is equivalent to showing that $\bw \icx \bz$. We'll instead prove $\bw \cx \bz$ (a stronger statement) and do so via \Cref{fact:cx-coupling}.
    
    To use \Cref{fact:cx-coupling}, we need to couple $\bw$ and $\bz$. Consider the following coupled generating process.
    \begin{enumerate}
        \item Draw $\ba \sim \Uniform(0,1)$.
        \item Set $\hat{\bz} = \ba^{m-1}$.
        \item Set $\hat{\by}$ to smallest $v \in \Z_{\geq -1}$ such that $G(v+1) \geq y$ where $G$ is defined as in \Cref{eq:w-and-g}.
        \item Set $\hat{\bw} = W(\hat{\by})$.
    \end{enumerate}
    Clearly $\hat{\bz} \overset{d}{=} \bz$ as they have the same generative process. Furthermore, in the proof of \Cref{lem:compute-w}, we argued that $\hat{\by} \overset{d}{=}  \by$ which also implies $\hat{\bw} \overset{d}{=}  \bw$. Hence, in order to apply \Cref{fact:cx-coupling} and complete the proof of this Lemma, it is sufficient to show that $\Ex[\hat{\bx} \mid \hat{\bw}] = \hat{\bw}$. Equivalently, we wish to show that
    \begin{align*}
        \Ex[\hat{\bx} \mid \hat{\by}] =  W(\hat{\by}).
    \end{align*}
    Based on our generative process, the distribution $(\hat{\ba} \mid \hat{\by})$ is given by $\Uniform(G(\hat{\by}), G(\hat{\by}+1))$. From the proof of \Cref{lem:compute-w}, we have that the expectation of $\hat{\bx} = \ba^{m-1}$ conditioned on $\ba \sim \Uniform(G(\hat{\by}), G(\hat{\by}+1))$ is exactly given by $W(\hat{\by})$, as desired.
\end{proof}

For computational reasons, we don't directly use \Cref{eq:def-W}. Instead, we use a ``capped" version of $W$. For some $y_{\max} \geq 0$
\begin{align*}
    W^{\mathrm{capped}}(y) \coloneqq W(\min(y, y_{\max})) - 10^{-5}.
\end{align*}
For concreteness, $y_{\max} = 30$ is sufficient to compute our competitive ratio. Using $W^{\mathrm{capped}}(\by)$ instead of $W$ in \Cref{lem:suff-small-1} still works as it can only lead to a smaller $\Ex_{\by \sim \D^{(1)}} \left[\left(W^{\mathrm{capped}}(\by)- t \right)_+ \right]$. We have a $y_{\max}$ to make the computation tractable. The $10^{-5}$ term is included because, without it, rounding errors make it difficult to determine if \Cref{eq:w-not-too-large} holds. Note that we also use $W^{\mathrm{capped}}$ in place of $W$ when computing the competitive ratio, so with more accurate computation (and less aggressive capping), we could slightly increase the competitive ratio. We do not believe this would change any of the first $4$ digits of accuracy which we report.

To summarize, our distribution $\W$ is defined as follows: First draw $(\by_1, \by_2, \ldots) \sim \D$, and then output $(\bx_1, \bx_2, \ldots)$ where $\bx_j = W^{\mathrm{capped}}(\by_j)$. Since this $W^{\mathrm{capped}}$ is increasing, the output of $\W$ is negatively associated. Furthermore, it is shift-invariant just as $\D$ is. 

We have verified that $\W$ satisfies the first two criteria of \Cref{thm:meta-algorithm} and the third for $r = 1$. Lastly, we verify the third for $r = 2, \ldots, m-1$ via a mix of straightforward algebra and computation. The first $r$ elements of $\mathcal{W}$ take on only finitely many values (though exponential in $r$). We fully compute a representation of this distribution for each $r = 2, \ldots, m - 1$. Given this representation, we are able to exactly compute the distribution for $\bw^{(r)}$. Verifying \Cref{eq:w-not-too-large} amounts to verifying that $g(t) \geq 0$ for all $t \in [0,1]$ where
\begin{align*}
     g(t) \coloneqq \left(1 - t  - \frac{m-r}{m} \cdot \left (1 - t^{\frac{m}{m-r}} \right) \right)- \Ex_{\bx \sim \W} \left[\left(\bw^{(r)} - t \right)_+ \right].
\end{align*}
Let $S$ be the (finitely sized) support of $\bw^{(r)}$. Then, the derivative of $g$ is easy to compute.
\begin{align*}
    g'(t) = \begin{cases}
    \Pr_{\bx \sim \W} \left[\bw^{(r)} \geq t  \right] - 1 + t^{\lfrac{r}{m-r}}& \text{if $t \notin S$}\\
    \text{undefined} & \text{otherwise}
    \end{cases}
\end{align*}
We compute all the extreme points of $g$ on the interval $[0,1]$. If $g(t) < 0$ for some $t$ on that interval, it will also be negative at some extreme point. The extreme points can only occur at the border ($t = 0$ or $t = 1$), where the derivative is undefined (on $t \in S$), or at a point where $g'(t) = 0$. Let $V$ be the set of distinct values for $\Pr_{\bx \sim \W} \left[\bw^{(r)} \geq t  \right]$, of which there are only $|S| + 1$ many. Then, if $g'(t) = 0$, it implies that, for some $v \in V$
\begin{align*}
    t = (1 - v)^{\frac{m-r}{r}}.
\end{align*}
Therefore, the total number of points we need to test for $g(t) < 0$ is finite (it has size $2 \cdot |S| + 3$). We verify that $g(t) \geq 0$ for each of those points in a publicly available Colab notebook\footnote{To see our code, go to \url{https://colab.research.google.com/drive/1yQErphKVkwwPPXsUWGT2b-nIPPaLBtnh?usp=sharing}} implying that \Cref{eq:w-not-too-large} holds.

As demonstrated in the Colab notebook, for our choice of $\W$, \Cref{eq:w-not-too-large} doesn't hold for every choice of the hyperparameter $p$. We found that for $m = 6$, it holds for all $p \in \{0.05, 0.06, \ldots, 0.95\}$, but not for $p = \{0.01, \ldots, 0.04, 0.96,\ldots, 0.99\}$. In particular, it holds for $p = 0.48$, which is the setting that maximized the resulting competitive ratio.

As a result of the above analysis and \Cref{thm:meta-algorithm}, we have constructed an $(F,m)$-discrete OCS for the $F$ defined in \Cref{eq:def f}, restated here for convenience.
\begin{align*}
    F(n) \coloneqq \Ex_{\bx \sim \W} \left[\prod_{\ell = 1}^n(1 - \bx_\ell) \right]
\end{align*}
Once again, we use computation (also in the Jupyter notebook) to determine $F(n)$. Up to $n_{\max} = 10$, we store a full representation of the first $n_{\max}$ elements from $\W$ and then use it to compute $F(n)$ for $n \leq n_{\max}$. For $n > n_{\max}$, we can upper bound $F(n) \leq F(n_{\max}) \cdot (\lfrac{F(n_{\max}) }{F(n_{\max} - 1)})^{n - n_{\max}}$ as a result of the following Lemma.

\begin{lemma}
    \label{lem:ratio-decreasing}
    For the function $F$ defined in \Cref{eq:def f}, the quantity $\frac{F(n+1)}{F(n)}$ is decreasing in $n$.
\end{lemma}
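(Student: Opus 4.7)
The plan is to reduce the lemma to showing that a natural family of ``tilted'' distributions of $\bz_{n+1}$ is stochastically monotone in $n$. For each $n\ge 0$, let $\rho_n$ denote the distribution of $\bz_{n+1}$ under the probability measure on $\by\sim \D$ re-weighted by $\prod_{\ell=1}^n(1-\bx_\ell)/F(n)$; by \Cref{prop:z-geo}, $\rho_0 = \Geo(p)$. Using the memoryless-except-for-$\bz$ property (\Cref{fact:memoryless}), I would compute $\Ex[1-\bx_{n+1}\mid \bz_{n+1}=v] = p(1-W(v))+(1-p)(1-W(-1))$, and the tower rule then gives
\begin{align*}
\frac{F(n+1)}{F(n)} \;=\; p\,\alpha_n + (1-p)(1-W(-1)), \qquad \alpha_n := \sum_{v\ge 0}\rho_n(v)(1-W(v)).
\end{align*}
Because $W$ is increasing, $1-W$ is decreasing, so proving the ratio is decreasing in $n$ reduces to showing $\rho_n$ is stochastically increasing in $n$.

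Next, I would derive an explicit update rule $\rho_{n+1} = T(\rho_n)$ by expanding the re-weighting by one more factor and conditioning on $\bz_{n+1}$:
\begin{align*}
T(\rho)(0) \;=\; \frac{p\,\alpha_\rho}{Z_\rho}, \qquad T(\rho)(v) \;=\; \frac{(1-p)(1-W(-1))\,\rho(v-1)}{Z_\rho}\quad\text{for }v\ge 1,
\end{align*}
where $Z_\rho := p\,\alpha_\rho + (1-p)(1-W(-1))$ and $\alpha_\rho := \sum_{v\ge 0}\rho(v)(1-W(v))$. Notice that $Z_{\rho_n}$ is precisely $F(n+1)/F(n)$.

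I would then show $T$ preserves the stochastic order. If $\rho\succeq\rho'$, then $\alpha_\rho \le \alpha_{\rho'}$ (since $1-W$ is decreasing), so $Z_\rho \le Z_{\rho'}$; and for each $t\ge 1$,
\begin{align*}
\sum_{v\ge t} T(\rho)(v) \;=\; \frac{(1-p)(1-W(-1))\sum_{v\ge t-1}\rho(v)}{Z_\rho} \;\ge\; \frac{(1-p)(1-W(-1))\sum_{v\ge t-1}\rho'(v)}{Z_{\rho'}} \;=\; \sum_{v\ge t}T(\rho')(v),
\end{align*}
so $T(\rho)\succeq T(\rho')$.

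Finally, the base case $\rho_1 \succeq \rho_0$ follows from a direct computation: with $\bar W := \Ex_{\Geo(p)}[W]\ge W(-1)$, we have $F(1) = p(1-\bar W)+(1-p)(1-W(-1)) \le 1-W(-1)$, and the update rule gives $\sum_{v\ge t}\rho_1(v)/\sum_{v\ge t}\rho_0(v) = (1-W(-1))/F(1) \ge 1$ for every $t\ge 1$. Induction and the monotonicity of $T$ then yield $\rho_{n+1}\succeq \rho_n$ for all $n\ge 0$, so $\alpha_n$ is decreasing, and hence so is $F(n+1)/F(n)$. The main obstacle is the careful derivation of the update rule $T$ (in particular the identification of $F(n+1)/F(n)$ with $Z_{\rho_n}$); once that is in place, everything reduces to elementary bookkeeping with the Markov chain on $\bz$.
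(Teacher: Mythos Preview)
Your proof is correct and rests on the same core mechanism as the paper's: the Markov structure of the $\bz$-chain, together with stochastic monotonicity of the tilted (``conditioned on no wins so far'') distribution of $\bz_{n+1}$. The packaging, however, is different.

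The paper introduces Bernoulli indicators $\ba_k \sim \Ber(W(\by_k))$, so that $F(n) = \Pr[\ba_1=\cdots=\ba_n=0]$, and compares $\Pr[\ba_{n+1}=0 \mid \ba_1,\ldots,\ba_n=0]$ with $\Pr[\ba_{n+1}=0 \mid \ba_2,\ldots,\ba_n=0]$ using shift-invariance. Its key technical step is to show that the \emph{pointwise} conditional kernel $z \mapsto (\bz_2 \mid \bz_1=z,\ \ba_1=0)$ is stochastically increasing in $z$, and then to iterate this kernel; the comparison between the two conditionings is driven by arguing that the extra conditioning on $\ba_1=0$ stochastically increases the starting state $\bz_2$.

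You work instead at the level of distributions: you write down the one-step map $T$ on laws of $\bz$ and check directly, via tail sums, that $T$ preserves stochastic order, then verify the base case $\rho_1 \succeq \rho_0$. This is formally a weaker intermediate claim than pointwise monotonicity of the kernel, but it is exactly what is needed and it bypasses the composition argument entirely. Your base-case computation (namely $F(1) \le 1-W(-1)$, since $\bar W \ge W(-1)$ by monotonicity of $W$) is also a clean way to handle the step where the paper compares the law of $\bz_2$ with and without conditioning on $\ba_1=0$. Both routes are short; yours is marginally more direct.
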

Our proof of \Cref{lem:ratio-decreasing} holds for any increasing function $W$ (so will also work for $W^{\mathrm{capped}}$), but is specialized to our choice for $\D$ from \Cref{def:D}.
\begin{proof}
    Draw $\by, \bz \sim \D(y ,z)$ and then set $\ba_k = \Ber(W(\by_k))$. We can think of $\ba_k$ as indicating whether the vertex $i$ wins the $k^{\mathrm{th}}$ time it appears. Then the following equation for $F$ is equivalent to \Cref{eq:def f}.
    \begin{align*}
        F(n) = \Prx\left[ \ba_1  =\cdots = \ba_n = 0 \right]
    \end{align*}
    Therefore,
    \begin{align*}
        \frac{F(n+1)}{F(n)} = \Prx\left[\ba_{n+1} = 0 \mid \ba_1  =\cdots = \ba_n = 0 \right].
    \end{align*}
    We want the above quantity to be less than or equal to
    \begin{align*}
        \frac{F(n)}{F(n-1)} = \Prx\left[\ba_{n+1} = 0 \mid \ba_2  =\cdots = \ba_n = 0 \right].
    \end{align*}
    using $E$ to indicate the event [$\ba_j = 0$ for each $j \in \{2, \ldots, n\}$], proving that $\frac{F(n+1)}{F(n)} \leq \frac{F(n)}{F(n-1)}$ is equivalent to
    \begin{align}
        \label{eq:expectation-less}
        \Pr[\ba_{n+1} = 0 \mid E, \ba_1 = 0] \leq \Pr[\ba_{n+1} = 0 \mid E].
    \end{align}
    For each $k = 1 ,\ldots, n$, we define the stochastic function $\bh_k:\R_{\geq 0} \to \R_{\geq 0}$,
    \begin{align*}
        \bh_k(z) = (\bz_{k+1} \mid \bz_1 = z,  \ba_1  =\cdots = \ba_{k} = 0 ).
    \end{align*}
    We claim that $\bh_k$ is a stochastically increasing function for all $k = 1, \ldots, n$. For $k = 1$ this amounts to proving following map is stochastically increasing
    \begin{align*}
        z \mapsto (\bz_2 \mid \bz_1 = z, a_1 = 0).
    \end{align*}
    Recall that by \Cref{def:D}, $\bz_2$ and $\bz_1$ are related as follows.
    \begin{enumerate}
        \item With probability equal to $p$ (the hyperparameter in \Cref{def:D}), $\bz_2 = 0$. If this happens, then $\by_1 = \bz_1$.
        \item With probability $1 - p$, $\bz_2 = \bz_1 + 1$ and then $\by_1 = -1$
    \end{enumerate} 
    If the probability that the first branch happens conditioned on $(\bz_1 = z, a_1 = 0)$ is decreasing in $z$ then $\bh_1$ is stochastically increasing. We compute that probability using Bayes theorem.
    \begin{align*}
        \Pr[\by_1 = z \mid a_1 = 0, \bz_1 = z] &= \frac{\Pr[a_1 = 0 \mid \by_1 = z , z_1 = z] \cdot \Pr[\by_1 = z  \mid \bz_1 = z]}{\Pr[a_1 = 0 \mid \bz_1 = z]} \\
        &= \frac{(1 - W(z)) }{ p\cdot(1 - W(z)) + (1 - p) \cdot (1 - W(-1))} \cdot \Pr[\by_1 = z  \mid \bz_1 = z]
    \end{align*}
    The quantity $\Pr[\by_1 = z \mid \bz_1 = z]$ is a constant (it is just equal to $p$). Since $W$ is increasing, the above is decreasing in $z$. Therefore, $\bh_1$ is stochastically decreasing.
    
    For $k \geq 2$, by the inductive hypothesis, we may that assume $\bh_{k-1}$ is stochastically increasing. Then,
    \begin{align*}
        \bh_k(z) &= (\bz_{k+1} \mid \bz_1 = z,  \ba_1  =\cdots = \ba_{k} = 0 ) \\
        &= (\bz_{k+1} \mid \bz_2 = \bz', \ba_2  =\cdots = \ba_{k} = 0) \quad \quad \text{where } \bz' \sim(\bz_2 \mid \bz_1 = z, a_1 = 0) \tag{\Cref{fact:memoryless}} \\
        &= (\bh_{k - 1} \circ \bh_1)(z) \tag{$\D$ is shift invariant}
    \end{align*}
    and since both $\bh_{k-1}$ and $\bh_1$ are stochastically increasing, so is $\bh_k$. We are now ready to prove \Cref{eq:expectation-less}. If we condition on $\ba_1 = 0$, then the distribution of $\bz_2$ is that of a $\Geo(p) + 1$. If we do not condition on $\ba_1$, that distribution is just $\Geo(p)$. Since $\bh_{n}$ is stochastically increasing and $\Geo(p) + 1$ stochastically dominates $\Geo(p)$, the distribution of $\bz_{n+1}$ conditioned on $\ba = 0$ and $E$ stochastically dominates the distribution of $\bz_{n+1}$ just conditioning on $E$. The quantity $\by_{n+1}$ is a stochastically increasing function of $\bz_{n+1}$ and the quantity $\ba_{n+1}$ is a stochastically increasing function of $\by_{n+1}$. Therefore, the probability $\ba_{n+1} = 0$ conditioned on $E$ and $\ba_1 = 0$ is lower than just conditioning on $E$. \Cref{eq:expectation-less} holds and therefore the desired result does as well.
\end{proof}

As a consequence of the above, we have proven the existence of an $(F,6)$-OCS for the $F$ in \Cref{fig:OCS-parameters}.

\section{Proof of \Cref{lem:negative}}
\label{sec:neg}
\negative*

\begin{proof}
    We prove that there is no $2$-discrete OCS satisfying both of the following.
    \begin{enumerate}
        \item The probability a vertex is picked in a single time step it appears is $\frac{1}{2}$
        \item Let $j_1, j_2$ be two time steps vertex $i$ appears in for which $i$ never appears in a time step between $j_1$ and $j_2$. Then, the probability $i$ is selected in $j_1$ or $j_2$ is strictly greater than $\frac{5}{6}$.
    \end{enumerate}
    This implies \Cref{lem:negative}. Suppose, for the sake of contradiction, such an OCS exists. Feed in the following three pairs into the OCS. $((a,b), (b,c), (a,b))$. In the first round, the OCS must pick $a$ and $b$ each with probability $\frac{1}{2}$. 
    
    Then, in order to guarantee that $b$ is picked in one of the first two time steps with probability more than $\frac{5}{6}$, if $a$ is picked in the first round, $b$ must be picked in the second round with probability $p_2 > \lfrac{2}{3}$.
    
    We also need the probability that $c$ is picked in the second round to be $\frac{1}{2}$. To guarantee this, in the case where $b$ is picked in the first round, we need $c$ to be picked with probability $p_2$. Hence, there is a $\frac{1}{2} \cdot p_2$ probability that $b$ is picked in the first round and $c$ in the second. Let $p_3$ be the probability that $a$ is picked in this case. If $p_3 \geq \lfrac{1}{2}$, then there is a
    \begin{align*}
        \frac{1}{2} \cdot p_2 \cdot p_3 > \frac{1}{6}
    \end{align*}
    chance that $b$ is not selected in the second or third round. On the other hand, if $p_3 \leq \lfrac{1}{2}$, then there is a
    \begin{align*}
        \frac{1}{2} \cdot p_2 \cdot (1 - p_3) > \frac{1}{6}
    \end{align*}
    chance that $a$ is not selected in the first or third round. In both cases, we have a contradiction.
\end{proof}

\bibliography{OCS}{}
\bibliographystyle{alpha}

\appendix

\section{Missing proofs from \Cref{sec:OCS-motivation}}
\label{sec:proofs-OCS-motivation}

\subsection{Proof that independent randomness is not enough}

We prove the that algorithms which make independent selections at each time step cannot achieve a competitive ratio better than that of the greedy algorithm.
\factindependent*

\begin{proof}
    For each $n \in \Z_{\geq 1}$, we'll show that no algorithm using independent randomness can acheive a competitive ratio better than $\frac{1}{2} + \frac{1}{2n}$ on every graph with $n$ offline vertices. Fix some $n \in \Z_{\geq 1}$ and let $\mathcal{G}$ be the family of triangular graphs with vertices $L = R = [n]$ defined as follows: For each permutation $\pi$ of $[n]$, there is one graph $G \in \mathcal{G}$. The first online vertex has an edge to every offline vertex. Then, for each online vertex where $j \geq 2$ has an edge to every offline vertex that $j - 1 \in R$ has an edge to \emph{except for} $\pi_{j-1}$. 
    
    It's not too hard to see that each $G \in \mathcal{G}$ has a perfect matching of all $n$ edges formed by matching $j$ to $\pi_j$. We'll prove that for any algorithm using independent randomness, there is some $G \in \mathcal{G}$ on which that algorithm has an expected number of matches at most $\frac{n+1}{2}$.
    
    Let $\mathcal{A}$ be some algorithm that uses independent randomness. At time step $j$, as a deterministic function of the portion of the graph revealed so far, the algorithm decides on a distribution over $L$ on which to match the $j^{\text{th}}$ vertex to. Then, it independently picks a match for $j$ independent of all previous matches. In particular, if it picks a vertex in $L$ that has been picked by some $j' < j$, this step is wasted. We'll use $\pj$ to represent the distribution over $L$ that $\mathcal{A}$ chooses for $j \in R$, where $(\pj)_i$ is the probability that $j$ is matched to $i$.
    
    We will adversarial choose a $G \in \mathcal{G}$ depending on the algorithm $\mathcal{A}$. Once the $j^{\text{th}}$ vertex arrives, the edges of each $j' \in R$ for $j' \leq j$ are known. This corresponds to knowing $\pi_1, \ldots, \pi_{j - 1}$. Then, $\mathcal{A}$ must choose $\pj$, and we are free to adversarially choose $\pi_j$ as it only affects the portion of the graph not yet revealed. Recall that whichever vertex we select as $\pi_j$ will be unable to be matched in the future, as it will not have edges to any $j' \in R$ where $j' > j$. We'll set $\pi_j$ to the offline vertex not yet removed that is \emph{least likely} to be already matched. Formally,
    \begin{align}
        \label{eq:select-pi}
        \pi_j = \argmin_{i \in L, i \neq \pi_{j'} \text{ for any $j' < j$}} \left(1 - \prod_{j' \leq j}(1 -(p^{(j')})_i)  \right).
    \end{align}
    
    Next, we show that for any $\mathcal{A}$ using independent randomness, that on this adversarial choice of $G \in \mathcal{G}$, $\mathcal{A}$ matches at most $\frac{n+1}{2}$ edges in expectation. For each $k \in [n]$, we define $R_k$ to be the sum over the last $k$ offline nodes to be removed (the nodes $\pi_{n - k + 1}, \ldots, \pi_{n}$), that each is \emph{not} matched to one of the first $n - k$ online nodes. Formally,
    \begin{align*}
        R_k \coloneqq  \sum_{i \in \{\pi_{n - k + 1}, \ldots, \pi_{n}\}}\left[\text{$i$ is not matched to $j$} \in \{\pi_1, \ldots, \pi_{n - k}\} \right]
    \end{align*}
    We claim, and will prove by induction, that the expected number of matches made to the last $k$ offline nodes (the nodes $\pi_{n - k + 1}, \ldots, \pi_{n}$) is at most $k  - R_k \cdot \frac{k-1}{2k}$. As $R_n = n$, this immediately implies the expected number of matches in the entire graph is at most $\frac{n+1}{2}$, the desired result.
    
    For our base case, when $k = 1$, we trivially have the expected number of matches of the last vertex is at most $1$. For the inductive case, fix some $k \geq 2$.
    We consider the state after $n - k$ rounds have occurred. For each $\ell \in [k]$, let $r_\ell$ be the probability that $\pi_{n - \ell+ 1}$ is not matched in any previous round (to $j \in [n - k]$) and
    $q_\ell \coloneqq 1 - (\pj)_{\pi_{n - \ell + 1}}$ be the probability it is not matched in the next round (to $j = n - k + 1$), given that it was not matched in previous rounds. As adversarially chosen in \Cref{eq:select-pi}, $\pi_{n - k + 1}$ will be set to whichever node maximizes $r_\ell \cdot q_\ell$. Let $\ell^\star$ be the $\ell$ maximizing $r_\ell \cdot q_\ell$. Then, $R_{k - 1}  = \sum_{\ell \neq \ell^\star} r_\ell \cdot q_\ell$. The expected number of matches to the last $k$ offline vertices is at most
    \begin{align*}
        f(r,q) \coloneqq \underbrace{1 - r_{\ell^\star}q_{\ell^\star}}_{\text{\clap{$\Pr[\pi_{n - k + 1}\text{ matched}]$}~}} \quad + \quad\underbrace{k - 1 - \frac{k-2}{2(k-1)} \cdot \sum_{\ell \neq \ell^\star}r_{\ell} \cdot q_{\ell}}_{\text{\clap{$\Ex[\text{matches to $\pi_{n - k + 2}, \ldots, \pi_n$}]$}}} \quad\quad \text{where }\ell^\star = \text{arg}\max_{\ell \in [k]} r_\ell \cdot q_\ell.
    \end{align*}
    We used the inductive hypothesis to bound $\Ex[\text{matches to $\pi_{n - k + 2}, \ldots, \pi_n$}]$. Furthermore, $r_\ell$ and $q_\ell$ obey some bounds
    \begin{align}
        \label{eq:r-q-constraints}
        \begin{aligned}
        \sum_{\ell \in [k]} r_\ell &= R_k &&\text{(Definition of $R_k$)}\\
        \sum_{\ell \in [k]} q_{\ell} &= \sum_{\ell \in [k]}1 - (\pj)_{\pi_{n - \ell + 1}} = k-1 \\
        r_\ell, q_\ell &\geq 0 &&\text{(for all $\ell \in [k]$)}
        \end{aligned}
    \end{align}
    The constraints of \Cref{eq:r-q-constraints} cause $r_1, \ldots, r_k$ and $q_1, \ldots, q_k$ to lie in a closed and bounded space, and $f$ is continuous. Therefore, $f$ attains its maximum value in this space. We may assume that $R_k > 0$, as if $R_k = 0$, $f$ is $k-1$ everywhere satisfying the above constraints. Similarly, as we only care about cases when $k \geq 2$, the sum of $q_\ell$ is also strictly more than $0$. We claim the maximum of $f$ subject to constraints of \Cref{eq:r-q-constraints} occurs when
    \begin{align}
        \label{eq:r-q-max}
        r_1 = \cdots = r_k = \frac{R_k}{k} \quad \text{and} \quad q_1 = \cdots = q_k = \frac{k - 1}{k}.
    \end{align}
    Consider any other point, ($r_1, \ldots, r_k, q_1, \ldots, q_k$), satisfying the constraints of  \Cref{eq:r-q-constraints}. Using the notation $\frac{\partial f}{\partial x_+}$ to the refer to the \emph{right} derivative, how much $f$ changes as $x$ is increased by an $\eps$ amount, and $\frac{\partial f}{\partial x_-}$ for the left derivatives,
    \begin{enumerate}
        \item If $r_\ell \cdot q_\ell < r_{\ell'}\cdot q_{\ell'}$ for all $\ell' \neq \ell$, then,
        \begin{align*}
            &\frac{\partial f}{\partial r_\ell} = - \frac{k-2}{2(k-1)}\cdot q_i & \frac{\partial f}{\partial q_\ell} = - \frac{k-2}{2(k-1)} \cdot r_i.
        \end{align*}
        We note that the term $\frac{k-2}{2(k-1)}$ is in $[0,\frac{1}{2})$ for $k \geq 2$. In particular, it is never more than $1$, and strictly more than $0$ whenever $k \geq 3$. When $k = 2$, it is easy to show the maximum of $f$ subject to the constraints of \Cref{eq:r-q-constraints} occurs at the point in \Cref{eq:r-q-max}. For the remainder of this proof, we consider the $k \geq 3$ case where $0 <  \frac{k-2}{2(k-1)} < \frac{1}{2}$.
        \item If $r_\ell \cdot q_\ell > r_{\ell'} \cdot q_{\ell'}$ for all $\ell' \neq \ell$, then,
        \begin{align*}
            &\frac{\partial f}{\partial r_\ell} = -  q_i & \frac{\partial f}{\partial q_\ell} = -  r_i.
        \end{align*}
        \item If  $r_\ell \cdot q_\ell \geq r_{\ell'} \cdot q_{\ell'}$ for all $\ell' \neq \ell$, and there is at least one $\ell' \neq \ell$ where $r_\ell \cdot q_\ell = r_{\ell'} \cdot q_{\ell'}$, then
        \begin{align*}
            &\frac{\partial f}{\partial (r_\ell)_+} =  -  q_i & \frac{\partial f}{\partial (q_\ell)_+} = -  r_i.\\
            &\frac{\partial f}{\partial (r_\ell)_-} = - \frac{k-2}{2(k-1)}\cdot q_i & \frac{\partial f}{\partial (q_\ell)_-} = - \frac{k-2}{2(k-1)}\cdot r_i.
        \end{align*}
    \end{enumerate}
    Consider some point ($r_1, \ldots, r_k, q_1, \ldots, q_k$) satisfying the constraints of \Cref{eq:r-q-constraints} that is not the point in \Cref{eq:r-q-max}. Then, there exists some $a, b$ where $r_a \cdot q_a \geq r_{\ell} \cdot q_{\ell}$ for each $\ell \neq a$, and either $r_a \neq r_b$ or $q_a \neq q_b$. Then, either $r_a > r_b$ or $q_a > q_b$. Without loss of generality, we assume $r_a > r_b$. Then, for sufficiently small $\eps$, if we decrease $q_a \leftarrow q_a - \eps$ and increase $q_b \leftarrow a_b + \eps$, the value of $f$ will increase and the constraints remain satisfied. Hence, the maximum value of $f$ is
    \begin{align*}
        \max_{r,q}f(r,q) &= 1 - \frac{R_k}{k}\frac{k - 1}{k} + k - 1 - \frac{k-2}{2(k-1)} \cdot (k-1) \cdot \frac{R_k}{k} \cdot \frac{k - 1}{k} \\
        &= k - R_k \cdot \frac{k-1}{2k}.
    \end{align*}
    Therefore, by induction, the maximum expected matches made to the last $k$ offline nodes is at most $k - R_k \cdot \frac{k-1}{2k}$. Plugging in $k = n$ gives the desired result.
\end{proof}

\subsection{Proof that it is impossible to negatively correlate all time steps}

We prove that there does not exists a continuous OCS that negatively correlated every single time step. This is why we choose to have our OCS only negatively correlate time steps that are close.
\begin{lemma}
    \label{lem:no-neg-always}
    Let $f:\Rpos \to [0,1]$ be a function on which there exists some rational $x \in \Q$ where $f(x) < e^{-x}$. Then, there does not exist a continuous OCS with the following guarantee: For any element $i \in L$ and set of time steps $S$,
    \begin{align*}
        \Pr[\bi_j \neq i \text{ for every }j \in S] \leq  f\left(\sum_{j \in S} (\pj)_i \right).
    \end{align*}
\end{lemma}
Note that $f(x) = e^{-x}$ corresponds to an OCS which picks an winner independently at each time step, and negative correlation corresponds to $f(x) < e^{-x}$.
\begin{proof}
    Let $x = \frac{a}{b}$ for $a,b \in Z$. For $k,n \in Z_{\geq 1}$, consider the input instance where $|L| = b \cdot k$ and each of $n$ time steps have $\pj$ uniform over $L$, meaning $(\pj)_i = \frac{1}{b\cdot k}$ for all $i \in L, j \in [n]$. We will show that in the limit of $\frac{n}{k} \to \infty$ and $k \to \infty$, the desired guarantee is impossible.
    
    Assume, for the sake of guarantee, the desired result is possible. We will show that this implies that the expected number of total winners is more than $n$, a contradiction as each time step has only a single winner. Fix an arbitrary $i \in V$, and let $\bz_j$ be the event that $i$ is \emph{not} the winner in the $j^{\text{th}}$ round. Then, by Jensen's inequality,
    \begin{align*}
        \Ex\left[\sum_{j \in [n]} \bz_j\right] \leq \sqrt[ak]{\Ex\left[\left(\sum_{j \in [n]} \bz_j\right)^{ak}\right]}.
    \end{align*}
    Expanding $(\sum_{j \in [n]} \bz_j)^{ak}$, there are $\frac{n!}{(n-ak)!}$ terms which are the product of $ak$ distinct $\bz_{j_1}, \ldots, \bz_{j_{ak}}$, and $n^{ak} - \frac{n!}{(n-ak)!}$ which are the product of non-distinct events. Whenever $\bz_{j_1}, \ldots, \bz_{j_k}$ are non-distinct, we use the lose upper bound that that $E[\bz_{j_1} \cdot \cdots \cdot \bz_{j_{ak}}] \leq 1$. Whenever they are distinct, by the assumption on the quality of the OCS, it must be the case that $E[\bz_{j_1} \cdot \cdots \cdot \bz_{j_{ak}}] \leq f(x)$. Therefore,
    \begin{align*}
        \Ex\left[\sum_{j \in [n]} \bz_j\right] &\leq \sqrt[ak]{\frac{n!}{(n-ak)!} \cdot f(x) + n^{ak} - \frac{n!}{(n-ak)!}} \\
        &\leq\sqrt[ak]{\frac{n!}{(n-ak)!}} \cdot \sqrt[ak]{f(x)} + \sqrt[ak]{n^{ak} - \frac{n!}{(n-ak)!}}
    \end{align*}
    In the limit of $\frac{n}{k} \to \infty$,
    \begin{align*}
         \Ex\left[\sum_{j \in [n]} \bz_j\right] &\leq n \cdot (1 - o(1) \cdot \sqrt[ak]{f(x)} \\
         &< n \cdot(1 - o(1)) \cdot  \sqrt[ak]{e^{-\frac{ak}{bk}}} \\
         &= n \cdot(1 - o(1)) \cdot e^{-\frac{1}{bk}}
    \end{align*}
    Note that the expected number of times that $i$ is the winner is $n - \Ex\left[\sum_{j \in [n]} \bz_j\right]$. In the limit of $k \to \infty$, this value is strictly more than $n \cdot \frac{1}{bk}$. By symmetry, this is true of all $bk$ different vertices in $L$, implying, the total number of winner is more than $n$, a contradiction.
\end{proof}
\section{Missing proofs from \Cref{sec:continuous-OCS-formal}}
\label{sec:prove-a-properties}
\aproperties*
\begin{proof}[Proof of \Cref{eq:a diffeq}]
    Using integration by parts,
    \begin{align*}
        a'(x) &= f'(x) - \int_{0}^\infty e^{-t}f'(t + x)dt \\
        a'(x) &= f'(x) - \left(\bigg[e^{-t} f(t+x)\bigg]_{t = 0}^{t = \infty} + \int_{0}^\infty e^{-t}f(t + x)dt  \right) \tag{integration by parts} \\
        a'(x) &= f'(x) - \left(-f(x) + \int_{0}^\infty e^{-t}f(t + x)dt  \right) \\
        &= f'(x) + a(x).
    \end{align*}
\end{proof}

\begin{proof}[Proof of \Cref{eq:a-decr}]
    Using the convexity of $f$,
    \begin{align*}
        a'(x) &=  f'(x) - \int_{0}^\infty e^{-t}f'(t + x)  \\
        &\leq f'(x) - \int_{0}^\infty e^{-t}f'( x)  \tag{$f$ is convex}\\
        &=0
    \end{align*}

\end{proof}

\begin{proof}[Proof of \Cref{eq:a-gamma}]
    This is immediate from the fact that $f(0) = 1$ and the definitions of $a$ in \Cref{eq:def-a} and of $\Gamma$ in \Cref{eq:def-Gamma}.
\end{proof}
We'll use the following proposition.
\begin{proposition}
    \label{prop:log-concave-integral}
    If $f:\Rpos \to \Rpos$ is log-concave, then
    \begin{align*}
        \int_{0}^\infty \frac{e^{-t}f(t+x)}{f(x)}dt
    \end{align*}
    is decreasing in $x$.
\end{proposition}
\begin{proof}
   It is enough for $\frac{f(t+x)}{f(x)}$ to be decreasing in $x$ for any fixed $t \geq 0$. The derivative of this expression is
   \begin{align*}
       \frac{d}{dx}\left(\frac{f(t+x)}{f(x)} \right) = \frac{f(x)f'(t+x) - f(t+x)f'(x)}{f(x)^2}.
   \end{align*}
   Hence, the desired expression is decreasing when
   \begin{align*}
       \frac{f'(x)}{f(x)} \geq \frac{f'(t+x)}{f(t+x)}.
   \end{align*}
   $f$ is log-concave iff $\frac{f'(x)}{f(x)}$ is decreasing in $x$. Therefore, the desired result follows by the fact that $t \geq 0$ and the log-concavity of $f$.
   
\end{proof}

\begin{proof}[Proof of \Cref{eq: f lower a}]
   If $f(x) = 0$, the desired result holds. We wish to show that, for all $x \geq 0$ where $f(x) > 0$,
   \begin{align*}
       \frac{f(x) \cdot \Gamma}{a(x)} \leq 1.
   \end{align*}
   As $f(0) = 1$ and $a(0) = \Gamma$, the above expression is $1$ when $x = 0$. Therefore, it is enough to show that expression is decreasing. That's equivalent to showing that $\frac{f(x)}{a(x)}$ is decreasing, which, since we only need to worry about the case where $f(x) > 0$, is equivalent to showing that $\frac{a(x)}{f(x)}$ is increasing.
   \begin{align*}
       \frac{a(x)}{f(x)} &= \frac{f(x) - \int_{0}^\infty e^{-t} f(t + x) dt}{f(x)} \\
       &= 1 - \int_{0}^\infty \frac{e^{-t}f(t+x)}{f(x)}dt
   \end{align*}
   By \Cref{prop:log-concave-integral}, the above is increasing as a consequence of the log-concavity of $f$.
\end{proof}

In order to prove \Cref{eq: r upper bound}, we'll need the following proposition.

\begin{proposition}
    \label{prop:fractional-diffeq}
    For any continuous $f,g: \Rpos \to \Rpos$ where $g > 0$ satisfying
    \begin{align*}
        \frac{f(x)}{g(x)} \quad\quad \text{ is decreasing for all $x \geq 0$}.
    \end{align*}
    Let $F(x) \coloneqq \int_0^x f(t)dt$ and $G(x) \coloneqq \int_0^x g(t)dt$. Then, the following is also decreasing for all $x \geq 0$,
    \begin{align*}
        \frac{F(x)}{G(x)}.
    \end{align*}
\end{proposition}
\begin{proof}
   We compute the derivative of the desired expression:
   \begin{align*}
       \frac{d}{dx}\frac{F(x)}{G(x)} = \frac{G(x)f(x) - F(x)g(x)}{G(x)^2}
   \end{align*}
   With a bit of rearranging, we have that the following are equivalent for all $x > 0$.
   \begin{align*}
       \frac{d}{dx}\frac{F(x)}{G(x)} < 0 \quad\quad\quad \iff \quad\quad\quad z(x) \coloneqq \frac{f(x)}{g(x)} - \frac{F(x)}{G(x)} < 0 \\
       \frac{d}{dx}\frac{F(x)}{G(x)} = 0 \quad\quad\quad \iff \quad\quad\quad z(x) \coloneqq \frac{f(x)}{g(x)} - \frac{F(x)}{G(x)} = 0 \\
        \frac{d}{dx}\frac{F(x)}{G(x)} > 0 \quad\quad\quad \iff \quad\quad\quad z(x) \coloneqq \frac{f(x)}{g(x)} - \frac{F(x)}{G(x)} > 0
   \end{align*}
   Also, by L' H\^{o}pital's rule, $\lim_{x \to 0} z(x) = 0$. By continuity of $z$, if $z(x) > 0$ for some $x > 0$, it must imply for some earlier $x^\star \in (0, x)$ that $z(x^\star) \geq 0$ and $z'(x^\star) > 0$. However, as $\frac{f(x)}{g(x)}$ is decreasing, if $z'(x^\star) > 0$, it must be the case that $\frac{F(x)}{G(x)}$ is decreasing at $x^\star$. This contradicts that $z(x^\star) \geq 0$. Therefore, $z(x) \leq 0$ for all $x > 0$ and the desired expression is decreasing.
\end{proof}

\begin{proof}[Proof of \Cref{eq: r upper bound}]
    Combining \Cref{eq:a-decr,eq:a-gamma}, we have that $a(x) \leq \Gamma$ for all $x \geq 0$. Therefore, if $r = 0$, we are done. Otherwise, we wish to show that, for all $x \geq 0$,
    \begin{align*}
        \frac{1}{r} \geq h(x) \coloneqq \frac{a(x) - f(x) \cdot \Gamma}{\Gamma - a(x)}
    \end{align*}
    We will prove that $h(x)$ is decreasing. Assuming that is true, the desired result follows because
    \begin{align*}
        h(x) \leq \lim_{x \to 0} h(x) &= \frac{a'(0) - f'(0) \cdot \Gamma}{-a'(0)} \tag{L' H\^{o}pital's rule}\\
        &= \frac{f'(0) + a(0) - f'(0) \cdot \Gamma}{-(f'(0) + a(0))} \tag{\Cref{eq:a diffeq}} \\
        &= \frac{\Gamma + (1 - \Gamma) \cdot f'(0) }{-f'(0) - \Gamma} \tag{\Cref{eq:a-gamma}} \\
        &\leq \frac{1}{r} \tag{\Cref{eq:r-upper-bound-thm}}
    \end{align*}
    Let us define $A(x) \coloneqq \frac{a(x)}{\Gamma}$. Then
    \begin{align*}
        h(x) &= \frac{A(x) - f(x)}{1 - A(x)} \\
        &= \frac{(1 - f(x)) - (1 - A(x))}{1 - A(x)} \\
        &= \frac{1 - f(x)}{1 - A(x)} - 1 \\
       &=\frac{\int_{0}^x f'(t)dt}{\int_0^x A'(t)dt} - 1  \tag{f(0) = A(0) = 1}
    \end{align*}
    By \Cref{prop:fractional-diffeq}, in order to show $h(x)$ is decreasing, it is enough to show that $\frac{f'(x)}{A'(x)}$ is decreasing. Recall that $A(x) \coloneqq \frac{a(x)}{\Gamma}$, so equivalently, we can show that $\frac{f'(x)}{a'(x)}$ is decreasing. Using the fact that $a'(x)$ and $f'(x)$ never change signs, we can instead show that $\frac{a'(x)}{f'(x)}$ is increasing.
    \begin{align*}
        \frac{a'(x)}{f'(x)} &= \frac{f'(x) - \int_{0}^\infty e^{-t}f'(t + x)dt}{f'(x)} \tag{\Cref{eq:def-a}} \\ 
        &=  1 - \int_{0}^\infty \frac{e^{-t}f'(t + x)}{f'(x)}dt
    \end{align*}
    By \Cref{prop:log-concave-integral} and the log concavity of $f'$, the above is increasing, completing the proof of \Cref{eq: r upper bound}.

\end{proof}

\end{document}